%
%
%

\documentclass[graybox,envcountsame,envcountsect]{svmult}


\usepackage{mathptmx}       
\usepackage{helvet}         
\usepackage{courier}        
\usepackage{type1cm}        
%
\usepackage{makeidx}         
\usepackage{graphicx}        
\usepackage{multicol}        
\usepackage[bottom]{footmisc}



\usepackage{amsfonts}
\usepackage{amsmath}
\usepackage{amssymb}
\usepackage{graphicx}

\usepackage{verbatim}
\usepackage{float}

\usepackage{enumerate}

\usepackage{xcolor}
\usepackage{subfigure}

\usepackage{latexsym,ifthen}

\usepackage{listings}
\lstset{language=Pascal}



\newcommand{\bsigk}{\boldsymbol{\Sigma}_{\mathbf k}^{\mathbf 0}}
\newcommand{\bsig}[1]{\boldsymbol{\Sigma}_{{\mathbf #1}}^{\mathbf
0}}
\newcommand{\bpi}[1]{\boldsymbol{\Pi}_{{\mathbf #1}}^{\mathbf
0}}
\newcommand{\bpik}{\boldsymbol{\Pi}_{\mathbf k}^{\mathbf 0}}

\newcommand{\loc}{\mathrm{loc}}

\DeclareMathOperator{\dimloc}{\dim_{\loc}}

\DeclareMathOperator{\Dimloc}{\Dim_{\loc}}

\newcommand{\B}{\mathrm{B}}
\newcommand{\miI}{\mathrm{I}}
\newcommand{\spec}{\mathrm{sp}}

\newcommand{\sdim}{\mathrm{sdim}}
\newcommand{\sinf}{\Sigma^{\infty}}





\newcommand{\range}{\mathrm{range}}
\newcommand{\myset}[2]{ \left\{ #1 \left| \, #2 \right. \right\} }

\newcommand{\ignore}[1]{}

\newcommand{\N}{\mathbb{N}}
\newcommand{\Z}{\mathbb{Z}}
\newcommand{\Q}{\mathbb{Q}}
\newcommand{\R}{\mathbb{R}}
\newcommand{\Rn}{\R^n}




\newcommand{\dimh}{\mathrm{dim}_\mathrm{H}}

\newcommand{\cdim}{\mathrm{cdim}}
\newcommand{\adim}{\mathrm{cdim}}

\renewcommand{\dim}{{\mathrm{dim}}}

\newcommand{\dimpack}{{\dim}_{\mathrm{P}}}

\newcommand{\mdim}{\mathrm{mdim}}
\newcommand{\Mdim}{\mathrm{Mdim}}

\newcommand{\Dim}{{\mathrm{Dim}}}
\newcommand{\cDim}{{\mathrm{cDim}}}
\newcommand{\aDim}{{\mathrm{cDim}}}

\newcommand{\str}{{\mathrm{str}}}




\newcommand{\strings}{\{0,1\}^*}

\newcommand{\DIM}{\mathrm{DIM}}




















\newcommand{\K}{{\mathrm{K}}}









\newcommand{\dom}{\mathrm{dom}}




\newcommand{\length}{{\mathrm{length}}}



\newenvironment{example*}[1]
{ {\noindent {\bf Example #1.}} } {  }
\newenvironment{claim*}[1]
{ {\noindent {\bf Claim #1.}} } {  }
\newenvironment{theorem*}
{ {\noindent {\bf Theorem.}} } {  }






\numberwithin{equation}{section}

\makeindex             


\begin{document}

\title*{Algorithmic Fractal Dimensions in Geometric Measure Theory}
\author{Jack H. Lutz and Elvira
Mayordomo}
\institute{Jack H. Lutz \at Department of Computer Science, Iowa
State University, Ames, IA 50011 USA, \email{lutz@cs.iastate.edu}
\and Elvira Mayordomo \at Departamento de Inform\'atica e
Ingenier\'ia de Sistemas, Instituto de Investigaci\'on en
Ingenier\'{\i}a de Arag\'on, Universidad de Zaragoza, 50018
Zaragoza, SPAIN \email{elvira@unizar.es}}
%
%
\maketitle

\abstract*{The development of algorithmic fractal dimensions in this
century has had many fruitful interactions with geometric measure
theory, especially fractal geometry in Euclidean spaces.  We survey
these developments, with emphasis on connections with computable
functions on the reals, recent uses of algorithmic dimensions in
proving new theorems in classical (non-algorithmic) fractal
geometry, and directions for future research.}

\abstract{The development of algorithmic fractal dimensions in this
century has had many fruitful interactions with geometric measure
theory, especially fractal geometry in Euclidean spaces.  We survey
these developments, with emphasis on connections with computable
functions on the reals, recent uses of algorithmic dimensions in
proving new theorems in classical (non-algorithmic) fractal
geometry, and directions for future research.}

\section{Introduction}\label{sec:1}
In early 2000, classical Hausdorff dimension \cite{Haus19}\ was
shown to admit a new characterization in terms of betting strategies
called martingales \cite{DCCcon}.  This characterization enabled the
development of various effective, i.e., algorithmic, versions of
Hausdorff dimension obtained by imposing computability and
complexity constraints on these martingales.  These algorithmic
versions included resource-bounded dimensions, which impose
dimension structure on various complexity classes \cite{DCC}, the
(constructive) dimensions of infinite binary sequences, which
interact usefully with algorithmic information theory \cite{DISS},
and the finite-state dimensions of infinite binary sequences, which
interact usefully with data compression and Borel normality
\cite{FSD}.  Soon thereafter, classical packing dimension
\cite{Tric82, Sull84}\ was shown to admit a new characterization in
terms of martingales that is exactly dual to the martingale
characterization of Hausdorff dimension \cite{ESDAICC}.  This led
immediately to the development of strong resource-bounded
dimensions, strong (constructive) dimension, and strong finite-state
dimension \cite{ESDAICC}, which are all algorithmic versions of
packing dimension. In the years since these developments, hundreds
of research papers by many authors have deepened our understanding
of these algorithmic dimensions.

   Most work to date on effective dimensions has been carried out in the Cantor space, which consists of all infinite
   binary sequences.  This is natural, because effective dimensions speak to many issues that were already being
   investigated in the Cantor space.  However, the classical fractal dimensions from which these effective dimensions
   arose--Hausdorff dimension and packing dimension--are powerful quantitative tools of geometric measure theory that have
   been most useful in Euclidean spaces and other metric spaces that have far richer structures than the totally
   disconnected Cantor space.

   This chapter surveys research results to date on algorithmic fractal dimensions in geometric measure theory, especially
   fractal geometry in Euclidean spaces.  This is a small fraction of the existing body of work on algorithmic fractal
   dimensions, but it is substantial, and it includes some exciting new results.

   It is natural to identify a real number with its binary expansion
   and to use this identification to define algorithmic dimensions
   in Euclidean spaces in terms of their counterparts in Cantor
   space. This approach works for some purposes, but it becomes a
   dead end when algorithmic dimensions are used in geometric
   measure theory and computable analysis. The difficulty, first
   noted by Turing in his famous correction \cite{Tur38}, is that
   many {\sl obviously\/} computable functions on the reals (e.g.,
   addition) are {\sl not\/} computable if reals are represented by
   their binary expansions \cite{Wei00}. We thus take a principled
   approach from the beginning, developing algorithmic dimensions in
   Euclidean spaces in terms of the quantity $\K_r(x)$ in the
   following paragraph, so that the theory can seamlessly advance to
   sophisticated applications.

   Algorithmic dimension and strong algorithmic  dimension are the most extensively investigated effective dimensions.  One
   major reason for this is that these algorithmic dimensions were shown by the second author and others \cite{May02,ESDAICC,
   LM08}\
    to have characterizations in terms of Kolmogorov complexity, the central notion of algorithmic
   information theory.  In Section \ref{sec:2}\ below we give a brief introduction to the Kolmogorov complexity $\K_r(x)$ of a point $x$ in
   Euclidean space at a given precision $r$.

    In Section \ref{sec:3}\ we use the above Kolmorogov complexity notion to develop the algorithmic dimension $\dim(x)$ and the
    strong algorithmic  dimension $\Dim(x)$ of each point $x$ in Euclidean space.  This development supports the useful intuition
    that these dimensions are asymptotic measures of the density of algorithmic information in the point $x$.  We discuss how
    these dimensions relate to the local dimensions that arise in the so-called thermodynamic formalism of fractal
    geometry; we discuss the history and terminology of algorithmic dimensions; we review the prima facie case that algorithmic dimensions are geometrically meaningful; and we discuss what
    is known about the circumstances in which algorithmic dimensions agree with their classical counterparts.  We then
    discuss the authors' use of algorithmic dimensions to analyze self-similar fractals \cite{LM08}.  This analysis gives
    us a new, information-theoretic proof of the classical formula of Moran \cite{Mora46}\ for the Hausdorff dimensions of
    self-similar fractals in terms of the contraction ratios of the iterated function systems that generate them.  This new
    proof gives a clear account of ``where the dimension comes from'' in the construction of such fractals.  Section \ref{sec:3}\
    concludes with a survey of the dimensions of points on lines in Euclidean spaces, a topic that has been surprisingly
    challenging until a very recent breakthrough by N. Lutz and Stull \cite{LutStu17}.

   We survey interactive aspects of algorithmic fractal dimensions in Euclidean spaces in Section \ref{sec:4}, starting with the
   mutual algorithmic dimensions developed by Case and the first author \cite{CasLut15}.  These dimensions, $\mdim(x:y)$ and
   $\Mdim(x:y)$, are analogous to the mutual information measures of Shannon information theory and algorithmic information
   theory.  Intuitively, $\mdim(x:y)$ and $\Mdim(x:y)$ are asymptotic measures of the density of the algorithmic information
   shared by points $x$ and $y$ in Euclidean spaces.  We survey the fundamental properties of these mutual dimensions, which
   are analogous to those of their information-theoretic analogs.  The most important of these properties are those that
   govern how mutual dimensions are affected by functions on Euclidean spaces that are computable in the sense of
   computable analysis \cite{Wei00}.  Specifically, we review the information processing inequalities of \cite{CasLut15},
   which state that $\mdim(f(x):y) \le \mdim(x:y)$ and $\Mdim(f(x):y) \le \Mdim(x:y)$ hold for all computable Lipschitz functions $f$,
   i.e., that applying such a function $f$ to a point $x$ cannot increase the density of algorithmic information that it
   contains about a point $y$.  We also survey the conditional  dimensions $\dim(x|y)$ and $\Dim(x|y)$ recently
   developed by the first author and N. Lutz \cite{LutLut17}.  Roughly speaking, these conditional dimensions quantify the
   density of algorithmic information in $x$ beyond what is already present in $y$.

   It is rare for the theory of computing to be used to answer open questions in mathematical analysis whose statements do
   not involve computation or related aspects of logic.  In Section \ref{sec:5}\ we survey exciting new developments that do exactly
   this.  We first describe new characterizations by the first author and N. Lutz \cite{LutLut17}\ of the classical Hausdorff
   and packing dimensions of arbitrary sets in Euclidean spaces in terms of the relativized  dimensions of the
   individual points that belong to them.  These characterizations are called point-to-set principles because they enable
   one to use a bound on the relativized  dimension of a single, judiciously chosen point $x$ in a set $E$ in
   Euclidean space to prove a bound on the classical Hausdorff or packing dimension of the set $E$.  We illustrate the power
   of the point-to-set principle by giving an overview of its use in the new, information-theoretic proof \cite{LutLut17}\ of
   Davies's 1971 theorem stating that the Kakeya conjecture holds in the Euclidean plane \cite{Davi71}.  We then discuss
   two very recent uses of the point-to-set principle to solve open problems in classical fractal geometry.  These are N.
   Lutz and D. Stull's strengthened lower bounds on the Hausdorff dimensions of generalized Furstenberg sets
   \cite{LutStu17}\ and N. Lutz's extension of the fractal intersection formulas for Hausdorff and packing dimensions
   in Euclidean spaces from Borel sets to arbitrary sets.  These are, to the best of our knowledge, the first uses of
   algorithmic information theory to solve open problems in classical mathematical analysis.

   We briefly survey promising directions for future research in Section \ref{sec:6}.  These include extending the algorithmic analysis of
   self-similar fractals \cite{LM08}\ to other classes of fractals,
extending algorithmic dimensions to metric spaces other than
Euclidean spaces, investigating algorithmic fractal dimensions that
are more effective than constructive dimensions (e.g.,
polynomial-time or finite-state fractal dimensions) in fractal
geometry, and extending algorithmic methods to rectifiability and
other aspects of geometric measure theory that do not necessarily
concern fractal geometry.  In each of these we begin by describing
an existing result that sheds light on the promise of further
inquiry.

Overviews of algorithmic dimensions in Cantor space appear in
\cite{DowHir10,EFDAIT}, though these are already out of date. Even
prior to the development of algorithmic fractal dimensions, a rich
network of relationships among gambling strategies, Hausdorff
dimension, and Kolmogorov complexity was uncovered by reserach of
Ryabko \cite{Ryab84,Ryab86,Ryab93,Ryab94}, Staiger
\cite{Stai93,Stai98,Stai00}, and Cai and Hartmanis \cite{CaiHar94}.
A brief account of this ``prehistory'' of algorithmic fractal
dimensions appears in section 6 of \cite{DISS}.

\section{Algorithmic Information in Euclidean Spaces}\label{sec:2}

Algorithmic information theory has most often been used in the set
$\strings$ of all finite binary strings. The {\sl conditional
Kolmogorov complexity\/} (or {\sl conditional algorithmic
information content}) of a string $x\in\strings$ {\sl given\/} a
string $y\in\strings$ is
\begin{equation*}\K(x|y)=\min\myset{|\pi|}{\pi\in\strings\
\mathrm{and}\ U(\pi,y)=x}.\end{equation*} Here $U$ is a fixed
universal Turing machine, and $|\pi|$ is the length of a binary ``
program'' $\pi$. Hence $\K(x|y)$ is the minimum number of bits
required to specify $x$ to $U$, when $y$ is provided as side
information. We refer the reader to any of the standard texts
\cite{LiVit09, DowHir10, Nies12, ShUsVe17}\ for the history and
intuition behind this notion, including its essential invariance
with respect to the choice of the universal Turing machine $U$. The
{\sl Kolmogorov complexity\/} (or {\sl algorithmic information
content}) of a string $x\in \strings$ is then
\begin{equation*}\K(x)=\K(x|\lambda),\end{equation*} where $\lambda$
is the empty string.

Routine binary encoding enables one to extend the definitions of
$\K(x)$ and $\K(x|y)$ to situations where $x$ and $y$ range over
other countable sets such as $\N$, $\Q$, $\N\times \Q$, etc.

The key to ``lifting'' algorithmic information theory notions to
Euclidean spaces is to define the {\sl Kolmogorov complexity\/} of a
set $E\subseteq\Rn$ to be
\begin{equation}\label{equ21}\K(E)=\min\myset{\K(q)}{q\in\Q^n\cap
E}.\end{equation} (Shen and Vereshchagin \cite{SheVer02}\ used a
very similar notion for a very different purpose.) Note that $\K(E)$
is the amount of information required to specify not the set $E$
itself, but rather {\sl some\/} rational point in $E$. In
particular, this implies that \begin{equation*}E\subseteq F\implies
\K(E)\ge \K(F).\end{equation*} Note also that, if $E$ contains no
rational point, then $\K(E)=\infty$.

The {\sl Kolmogorov complexity\/} of a point $x\in\Rn$ {\sl at
precision\/} $r\in\N$ is
\begin{equation}\label{equ22}   \K_r(x)=\K(\B_{2^{-r}}(x)),   \end{equation}
where $\B_{\epsilon}(x)$ is the open ball of radius $\epsilon$ about
$x$, i.e., the number of bits required to specify {\sl some\/}
rational point $q\in\Q^n$ satisfying $|q-x|<2^{-r}$, where $|q-x|$
is the Euclidean distance of $q-x$ from the origin.

\section{Algorithmic Dimensions}\label{sec:3}

   \subsection{Dimensions of Points}\label{subsec:31}
We now define the {\sl (constructive) dimension\/} of a point
$x\in\Rn$ to be \begin{equation}\label{equ31} \dim(x)=\liminf_{r\to
\infty}\frac{\K_r(x)}{r}\end{equation} and the {\sl strong
(constructive) dimension\/} of $x$ to be
\begin{equation}\label{equ32} \Dim(x)=\limsup_{r\to \infty}\frac{\K_r(x)}{r}.\end{equation} We note that $\dim(x)$ and
$\Dim(x)$ were originally defined in terms of algorithmic betting
strategies called gales \cite{DISS,ESDAICC}. The identities
(\ref{equ31}) and (\ref{equ32}) were subsequent {\sl theorems\/}
proven in \cite{LM08}, refining very similar results in
\cite{May02,ESDAICC}. These identities have been so convenient for
work in Euclidean space that it is now natural to regard them as
definitions.

Since $\K_r(x)$ is the amount of information required to specify a
rational point that approximates $x$ to within $2^{-r}$ (i.e., with
$r$ bits of precision), $\dim(x)$ and $\Dim(x)$ are intuitively the
{\sl lower\/} and {\sl upper asymptotic densities of information\/}
in the point $x$. This intuition is a good starting point, but the
fact that $\dim(x)$ and $\Dim(x)$ are {\sl geometrically\/}
meaningful will only become evident in light of the mathematical
consequences of (\ref{equ31}) and (\ref{equ32}) surveyed in this
chapter.

It is an easy exercise to show that, for all $x\in \Rn$,
\begin{equation}\label{equ33} 0\le \dim(x)\le \Dim(x)\le n.
\end{equation}
If $x$ is a computable point in $\Rn$, then $\K_r(x)=o(r)$, so
$\dim(x)=\Dim(x)=0$. On the other hand, if $x$ is a random point in
$\Rn$ (i.e., a point that is algorithmically random in the sense of
Martin-L\"of \cite{MarL66}), then $\K_r(x)=nr-O(1)$, so
$\dim(x)=\Dim(x)=n$. Hence the dimensions of points range between 0
and the dimension of the Euclidean space that they inhabit. In fact,
for {\sl every\/} real number $\alpha \in [0,n]$, the {\sl dimension
level
set}\begin{equation}\label{equ34}\DIM^{\alpha}=\myset{x\in\Rn}{\dim(x)=\alpha}\end{equation}
and the {\sl strong dimension level set}
\begin{equation}\label{equ35}\DIM_{\str}^{\alpha}=\myset{x\in\Rn}{\Dim(x)=\alpha}\end{equation} are uncountable and dense in $\Rn$
\cite{DISS,ESDAICC}. The dimensions $\dim(x)$ and $\Dim(x)$ can
coincide, but they do not generally do so. In fact, the set
$\DIM^0\cap \DIM_{\str}^{n}$ is a comeager (i.e., topologically
large) subset of $\Rn$ \cite{HitPav05}.

Classical fractal geometry has {\sl local}, or {\sl pointwise},
dimensions that are useful, especially in connection with dynamical
systems. Specifically, if $\nu$ is an {\sl outer measure\/} on
$\Rn$, i.e., a function $\nu: \mathcal{P}(\Rn) \to [0, \infty]$
satisfying $\nu(\emptyset)=0$, monotonicity ($E\subseteq F \implies
\nu(E)\le \nu(F)$), and countable subadditivity ($E\subseteq
\cup_{k=0}^{\infty} E_k \implies \nu(E)\le \sum_{k=0}^{\infty}
\nu(E_k)$), and if $\nu$ is {\sl locally finite} (i.e., every $x\in
\Rn$ has a neighborhood $N$ with $\nu(N)< \infty$), then the {\sl
lower\/} and {\sl upper local dimensions\/} of $\nu$ at a point
$x\in\Rn$ are
\begin{equation}\label{equ36}
(\dimloc\nu)(x)=\liminf_{r\to
\infty}\frac{\log(\frac{1}{\nu(\B_{2^{-r}}(x))})}{r}
\end{equation}
and
\begin{equation}\label{equ37}
(\Dimloc\nu)(x)=\limsup_{r\to
\infty}\frac{\log(\frac{1}{\nu(\B_{2^{-r}}(x))})}{r},
\end{equation}
respectively, where $\log=\log_2$ \cite{Falc14}.

Until very recently, no relationship was known between the
 dimensions $\dim(x)$ and $\Dim(x)$ and the local
dimensions (\ref{equ36}) and (\ref{equ37}). However, N. Lutz
recently observed that a very non-classical choice of the outer
measure $\nu$ remedies this. For each $E\subseteq \Rn$, let
\begin{equation}\label{equ38}\kappa(E)=2^{-\K(E)},\end{equation} where $\K(E)$ is defined as in (\ref{equ21}). Then $\kappa$
is easily seen to be an outer measure on $\Rn$ that is {\sl finite}
(i.e., $\kappa(\Rn)<\infty$), hence certainly locally finite, whence
the local dimensions $\dimloc \kappa$ and $\Dimloc \kappa$ are well
defined. In fact we have the following.

\begin{theorem}{(N. Lutz\cite{NLut16})} For all $x\in\Rn$,
\begin{equation*}\dim(x)=(\dimloc \kappa)(x)\end{equation*} and \begin{equation*}\Dim(x)=(\Dimloc \kappa)(x).\end{equation*}\end{theorem}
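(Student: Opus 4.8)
The plan is to recognize that this theorem is, at bottom, an unwinding of definitions, provided one first checks that $\kappa$ is a genuine locally finite outer measure, so that $\dimloc\kappa$ and $\Dimloc\kappa$ are defined in the first place. I would therefore begin by verifying the outer measure axioms for $\kappa(E)=2^{-\K(E)}$. The condition $\kappa(\emptyset)=0$ is immediate, since $\emptyset$ contains no rational point and hence $\K(\emptyset)=\infty$. Monotonicity is exactly the implication $E\subseteq F\implies\K(E)\ge\K(F)$ recorded just after (\ref{equ21}), which gives $\kappa(E)\le\kappa(F)$. For countable subadditivity, suppose $E\subseteq\bigcup_{k}E_k$: if $\Q^n\cap E=\emptyset$ then $\kappa(E)=0$ and there is nothing to prove; otherwise the minimum defining $\K(E)$ is attained at some $q^{*}\in\Q^n\cap E$, which lies in some $E_{k_0}$, so that $\K(E_{k_0})\le\K(q^{*})=\K(E)$ and therefore $\kappa(E)\le\kappa(E_{k_0})\le\sum_{k}\kappa(E_k)$. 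Finally, $\K(\Rn)=\min\{\K(q):q\in\Q^n\}$ is a finite constant (every rational point has finite Kolmogorov complexity), so $\kappa(\Rn)<\infty$; thus $\kappa$ is finite, a fortiori locally finite, and the local dimensions (\ref{equ36}) and (\ref{equ37}) are well defined at every $x\in\Rn$.

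With this in place, the two equalities are a single line of computation. By (\ref{equ22}) and (\ref{equ38}), for every $x\in\Rn$ and every $r\in\N$,
\[
\kappa(\B_{2^{-r}}(x))=2^{-\K(\B_{2^{-r}}(x))}=2^{-\K_r(x)},
\]
so that $\log(1/\kappa(\B_{2^{-r}}(x)))=\K_r(x)$. Substituting this identity into the definitions (\ref{equ36}) and (\ref{equ37}) of $(\dimloc\kappa)(x)$ and $(\Dimloc\kappa)(x)$ converts them, term by term, into the definitions (\ref{equ31}) and (\ref{equ32}) of $\dim(x)$ and $\Dim(x)$, which proves the theorem.

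Since the argument is purely definitional, there is no genuine obstacle; the only step needing any attention is the countable subadditivity of $\kappa$, which---in contrast to the ``cover by small-diameter sets'' subadditivity of Hausdorff-type measures---holds for the essentially trivial reason that $\K(E)$ is a minimum of $\K$ over the rational points of $E$. It is worth flagging in the exposition that this is precisely the reason such a non-classical outer measure is introduced: the point is not that $\kappa$ is geometrically natural, but that $-\log\kappa(\B_{2^{-r}}(x))$ is, by construction, exactly $\K_r(x)$, so the local dimension of $\kappa$ cannot help but equal the information density $\K_r(x)/r$.
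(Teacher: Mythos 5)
Your proof is correct and follows essentially the same path the paper takes (the paper leaves the verification implicit, remarking only that $\kappa$ is ``easily seen'' to be a finite outer measure before stating the theorem without further proof). Your careful check of countable subadditivity via the attained minimum, and the one-line identification $\log(1/\kappa(\B_{2^{-r}}(x)))=\K_r(x)$, is exactly the intended argument.
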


There is a direct conceptual path from the classical Hausdorff and
packing dimensions to the dimensions of points defined in
(\ref{equ31}) and (\ref{equ32}).

The {\sl Hausdorff dimension\/} $\dimh(E)$ of a set $E\subseteq \Rn$
was introduced by Hausdorff \cite{Haus19}\ before 1920 and is
arguably the most important notion of fractal dimension. Its
classical definition, which may be found in standard texts such as
\cite{SteSha05, Falc14, BisPer17}, involves covering the set $E$ by
families of sets with diameters vanishing in the limit. In all
cases, $0\le \dimh(E)\le n$.

At the beginning of the present century, in order to formulate
versions of Hausdorff dimensions that would work in complexity
classes and other algorithmic settings, the first author \cite{DCC}\
gave a new characterization of Hausdorff dimension in terms of
betting strategies, called gales, on which it is easy to impose
computability and complexity conditions. Of particular interest
here, he then defined the {\sl constructive dimension\/} $\adim(E)$
of a set $E\subseteq \Rn$ exactly like the gale characterization of
$\dimh(E)$, except that the gales were now required to be lower
semicomputable \cite{DISS}. He then defined the dimension $\dim(x)$
of a point $x\in\Rn$ to be the constructive dimension of its
singleton, i.e., $\dim(x)=\adim(\{x\})$. The existence of a
universal Turing machine made it immediately evident that
constructive dimension has the {\sl absolute stability\/} property
that
\begin{equation}\label{equ39} \adim(E)= \sup_{x\in E}
\dim(x)\end{equation} for all $x\in\Rn$. Accordingly, constructive
dimension has since been investigated pointwise. As noted earlier,
the second author \cite{May02}\ then proved the characterization
(\ref{equ31}) as a theorem.

Two things should be noted about the preceding paragraph. First,
these early papers were written entirely in terms of binary
sequences, rather than points in Euclidean space. However,  the most
straightforward binary encoding of points bridges this gap. (In this
survey we freely use those results from Cantor space that do extend
easily to Euclidean space.) Second, although the gale
characterization is essential for polynomial time and many other
stringent levels of effectivization, constructive dimension can be
defined equivalently by effectivizing Hausdorff's original
formulation \cite{Reim04}.


\subsection{The Correspondence Principle}

In 2001, the first author conjectured that there should be a {\sl
correspondence principle\/} (a term that Bohr had used analogously
in quantum mechanics) assuring us that for sufficiently simple sets
$E\subseteq \Rn$, the constructive and classical dimensions agree,
i.e.,
\begin{equation}\label{equ310}\adim(E)=\dimh(E).\end{equation}
Hitchcock \cite{Hitchcock:CPED}\ confirmed this conjecture, proving
that (\ref{equ310}) holds for any set $E\subseteq \Rn$ that is a
union of sets that are computably closed, i.e., that are $\Pi^0_1$
in Kleene's arithmetical hierarchy. (This means that (\ref{equ310})
holds for all $\Sigma^0_2$ sets, and also for sets that are
nonuniform unions of  $\Pi^0_1$  sets.) Hitchcock also noted that
this result is the best possible in the arithmetical hierarchy,
because there are  $\Pi^0_2$ sets $E$ (e.g., $E=\{z\}$, where $z$ is
a Martin-L\"of random point that is $\Delta^0_2$) for which
(\ref{equ310}) fails.

By (\ref{equ39}) and (\ref{equ310}) we have
\begin{equation}\label{equ311}\dimh(E)=\sup_{x\in E}\dim(x),
\end{equation} which is a very nonclassical, pointwise
characterization of the classical Hausdorff dimensions of sets that
are unions of $\Pi^0_1$ sets. Since most textbook examples of
fractal sets are $\Pi^0_1$, (\ref{equ311}) is a strong preliminary
indication that the dimensions of points are geometrically
meaningful.

The {\sl packing dimension \/} $\dimpack(E)$ of a set $E\subseteq
\Rn$ was introduced in the early 1980s by Tricot \cite{Tric82}\ and
Sullivan \cite{Sull84}. Its original definition is a bit more
involved that that of Hausdorff dimension \cite{Falc14,BisPer17}\
and implies that $\dimh(E)\le \dimpack(E)\le n$ for all
$E\subseteq\Rn$.

After the development of constructive versions of Hausdorff
dimension outlined above, Athreya,  Hitchcock, and the authors
\cite{ESDAICC}\ undertook an analogous development for packing
dimension. The gale characterization of $\dimpack(E)$ turns out to
be exactly dual to that of $\dimh(E)$, with just one limit superior
replaced by a limit inferior. The {\sl strong constructive
dimension\/} $\aDim(E)$ of a set $E\subseteq \Rn$ is defined by
requiring the gales to be lower semicomputable, and the {\sl strong
dimension\/} of a point $x\in\Rn$ is $\Dim(x)=\aDim(\{x\})$. The
{\sl absolute stability\/} of strong constructive dimension,
\begin{equation}\label{equ312}\aDim(E)=\sup_{x\in
E}\Dim(x),\end{equation} holds for all $E\subseteq \Rn$, as does the
Kolmogorov complexity characterization (\ref{equ32}). All this was
shown in \cite{ESDAICC}, but a correspondence principle for strong
constructive dimension was left open. In fact, Conidis
\cite{Coni08}\ subsequently used a clever priority argument to
construct a $\Pi^0_1$ set $E\subseteq \Rn$ for which $\aDim(E)\ne
\dimpack(E)$. It is still not known whether some simple, logical
definability criterion for $E$ implies that $\aDim(E)=\dimpack(E)$.
Staiger's proof that regular $\omega$-languages $E$ satisfy this
identity is an encouraging step in this direction \cite{Stai07}.


   \subsection{Self-Similar Fractals}\label{subsec:33}

The first application of algorithmic dimensions to fractal geometry
was the authors' investigation of the dimensions of points in
self-similar fractals \cite{LM08}. We give  a brief exposition of
this work here, referring the reader to \cite{LM08}\ for the many
missing details.

Self-similar fractals are the most widely known and best understood
classes of fractals \cite{Falc14}. Cantor's middle-third set, the
von Koch curve, the Sierpinski triangle, and the Menger sponge are
especially well known examples of self-similar fractals.

Briefly, a self-similar fractal in a Euclidean space $\Rn$ is
generated from an initial nonempty closed set $D\subseteq \Rn$ by an
{\sl iterated function system (IFS)}, which is a finite list
$S=(S_0, S_1, \ldots, S_{k-1})$ of $k\ge 2$ contracting similarities
$S_i: D\to D$. Each of these similarities $S_i$ is {\sl coded\/} by
the symbol $i$ in the alphabet $\Sigma=\{0, \ldots, k-1\}$, and each
$S_i$ has a {\sl contraction ratio\/} $c_i\in(0,1)$. The IFS $S$ is
required to satisfy Moran's {\sl open set condition} \cite{Mora46},
which says that there is a nonempty open set $G\subseteq D$ whose
images $S_i(G)$, for $i\in\Sigma$, are disjoint subsets of $G$.

For example, the Sierpinski triangle is generated from the set
$D\subseteq\R^2$ consisting of the triangle with vertices
$v_0=(0,0)$, $v_1=(1,0)$, and $v_2=(1/2, \sqrt{3}/2)$, together with
this triangle's interior, by the IFS $S=(S_0, S_1, S_2)$, where each
$S_i: D\to D$ is defined by
\begin{equation*}S_i(p)=v_i+\frac{1}{2}(p-v_i)\end{equation*} for $p\in D$. Note
that $\Sigma=\{0,1,2\}$ and $c_0=c_1=c_2=1/2$ in this example. Note
also  that the open set condition is satisfied here by letting $G$
be the topological interior of $D$. Each infinite sequence
$T\in\Sigma^{\infty}$ {\sl codes\/} a point $S(T)\in D$ that is
obtained by applying the similarities coded by the successive
symbols in $T$ in a canonical way. (See Figure \ref{fig1}.) The Sierpinski is
the {\sl attractor\/} (or {\sl invariant set}) of $S$ and $D$, which
consists of all points $S(T)$ for $T\in\Sigma^{\infty}$.

\begin{figure}[htbp]\begin{center}\includegraphics[scale=0.08]{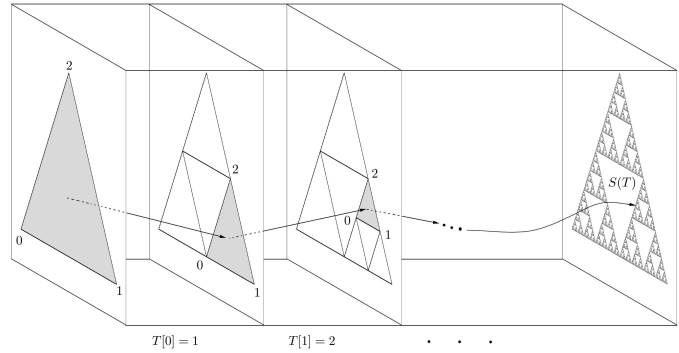}\caption{\label{fig1} A sequence $T\in\{0, 1, 2\}^{\infty}$ codes a point $S(T)$ in the
Sierpinski triangle (from \cite{LM08}).}\end{center}
\end{figure}

The main objective of \cite{LM08}\ was to relate the dimension and
strong dimension of each point $S(T)\in\Rn$ in a self-similar
fractal to the corresponding dimensions of the coding sequence $T$.
As it turned out, the algorithmic dimensions in $\Sigma^{\infty}$
had to be extended in order to achieve this.

The {\sl similarity dimension\/} of an IFS $S=(S_0, \ldots,
S_{k-1})$ with contraction ratios $c_0, \ldots, c_{k-1}\in (0,1)$ is
the unique solution $\sdim(S)=s$ of the equation
\begin{equation}\label{equ313}
\sum_{i=o}^{k-1}c_i^s=1.\end{equation} The {\sl similarity
probability measure\/} of $S$ is the probability measure on $\Sigma$
that is implicit in (\ref{equ313}), i.e., the function $\pi_S:\Sigma
\to [0,1]$ defined by \begin{equation}\label{equ314}\pi_S(i)=
c_i^{\sdim(S)}\end{equation}
 for each $i\in
\Sigma$. If the contraction ratios of $S$ are all the same, then
$\pi_S$ is the uniform probability measure on $\Sigma$, but this is
not generally the case. We extend $\pi_S$ to the domain $\Sigma^*$
by setting
 \begin{equation}\label{equ315}\pi_S(w)=
\prod_{m=0}^{|w|-1}\pi_S(w[m])\end{equation} for each
$w\in\Sigma^*$. We define the Shannon {\sl $S$-self-information\/}
of each string $w\in\Sigma^*$ to be the quantity
 \begin{equation}\label{equ316} l_{S}(w) = \log \frac{1}{ \pi_S(w)}.\end{equation}
Finally, we define the {\sl dimension\/} of a sequence
$T\in\Sigma^{\infty}$ {\sl with respect to\/} the IFS $S$ to be
\begin{equation}\label{equ317}
  \dim^S(T) =    \liminf_{j\to\infty}\frac{\K(T[0..j])}{l_{S}(T[0..j])}.
  \end{equation}
 Similarly,  the {\sl strong  dimension\/} of $T$
{\sl with respect to $S$\/} is
\begin{equation}\label{equ318}
   \Dim^S(T) =    \limsup_{j\to\infty}\frac{\K(T[0..j])}{l_{S}(T[0..j])}.\end{equation}

The dimension (\ref{equ317}) is a special case of an algorithmic
{\sl Billingsley dimension\/} \cite{Bill60, Wegm68, Caja82}. These
are treated more generally in  \cite{LM08}.

A set $F\subseteq\Rn$ is a {\sl computably self-similar fractal\/}
if it is the attractor of some $D$ and $S$ as above such that the
contracting similarities $S_0, \ldots, S_{k-1}$ are all computable
in the sense of computable analysis.

The following theorem gives a complete analysis of the dimensions of
points in computably self-similar fractals.

\begin{theorem}\label{theo32}{(J. Lutz and Mayordomo \cite{LM08})}
If $F\subseteq\Rn$ is a computably self-similar fractal and  $S$ is
an IFS testifying to this fact, then, for all points $x\in F$ and
all coding sequences  $T\in\Sigma^{\infty}$ for $x$,
\begin{equation}\label{equ319}\dim(x)=\sdim(S)
\dim^{S}(T)\end{equation} and
\begin{equation}\label{equ320}\Dim(x)=\sdim(S) \Dim^{S}(T).\end{equation}
\end{theorem}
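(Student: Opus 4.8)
The plan is to prove the two identities by relating $\K_r(x)$, the Kolmogorov complexity of the point $x=S(T)$ at precision $r$, to $\K(T[0..j])$, the Kolmogorov complexity of a prefix of the coding sequence, for a suitably matched pair $(r,j)$. The geometric bridge is the open set condition: it guarantees that the cylinder sets $S_{w}(D)$, for $w\in\Sigma^*$ of a given ``weighted length,'' have bounded overlap, so that knowing $x$ to within $2^{-r}$ is, up to additive constants depending only on $S$ and $D$, the same as knowing a prefix $w\prefix T$ with $\diam(S_w(D))\approx 2^{-r}$, i.e.\ with $l_S(w)$ close to $r/\sdim(S)$ (using $\diam(S_w(D)) = c_w\,\diam(D)$ and $c_w^{\sdim(S)} = \pi_S(w) = 2^{-l_S(w)}$, so $-\log c_w = l_S(w)/\sdim(S)$). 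So the natural correspondence is $j = j(r)$ chosen as the largest $j$ with $l_S(T[0..j]) \le r/\sdim(S)$.

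First I would establish the two-sided estimate
\[
\K(T[0..j(r)]) = \K_r(x) \pm O(\log r),
\]
or more precisely with an $o(r)$ error, which is all that is needed since both sides get divided by $r$ and we take $\liminf$/$\limsup$. For the direction $\K_r(x) \le \K(T[0..j(r)]) + o(r)$: from a program for the prefix $w=T[0..j(r)]$ one computes (using computability of the $S_i$ in the sense of computable analysis, and a fixed rational approximation to a point of $D$) a rational $2^{-r}$-approximation to $S_w(\text{anything in }D)$, which is within $\diam(S_w(D)) = c_w\,\diam(D) \le 2^{-r}\,\diam(D)$ of $x = S(T)$; one absorbs the $\diam(D)$ factor into $O(1)$ extra bits of precision. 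For the reverse direction $\K(T[0..j(r)]) \le \K_r(x) + o(r)$: given a rational $q$ with $|q-x| < 2^{-r}$, the open set condition lets one recover, from $q$ and $O(\log r)$ bits of advice (to resolve the bounded ambiguity among overlapping cylinders and to specify which level $j$), the prefix $T[0..j(r)]$ — this is the ``approximate addressing'' argument, and it is the place where the open set condition is used essentially.

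Granting that estimate, the rest is bookkeeping with limits. Divide by $r$; since $l_S(T[0..j(r)]) \le r/\sdim(S) < l_S(T[0..j(r)+1]) = l_S(T[0..j(r)]) + l_S(T[j(r)])$ and each single-symbol self-information $l_S(i) = \log(1/\pi_S(i))$ is a bounded positive constant, we get $l_S(T[0..j(r)]) = r/\sdim(S) + O(1)$, hence $r = \sdim(S)\cdot l_S(T[0..j(r)]) + O(1)$. Therefore
\[
\frac{\K_r(x)}{r} = \frac{\K(T[0..j(r)]) + o(r)}{\sdim(S)\,l_S(T[0..j(r)]) + O(1)} = \frac{1}{\sdim(S)}\cdot\frac{\K(T[0..j(r)])}{l_S(T[0..j(r)])} + o(1).
\]
Taking $\liminf_{r\to\infty}$ gives $\dim(x) = \sdim(S)^{-1}\cdot\liminf_r \K(T[0..j(r)])/l_S(T[0..j(r)])$; because $j(r)\to\infty$ and $j(r)$ takes every sufficiently large integer value (consecutive thresholds differ by $O(1)$, so no index is skipped), the $\liminf$ over $r$ equals the $\liminf$ over $j$, which is $\dim^S(T)$. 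Rearranging yields (\ref{equ319}), and the identical argument with $\limsup$ yields (\ref{equ320}).

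The main obstacle is the reverse inequality $\K(T[0..j(r)]) \le \K_r(x) + o(r)$, i.e.\ showing that a good rational approximation to the \emph{point} really does pin down a long prefix of \emph{the} coding sequence. The subtlety is that a point of the attractor may have several codings (this is exactly why the theorem is stated ``for all coding sequences $T$''), and a $2^{-r}$-ball around $x$ meets many cylinder sets $S_w(D)$ of diameter $\approx 2^{-r}$; the open set condition bounds the \emph{number} of such cylinders by a constant $N = N(S,D,G)$ independent of $r$, so one can enumerate them and use $O(\log N) = O(1)$ bits to select the one that is a prefix of $T$, plus $O(\log r)$ bits to encode the target level $j$. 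Care is needed because ``$\diam(S_w(D))\approx 2^{-r}$'' has to be read off in a uniform, computable way and because the error terms from (i) the finite precision of the approximations to the $S_i$, (ii) the $\diam(D)$ and open-set-geometry constants, and (iii) the $O(\log r)$ advice, must all be shown to be $o(r)$ — which they are, being $O(\log r)$ at worst, but this should be verified honestly. Once that estimate is in hand, everything else is the routine limit manipulation sketched above.
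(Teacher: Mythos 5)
Your strategy---match a precision-$r$ ball around $x$ with the unique (up to bounded ambiguity) prefix $w \prefix T$ whose cylinder $S_w(D)$ has diameter $\approx 2^{-r}$, prove a two-sided estimate $\K(T[0..j(r)]) = \K_r(x) \pm o(r)$ using the open set condition for the hard direction, and then pass to $\liminf$/$\limsup$---is indeed the right skeleton and the one used in \cite{LM08}. However, the bookkeeping that sets up the scale correspondence contains a reciprocal error, and as written your derivation ends with the \emph{wrong} constant: $\dim(x) = \sdim(S)^{-1}\dim^S(T)$, which no ``rearranging'' turns into \eqref{equ319}.

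Trace your own algebra: you correctly record $-\log c_w = l_S(w)/\sdim(S)$. The matching condition is $\diam(S_w(D)) = c_w\,\diam(D) \approx 2^{-r}$, i.e.\ $-\log c_w \approx r$, which combined with your identity gives $l_S(w) \approx \sdim(S)\,r$, \emph{not} $l_S(w) \approx r/\sdim(S)$. So $j(r)$ should be the largest $j$ with $l_S(T[0..j]) \le \sdim(S)\,r$, giving $r = l_S(T[0..j(r)])/\sdim(S) + O(1)$, and hence
\begin{equation*}
\frac{\K_r(x)}{r} = \frac{\K(T[0..j(r)]) + o(r)}{l_S(T[0..j(r)])/\sdim(S) + O(1)} = \sdim(S)\cdot\frac{\K(T[0..j(r)])}{l_S(T[0..j(r)])} + o(1),
\end{equation*}
which gives $\dim(x) = \sdim(S)\,\dim^S(T)$ as claimed. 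A quick sanity check would have caught the slip: for $T$ Martin-L\"of random with respect to $\pi_S$ one has $\dim^S(T)=1$, and the point $x=S(T)$ should then have $\dim(x)=\sdim(S)$ (the Hausdorff dimension of $F$); your version would give $\dim(x)=1/\sdim(S)$, which already exceeds $n$ whenever $\sdim(S)<1/n$. One further small caveat: your claim that $j(r)$ ``takes every sufficiently large integer value'' need not hold (if $\sdim(S)$ exceeds the maximum single-symbol self-information, $j(r)$ can skip indices); the correct observation is that $j(r)$ advances by $O(1)$ per unit increase in $r$ and consecutive values of $\K(T[0..j])/l_S(T[0..j])$ differ by $O(1/j)$, so the $\liminf$/$\limsup$ along the subsequence $j(r)$ agrees with the full one. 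With these fixes your outline matches the proof structure in \cite{LM08}.
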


The proof of Theorem \ref{theo32} is nontrivial. It combines some
very strong coding properties of iterated function systems with some
geometric Kolmogorov complexity arguments.

The following characterization of continuous functions on the reals
is one of the oldest and most beautiful theorems of computable
analysis.

\begin{theorem}\label{the33}{(Lacombe \cite{Lac55a, Lac55b})} A function $f:
\Rn\to \R^m$ is continuous if and only if there is an oracle
$A\subseteq \N$ relative to which $f$ is computable.\end{theorem}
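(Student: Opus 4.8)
The plan is to fix a standard model of relative computability for functions on Euclidean space: $f$ is \emph{computable relative to $A\subseteq\N$} if there is an oracle Turing machine $M$ such that, for every $x\in\R^n$, every \emph{name} $\phi$ of $x$ (a sequence $q_0,q_1,\dots\in\Q^n$ with $|q_k-x|<2^{-k}$), and every $r\in\N$, the computation $M^{A\oplus\phi}(r)$ halts with output a rational $p\in\Q^m$ satisfying $|p-f(x)|<2^{-r}$. This equivalence is robust across the standard equivalent formulations of computable analysis, and with this model both directions become elementary; the content sits in the backward direction.

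Forward direction ($f$ computable relative to some $A$ $\implies$ $f$ continuous). Fix $x$ and $r$, and run $M^{A\oplus\phi}(r)$ for some name $\phi$ of $x$. Since it halts, it queries $\phi$ at only finitely many positions $k_1,\dots,k_\ell$, reading $\phi(k_i)=q^{(i)}$. Validity of $\phi$ gives $|q^{(i)}-x|<2^{-k_i}$ with strict inequality, so there is $\varepsilon>0$ such that $|q^{(i)}-y|<2^{-k_i}$ for all $i$ whenever $y\in\B_\varepsilon(x)$; for such $y$ one extends $k_i\mapsto q^{(i)}$ to a full valid name $\psi$ of $y$ by putting, at each unqueried position $k$, any rational within $2^{-k}$ of $y$. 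Then $M^{A\oplus\psi}(r)$ makes the same queries, receives the same answers, and outputs the same $p$, so $|f(y)-p|<2^{-r}$ and hence $|f(y)-f(x)|<2^{-r+1}$ for all $y\in\B_\varepsilon(x)$. As $r$ and $x$ were arbitrary, $f$ is continuous.

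Backward direction ($f$ continuous $\implies$ $f$ computable relative to some $A$). The key point is that the oracle need not be effective in any way, so we simply store the entire ``ball-to-ball'' behavior of $f$: let
\[ A=\bigl\{\langle u,v\rangle : f(\overline{\B_u})\subseteq \B_v\bigr\}, \]
where $u$ codes an open ball $\B_u\subseteq\R^n$ with rational center and radius, $v$ codes such a ball $\B_v\subseteq\R^m$, and $\overline{\B_u}$ is the corresponding closed ball. On input $r$ and a name $q_0,q_1,\dots$ of $x$, the machine dovetails over all $k\in\N$ and all codes $v$ with $\mathrm{radius}(\B_v)\le 2^{-r}$, querying $A$ for $\langle u_k,v\rangle$, where $\B_{u_k}$ is the ball of center $q_k$ and radius $3\cdot 2^{-k}$ (so $x\in\B_{u_k}$); on the first hit it outputs the center of $\B_v$. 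Termination uses continuity of $f$ \emph{at} $x$: pick $\delta>0$ with $f(\B_\delta(x))\subseteq\B_{2^{-r-1}}(f(x))$; once $4\cdot 2^{-k}\le\delta$ we have $\overline{\B_{u_k}}\subseteq\B_\delta(x)$, so $f(\overline{\B_{u_k}})$ is a compact subset of $\B_{2^{-r-1}}(f(x))$, hence contained in some rational ball $\B_v$ of radius $<2^{-r}$, and $\langle u_k,v\rangle\in A$. Correctness is immediate: any returned $v$ has $x\in\overline{\B_{u_k}}$ and $f(\overline{\B_{u_k}})\subseteq\B_v$, so $f(x)\in\B_v$ and the output is within $2^{-r}$ of $f(x)$. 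Since $A\subseteq\N$, this is a legitimate oracle machine.

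I expect the only genuine design choice to be what the oracle in the backward direction records, and in particular the realization that compactness of closed rational balls makes the relation ``$f$ maps this ball into that ball'' do double duty: it certifies an output approximation, and it is guaranteed to fire eventually, the latter needing only pointwise (not uniform) continuity at the queried point. The rest is routine: the finite-use principle for the forward direction, and checking that rational balls form a basis so the dovetailed search is well-posed. An alternative, more abstract route --- continuous functions between admissibly represented spaces have continuous realizers, and continuous functions on Baire space are computable relative to some oracle --- yields the theorem as well, but hides the mechanics.
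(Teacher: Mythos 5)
The paper cites Theorem~\ref{the33} as a classical result of Lacombe and does not give a proof, so there is nothing in-text to compare against. Your argument is correct and essentially canonical for the Type-2/TTE model you fix: the forward direction is the standard finite-use argument (a halting oracle computation reads only finitely many entries of the Cauchy name, and those entries remain valid for all $y$ near $x$, so the same output certifies continuity at $x$); the backward direction's oracle $A=\{\langle u,v\rangle : f(\overline{\B_u})\subseteq\B_v\}$ is the right object, and your termination argument correctly isolates the two ingredients that make it work --- pointwise continuity of $f$ at $x$ gives the $\delta$, and compactness of the closed rational ball $\overline{\B_{u_k}}$ (plus the triangle-inequality bookkeeping $3\cdot 2^{-k}+2^{-k}\le\delta$) guarantees a witnessing rational ball $\B_v$ of radius $<2^{-r}$ is eventually found. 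Correctness of the output is immediate since $x\in\overline{\B_{u_k}}$. The one thing worth stating explicitly, which you gesture at, is that this equivalence is insensitive to which of the standard admissible representations of $\R^n$ one uses; with that noted, the proof is complete.
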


Using Lacombe's theorem it is easy to derive the classical analysis
of self-similar fractals (which need not be computably self-similar)
from Theorem \ref{theo32}.

\begin{corollary}\label{cor34}{(Moran \cite{Mora46}, Falconer \cite{Falc89})}
 For every self-similar
fractal $F\subseteq\Rn$ and every IFS $S$ that generates $F$,
\begin{equation}\label{equ321}\dimh(F)=\dimpack(F)=\sdim(F).
\end{equation}
\end{corollary}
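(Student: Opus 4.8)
The plan is to lift the pointwise formula of Theorem~\ref{theo32} to the level of sets via absolute stability, and then to trap the resulting supremum between the classical Hausdorff and packing dimensions. Since $F$ need not be computably self-similar, I would first apply Lacombe's theorem (Theorem~\ref{the33}) to fix an oracle $A\subseteq\N$ relative to which every similarity $S_i$ of a chosen generating IFS $S$ (with the open set condition) is computable; then $F$ is $A$-computably self-similar, and the routine relativization of Theorem~\ref{theo32} holds: for every $x\in F$ and every coding sequence $T\in\Sigma^{\infty}$ for $x$,
\begin{equation*}
\dim^A(x)=\sdim(S)\,\dim^{S,A}(T),\qquad \Dim^A(x)=\sdim(S)\,\Dim^{S,A}(T),
\end{equation*}
where $\dim^{S,A}$ and $\Dim^{S,A}$ are the oracle versions of (\ref{equ317}) and (\ref{equ318}). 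Because $F$ is the attractor of an $A$-computable IFS it is a computable compact set, hence $\Pi^0_1$, relative to $A$, so the relativization of Hitchcock's correspondence principle gives $\adim^A(F)=\dimh(F)$, while $\dimh(F)\le\dimpack(F)\le\aDim^A(F)$ holds unconditionally (the last inequality because the gale characterization of packing dimension allows arbitrary gales, whereas strong constructive dimension allows only the $A$-lower-semicomputable ones). Combining these with the relativized absolute-stability identities (\ref{equ39}) and (\ref{equ312}), namely $\adim^A(F)=\sup_{x\in F}\dim^A(x)$ and $\aDim^A(F)=\sup_{x\in F}\Dim^A(x)$, the corollary reduces to the single claim
\begin{equation*}
\sup_{T\in\Sigma^{\infty}}\dim^{S,A}(T)=\sup_{T\in\Sigma^{\infty}}\Dim^{S,A}(T)=1 .
\end{equation*}

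To prove this claim I would extend the similarity probability measure $\pi_S$ of (\ref{equ314}) to the product measure on $\Sigma^{\infty}$, which is $A$-computable since $\sdim(S)$ is. For the upper bound, arithmetic coding relative to $A$ gives $\K^A(T[0..j])\le l_S(T[0..j])+O(\log j)$ for every $T$; since $\pi_S(i)<1$ for each $i\in\Sigma$ we have $l_S(T[0..j])=\Omega(j)$, so the error term is negligible and $\Dim^{S,A}(T)\le 1$, whence also $\dim^{S,A}(T)\le 1$. For the matching lower bound it suffices to exhibit one coding sequence attaining it: take $T$ that is Martin-L\"of random relative to $A$ with respect to this product measure; by the Levin--Schnorr theorem $\K^A(T[0..j])\ge l_S(T[0..j])-O(\log j)$, so the ratio in (\ref{equ317}) and (\ref{equ318}) tends to $1$ and $\dim^{S,A}(T)=\Dim^{S,A}(T)=1$. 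As $S\colon\Sigma^{\infty}\to F$ is onto, $S(T)\in F$, and the relativized Theorem~\ref{theo32} then yields $\dim^A(S(T))=\Dim^A(S(T))=\sdim(S)$, which makes both suprema above equal $1$.

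Assembling the pieces, $\adim^A(F)=\aDim^A(F)=\sdim(S)$, hence $\sdim(S)=\dimh(F)\le\dimpack(F)\le\aDim^A(F)=\sdim(S)$, so $\dimh(F)=\dimpack(F)=\sdim(S)$; since $\dimh(F)$ does not depend on $S$, the value $\sdim(S)$ is the same for every IFS $S$ generating $F$, so $\sdim(F)$ is well defined and (\ref{equ321}) follows. The step requiring the most care is the central supremum computation $\sup_T\dim^{S,A}(T)=1$: the upper estimate needs an arithmetic-coding bound uniform enough to survive division by $l_S$, and the lower estimate needs the correct notion of algorithmic randomness for the non-uniform measure $\pi_S$ together with the Levin--Schnorr theorem. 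A point one must be careful \emph{not} to use is a ``correspondence principle for packing dimension'': that fails for $\Pi^0_1$ sets by Conidis's construction \cite{Coni08}, which is why $\dimpack(F)$ is handled by squeezing it between $\dimh(F)$ and $\aDim^A(F)$ rather than by equating it directly with $\aDim^A(F)$.
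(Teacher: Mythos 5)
Your proof is correct and takes essentially the same route as the paper: fix an oracle $A$ via Lacombe's theorem, relativize Theorem~\ref{theo32} and Hitchcock's correspondence principle (using Kamo--Kawamura to get that $F$ is $\Pi^0_1$ relative to $A$), and sandwich $\dimpack(F)$ between $\dimh(F)$ and $\aDim^A(F)$, both of which reduce via absolute stability to $\sdim(S)$. The one place you supply detail the paper leaves implicit is the supremum computation $\sup_T\dim^{S,A}(T)=\sup_T\Dim^{S,A}(T)=1$ (arithmetic coding for the upper bound, a $\pi_S$-random coding sequence plus Levin--Schnorr for the lower bound), which the paper's chain of equalities simply asserts.
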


\begin{proof}
Let $F$ and $S$ be as given. By Lacombe's theorem there is an
 oracle
$A\subseteq \N$ relative to which $S$ is computable. It follows by a
theorem by Kamo and Kawamura \cite{KK99}\ that the set $F$ is
$\Pi^0_1$ relative to $A$, whence the relativization of
(\ref{equ311}) tells us that
\begin{equation}\label{equ322}
\dimh^A(F)=\sup_{x\in F}\dim^A(x).\end{equation} We then have

\begin{eqnarray*}
\dimh(F)&\le&\dimpack(F)\\&=&\dimpack^A(F)\\
&\le&\cDim^A(F)\\&=&\sup_{x\in F}\Dim^A(x)\\
&=^{(\ref{equ320})}&\sup_{T\in\Sigma^{\infty}}\sdim(S)\Dim^{S,A}(T)\\
&=&\sdim(S) \\
&=&\sup_{T\in\Sigma^{\infty}}\sdim(S)\dim^{S,
A}(T)\\
&=^{(\ref{equ319})}&\sup_{x\in F}\dim^A(x)\\
&=^{(\ref{equ322})}&\dimh^A(F)\\
&=&\dimh(F),\end{eqnarray*} so (\ref{equ321}) holds.
\end{proof}

Intuitively, Theorem \ref{theo32}\ is stronger than its Corollary
\ref{cor34}, because Theorem \ref{theo32}\ gives a complete account
of ``where the dimension comes from''.

\subsection{Dimension Level Sets}\label{subsec:35}

The dimension level sets $\DIM^{\alpha}$ and  $\DIM_{\str}^{\alpha}$
defined in (\ref{equ34}) and (\ref{equ35}) have been the focus of
several investigations. It was shown in \cite{DISS, ESDAICC}\ that,
for all $0\le \alpha\le n$,
\begin{equation*}\cdim(\DIM^{\alpha})=\dimh(\DIM^{\alpha})=\alpha\end{equation*} and
\begin{equation*}\cDim(\DIM_{\str}^{\alpha})=\dimpack(\DIM_{\str}^{\alpha})=\alpha.\end{equation*}

Hitchcock, Terwijn, and the first author \cite{HiLuTe07}\
investigated the complexities of these dimension level sets from the
viewpoint of descriptive set theory. Following standard usage
\cite{Mosc80}, we write $\bsigk$ and $\bpik$ for the classes at the
$k$th level ($k\in\Z^+$) of the Borel hierarchy of subsets of $\Rn$.
That is, $\bsig{1}$ is the class of all open subsets of $\Rn$, each
$\bpik$ is the class of all complements of sets in $\bsigk$, and
each $\bsig{k+1}$ is the class of all countable unions of sets in
$\bpik$. We also write $\Sigma_k^0$ and $\Pi_k^0$ for the classes of
the $k$th level of Kleene's arithmetical hierarchy of subsets of
$\Rn$. That is, $\Sigma_1^0$ is the class of all computably open
subsets of $\Rn$, each $\Pi_k^0$ is the class of all complements of
sets in $\Sigma_k^0$, and each $\Sigma_{k+1}^0$ is the class of all
effective (computable)
 unions
of sets in $\Pi_k^0$.

Recall that a real number $\alpha$ is {\sl$\Delta_2^0$-computable\/}
if there is a computable function $f: \N\to \Q$ such that
$\lim_{k\to\infty}f(k)=\alpha$.

The following facts were proven in \cite{HiLuTe07}.
\begin{enumerate}
\item $\DIM^0$ is $\Pi^0_2$ but not $\bsig{2}$.
\item For all $\alpha\in(0,n]$, $\DIM^{\alpha}$ is $\bpi{3}$ (and
$\Pi^0_3$ if $\alpha$ is $\Delta^0_2$-computable) but not
$\bsig{3}$.
\item $\DIM^n_{\str}$ is $\bpi{2}$ and $\Pi^0_3$ but not $\bsig{2}$.

\item   For all $\alpha\in[0,n)$, $\DIM^{\alpha}_{\str}$ is $\bpi{3}$  (and
$\Pi^0_4$ if $\alpha$ is $\Delta^0_2$-computable) but not
$\bsig{3}$.

\end{enumerate}

Weihrauch and the first author \cite{LutWei08}\ investigated the
connectivity properties of sets of the form
\begin{equation*}\DIM^I=\bigcup_{\alpha\in I}\DIM^{\alpha}, \end{equation*} where $I\subseteq
[0,n]$ is an interval. After making the easy observation that each
of the sets $\DIM^{[0,1)}$ and $\DIM^{(n-1,n]}$ is totally
disconnected, they proved that each of the sets $\DIM^{[0,1]}$ and
$\DIM^{[n-1,n]}$ is path-connected. These results are especially
intriguing in the Euclidean plane, where they say that extending
either of the sets $\DIM^{[0,1)}$ and $\DIM^{(1,2]}$ to include the
level set $\DIM^1$ transforms it from a totally disconnected set to
a path-connected set. This suggests that $\DIM^1$ is somehow a very
special subset of $\R^2$.

Turetsky \cite{Ture11}\ investigated this matter further and proved
that $\DIM^1$ is a connected set in $\Rn$. He also proved that
$\DIM^{[0,1)}\cup\DIM^{(1,2]}$ is not a path-connected subset of
$\R^2$.

   \subsection{ Dimensions of Points on Lines}




Since effective dimension is a pointwise property, it is natural to
study the dimension spectrum of a set $E\subseteq\Rn$, i.e., the set
$\spec(E)=\myset{\dim(x)}{x\in E}$. This study is far from obvious
even for sets as apparently simple as straight lines. We review in
this section the results obtained so far, mainly for the case of
straight lines in $\R^2$.

As noted in section \ref{subsec:35}, the set of points in $\R^2$ of
dimension exactly one is connected, while the set of points in
$\R^2$ with dimension less than 1 is totally disconnected. Therefore
every line in $\R^2$ contains a point of dimension 1. Despite the
surprising fact that there are lines in every direction that contain
no random points \cite{LL15}, the first author and N. Lutz have
shown that almost every point on any line with random slope has
dimension 2 \cite{LutLut17}. Still all these results leave open
fundamental questions about the structure of the dimension spectra
of lines, since they don't even rule out the possibility of a line
having the singleton set $\{1\}$ as its dimension spectrum.

Very recently this latest open question has been answered in the
negative. N. Lutz and Stull \cite{LutStu17}\ have proven the
following general lower bound on the dimension of points on lines in
$\R^2$.

\begin{theorem}\label{the401}{(N. Lutz and Stull \cite{LutStu17})}
For all $a,b,x\in\R$, \begin{equation*}\dim(x,ax+b)\ge
\dim^{a,b}(x)+\min\{\dim(a,b), \dim^{a,b}(x)\}.\end{equation*} In
particular, for almost every $x\in \R$, $\dim(x,
ax+b)=1+\min\{\dim(a,b),1\}$.\end{theorem}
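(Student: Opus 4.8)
The plan is to prove the general lower bound $\dim(x,ax+b)\ge \dim^{a,b}(x)+\min\{\dim(a,b),\dim^{a,b}(x)\}$ by analyzing, at each precision $r$, the Kolmogorov complexity $\K_r(x,ax+b)$ in terms of $\K_r^{a,b}(x)$ and $\K_r(a,b)$. The ``in particular'' clause then follows immediately: for almost every $x\in\R$ (in the sense of Lebesgue measure), $x$ is random relative to the oracle $(a,b)$, so $\dim^{a,b}(x)=1$, and the general bound gives $\dim(x,ax+b)\ge 1+\min\{\dim(a,b),1\}$; the matching upper bound $\dim(x,ax+b)\le \dim(x,a,b)\le \dim^{a,b}(x)+\dim(a,b)=1+\dim(a,b)$, together with $\dim(x,ax+b)\le 2$, caps it at $1+\min\{\dim(a,b),1\}$, so equality holds. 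So the whole content is in the general inequality.

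For the general inequality I would work at precision $r$ and set $L=(a,b)$. The easy direction is an upper bound estimate: given (near-optimal) approximations to $x$ and to $L$ at precision $r$, one computes an approximation to $(x,ax+b)$ at precision roughly $r$ (losing only $O(\log r)$ bits, since $ax+b$ is Lipschitz in the data on bounded regions), so $\K_r(x,ax+b)\le \K_r^{L}(x)+\K_r(L)+O(\log r)$ — but this only gives an \emph{upper} bound on $\dim$, which we already have. For the \emph{lower} bound on $\dim(x,ax+b)$ the key is a \emph{symmetry-of-information} / enumeration argument in the reverse direction. Suppose we are given a good approximation $p\approx(x,ax+b)$ at precision $r$, i.e.\ a point on (or near) the line, and suppose additionally we are given an approximation $\hat a$ to the slope at some precision $t\le r$. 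Then the set of lines through (near) $p$ with slope near $\hat a$ is a thin set, and one can use $p$ and $\hat a$ to pin down $b$ to precision roughly $r-t$ — this is the geometric heart: the ``width'' of the bundle of candidate lines collapses linearly as the slope is refined. Formalizing this yields, for a well-chosen $t$, an inequality of the shape
\begin{equation*}
\K_t(a,b)+\K_r^{a,b}(x)\le \K_r(x,ax+b)+\K_{r}(a,b)-\K_t(a,b)+o(r),
\end{equation*}
or more usefully a bound expressing that $\K_r(x,ax+b)$ cannot be much smaller than $\K_r^{a,b}(x)+\K_r(a,b)$ \emph{unless} $\K_r(a,b)$ is itself small at the relevant scale — which is exactly where the $\min$ comes from.

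The technically delicate part, and what I expect to be the main obstacle, is the \emph{oracle bookkeeping combined with the choice of the intermediate precision $t$}. One cannot simply take $t=r$: the naive symmetry-of-information argument ``$\K_r(x,ax+b)\ge \K_r^{a,b}(x)+\K_r(a,b)-o(r)$'' is \emph{false} in general (for instance when $x$ is computable from $(a,b)$), and the true statement must interpolate — the $\min\{\dim(a,b),\dim^{a,b}(x)\}$ term reflects that we can only ``recover'' as much information about $(a,b)$ from the point $(x,ax+b)$ as the point $x$ itself is ``rich'' enough to encode. Concretely I would, for each $r$, let $t=t(r)$ be the largest precision at which $\K_t(a,b)/t$ is close to $\dim(a,b)$ while $t$ stays below $r\cdot\dim^{a,b}(x)$ (roughly), run the geometric ``collapse the line-bundle'' argument at scales $t$ and $r$, and carefully track the relativization so that the complexity of $x$ used is genuinely $\K^{a,b}_r(x)$ and not $\K_r(x)$. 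Managing the $o(r)$ error terms uniformly across the $\liminf$, and handling the degenerate cases (lines through rational points, vertical-ish directions, $a$ or $b$ of low dimension) without the constants blowing up, is where the real work lies; the rest is assembling these scale-wise inequalities and passing to the $\liminf$ to read off the dimension bound.
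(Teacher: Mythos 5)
Your geometric intuition (the bundle of lines through a precision-$r$ approximation of $(x,ax+b)$ collapses as the slope is refined) is on target, and you correctly observe that the naive estimate $\K_r(x,ax+b)\ge\K_r^{a,b}(x)+\K_r(a,b)-o(r)$ is false in general, so the $\min$ must enter through some interpolation. However, the mechanism you propose --- choosing an intermediate precision $t=t(r)$ and working at scale $t$ for $(a,b)$ --- does not by itself yield the theorem, and it misses the ingredient the survey singles out as essential. The ``enumerate nearby lines and apply symmetry of information'' argument has to be run at the \emph{full} precision $r$, because $(x,ax+b)$ is only known to precision $r$ and one must distinguish candidate lines $ux+v=ax+b$ at that same scale; merely truncating $(a,b)$ to precision $t<r$ leaves the bits of $(a,b)$ between scales $t$ and $r$ completely uncontrolled, and the resulting lower bound degrades by up to $\K_r(a,b)-\K_t(a,b)$, which can be as large as $n(r-t)$.

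What actually makes the argument close is an auxiliary oracle $D$ (roughly, encoding the bits of $(a,b)$ beyond a suitable precision) that \emph{lowers} $\dim^D(a,b)$ to $\eta:=\min\{\dim(a,b),\dim^{a,b}(x)\}$ while preserving $\dim^{D,a,b}(x)=\dim^{a,b}(x)$. Relative to $D$ one is in the favourable regime $\dim^D(a,b)\le\dim^{D,a,b}(x)$, so the geometric lemma gives $\dim^D(x,ax+b)\ge\dim^{D,a,b}(x)+\dim^D(a,b)=\dim^{a,b}(x)+\eta$; and since relativizing can only decrease dimension, $\dim(x,ax+b)\ge\dim^D(x,ax+b)$, so the bound holds unrelativized. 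This is exactly what the discussion after Theorem~\ref{the401} means by ``condition (i) can only be proven in a particular relativized world whereas the conclusion of the theorem still holds for every oracle.'' Your precision-$t$ bookkeeping resurfaces inside the construction of $D$, but the oracle is indispensable: without it, the case $\dim(a,b)>\dim^{a,b}(x)$ cannot be handled. (A small slip: knowing $(x,ax+b)$ to precision $r$ and $a$ to precision $t\le r$ pins down $b$ only to precision roughly $t$, not $r-t$.) Your derivation of the ``in particular'' clause from the general inequality is correct.
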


Taking $x_1=0$ and $x_2$ a Martin-L\"of random real relative to
$(a,b)$, Theorem \ref{the401}\ gives us two points in the line, $(0,
b)$ and $(x_2, ax_2+b)$, whose dimensions differ by at least one, so
the dimension spectrum cannot be a singleton.

We briefly sketch here the main intuitions behind the (deep) proof
of Theorem \ref{the401}, fully based on algorithmic information
theory. Theorem \ref{the401}'s aim is to connect $\dim(x, ax+b)$
with $\dim(a,b,x)$ (i.e., a dimension in $\R^2$ with a dimension in
$\R^3$). Notice that in the case $\dim(a,b)\le \dim^{a,b}(x)$ the
theorem's conclusion is close to saying $\dim(x, ax+b)\ge
\dim(a,b,x)$.

The proof is based on the property that says that under the
following two conditions
\begin{enumerate}
\renewcommand{\labelenumi}{(\roman{enumi}) }
\item\label{c1} $\dim(a,b)$ is small
\item\label{c2} whenever $ux+v=ax+b$, either $\dim(u,v)$ is large or $(u,v)$ is close to $(a,b)$
\end{enumerate}
it holds that $\dim(x, ax+b)$ is close to $\dim(a,b,x)$.

There is an extra ingredient to finish this intuition.
While condition (ii) can be shown to hold in general, condition (i)
can only be proven in a particular relativized world whereas the
conclusion of the theorem still holds for every oracle.

 N. Lutz and Stull \cite{LS17}\ have also shown that the dimension
spectrum of a line is always infinite, proving the following two
results. The first theorem proves that if $\dim(a,b)=\Dim(a,b)$ then
the corresponding line contains a length one interval.

\begin{theorem}\label{the402}{(N. Lutz and Stull \cite{LS17})}
Let $a,b\in\R$ satisfy that $\dim(a,b)=\Dim(a,b)$. Then for every
$s\in [0,1]$ there is a point $x\in\R$ such that
$\dim(x,ax+b)=s+\min\{\dim(a,b), 1\}$. \end{theorem}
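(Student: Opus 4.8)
The plan is to use the point-to-set machinery together with the line lower bound of Theorem \ref{the401} and a "pigeonhole on precisions'' argument that exploits the hypothesis $\dim(a,b)=\Dim(a,b)$. Write $d=\min\{\dim(a,b),1\}$ and fix $s\in[0,1]$; we must produce $x\in\R$ with $\dim(x,ax+b)=s+d$. The upper bound $\dim(x,ax+b)\le s+d$ and the lower bound $\dim(x,ax+b)\ge s+d$ will be handled separately, with the lower bound being essentially immediate from Theorem \ref{the401} once $x$ is chosen so that $\dim^{a,b}(x)=s$: indeed, $\dim(x,ax+b)\ge\dim^{a,b}(x)+\min\{\dim(a,b),\dim^{a,b}(x)\}=s+\min\{\dim(a,b),s\}$, and when $s\le\dim(a,b)$ this already gives $2s$, so one actually wants to be more careful and choose $x$ with $\dim^{a,b}(x)=s$ but also $\Dim^{a,b}(x)$ small — in fact one should arrange $\K^{a,b}_r(x)\approx sr$ along a sequence of precisions realizing the liminf while controlling $\K^{a,b}_r(x)$ on the complementary precisions so that the contribution of the second term never exceeds what is needed.

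The construction of $x$ is the heart of the matter. First I would build, relative to the oracle $(a,b)$, a real $x$ whose complexity function $\K^{a,b}_r(x)$ oscillates in a controlled way: it should track $sr$ on a sparse increasing sequence $r_1<r_2<\cdots$ (so that $\dim^{a,b}(x)=s$) and should be as large as $r$ (i.e., $x$ behaves like a random real relative to $(a,b)$) on long intermediate blocks, chosen so that $\Dim^{a,b}(x)=1$. The standard technique here is the Hausdorff-style coding that produces reals of prescribed effective dimension, interleaving blocks of "incompressible'' bits with blocks of zeros, adjusting block lengths to pin down the liminf and limsup; this is a routine but careful diagonalization over the universal machine relative to $(a,b)$. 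The reason the hypothesis $\dim(a,b)=\Dim(a,b)$ matters is that it forces $\K^{(a,b)}_r$ to grow linearly with the \emph{same} slope along \emph{all} precisions, which is exactly what is needed so that the second term $\min\{\dim(a,b),\dim^{a,b}(x)\}$ in Theorem \ref{the401} does not spuriously inflate $\dim(x,ax+b)$ at the precisions where $\K^{a,b}_r(x)$ is large: on those precisions the geometry of the line, together with the (relativized) point-to-set / Kolmogorov-complexity bounds of the kind used to prove Theorem \ref{the401}, lets one bound $\K_r(x,ax+b)$ from above by roughly $\K_r(x)+\K_r(a,b)$ minus the overlap, and the balanced growth of $\K_r(a,b)$ is what keeps this sum on the right line $y=(s+d)r$ in the limit.

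Thus the key steps, in order, are: (1) show the upper bound $\dim(x,ax+b)\le\dim(x)+\min\{\Dim(a,b),\dim(x)\}$ (or a suitable relativized variant), a dual inequality to Theorem \ref{the401} obtained from the symmetric Kolmogorov-complexity estimate — I expect this to require the subadditivity $\K_r(x,y)\le\K_r(x)+\K^x_r(y)+O(\log r)$ plus a bound $\K^{x,(a,b)}_r(ax+b)\le\K^{a,b}_r(a,b)+O(\log r)$ coming from the computability of the line map; (2) using $\dim(a,b)=\Dim(a,b)$, collapse both the upper and lower estimates to the single quantity $s+d$; (3) carry out the oracle-relative diagonalization constructing $x$ with $\dim^{a,b}(x)=s$ and $\Dim^{a,b}(x)$ small enough that the upper bound from step (1) gives exactly $s+d$; (4) apply Theorem \ref{the401} (relativized, if necessary, as in the remark after that theorem that "the conclusion of the theorem still holds for every oracle'') to get the matching lower bound. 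The main obstacle is step (1) together with the bookkeeping in step (3): one must verify that the upper bound on $\K_r(x,ax+b)$ holds at \emph{every} precision, including the "random'' blocks of $x$, and this is where the hypothesis $\dim(a,b)=\Dim(a,b)$ is genuinely used — without it, the complexity of $(a,b)$ could be small at precisely the bad precisions and the line point could fail to have dimension $s+d$. I would expect the honest proof to reuse, in relativized form, the geometric lemmas behind Theorem \ref{the401} rather than re-deriving them.
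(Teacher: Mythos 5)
This survey does not actually reproduce the proof of Theorem~\ref{the402}; it cites \cite{LS17}. So let me assess your proposal on its own merits against what is needed.

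There is a genuine gap in your lower bound, and you half-notice it yourself but then obscure it. Writing $e=\min\{\dim(a,b),1\}$, the target is $\dim(x,ax+b)=s+e$ for \emph{every} $s\in[0,1]$, including $s<e$. If you choose $x$ with $\dim^{a,b}(x)=s$ and invoke Theorem~\ref{the401}, you get only
$\dim(x,ax+b)\ge s+\min\{\dim(a,b),s\}$, which equals $2s$ whenever $s\le\dim(a,b)$. For $s<e$ this is strictly less than the target $s+e$, so Theorem~\ref{the401} (even relativized) simply does not deliver the required lower bound in this regime. Your attempted fix --- ``choose $x$ with $\dim^{a,b}(x)=s$ but also $\Dim^{a,b}(x)$ small,'' followed a sentence later by ``long intermediate blocks, chosen so that $\Dim^{a,b}(x)=1$'' --- is internally contradictory, and in any case controlling $\Dim^{a,b}(x)$ bears on the \emph{upper} bound (via the chain rule $\dim(x,ax+b)\le\dim(a,b,x)\le\Dim(a,b)+\Dim(x|a,b)$), not on the lower bound. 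The actual proof in \cite{LS17} requires a strictly stronger, precision-by-precision, oracle-parametrized Kolmogorov-complexity lemma behind Theorem~\ref{the401} (of the shape ``if $\K_r(a,b)\le(\eta+\varepsilon)r$ and a geometric uniqueness condition holds, then $\K_r(x,ax+b)\ge\K_r^{a,b}(x)+\eta r-o(r)$'') applied with a carefully chosen oracle that reduces the apparent complexity of $(a,b)$; the regularity hypothesis $\dim(a,b)=\Dim(a,b)$ is what makes the oracle reduction uniform over all $r$. Simply ``reusing Theorem~\ref{the401} in relativized form,'' as you propose, cannot close the gap because the bound $\min\{\dim(a,b),\dim^{a,b}(x)\}$ is the wrong quantity when $\dim^{a,b}(x)<e$.

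Your upper bound step also needs repair, though that is a smaller issue. The inequality $\K_r^{x,(a,b)}(ax+b)\le\K_r^{a,b}(a,b)+O(\log r)$ that you invoke is trivially $O(\log r)$ and does not substitute correctly into $\K_r(x,ax+b)\le\K_r(x)+\K_r^{x}(ax+b)+O(\log r)$ (the relevant term is $\K_r^{x}(ax+b)$ without the $(a,b)$ oracle, which is not small). The correct route to the upper bound $s+e$ is to combine two chain-rule estimates: $\dim(x,ax+b)\le\dim(a,b,x)\le\Dim(a,b)+\Dim(x|a,b)$, and $\dim(x,ax+b)\le\dim(x)+\Dim(ax+b\,|\,x)\le\dim(x)+1$, arranging the construction of $x$ so that $\dim(x)=\Dim(x)=\Dim(x|a,b)=s$; the hypothesis $\dim(a,b)=\Dim(a,b)=d$ then collapses the first estimate to $s+d$ and the minimum of the two gives $s+\min\{d,1\}$. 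So the upper-bound idea can be salvaged, but the lower-bound gap for $s<\min\{\dim(a,b),1\}$ is a real missing idea, not mere bookkeeping.
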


The second result proves that all spectra of lines are infinite.

\begin{theorem}\label{the43}{(N. Lutz and Stull \cite{LS17})}
Let $L_{a,b}$ be any line in $\R^2$. Then the dimension spectrum
$\spec(L_{a,b})$ is infinite.
\end{theorem}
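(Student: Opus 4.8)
The plan is to deduce the infinitude of $\spec(L_{a,b})$ from the lower bound in Theorem \ref{the401} by exhibiting infinitely many points on $L_{a,b}$ whose dimensions form an infinite set. Write $L_{a,b}=\{(x,ax+b):x\in\R\}$. By Theorem \ref{the401}, for every $x\in\R$ we have $\dim(x,ax+b)\ge \dim^{a,b}(x)+\min\{\dim(a,b),\dim^{a,b}(x)\}$, and moreover $\dim(x,ax+b)=1+\min\{\dim(a,b),1\}$ for almost every $x$. The first thing I would do is dispose of the easy case: if $\dim(a,b)<1$, then there are points on the line of dimension exactly $1+\dim(a,b)$ (the generic value) and also the point $(0,b)$, which has $\dim(0,b)=\dim(b)\le 1\le 1+\dim(a,b)$; more usefully, one can relativize and vary the oracle, but the cleanest route is to keep $\dim(a,b)$ fixed and vary $x$ so that $\dim^{a,b}(x)$ ranges over infinitely many values in $[0,1]$, which forces $\dim(x,ax+b)$ to take infinitely many values via the formula $\dim^{a,b}(x)+\min\{\dim(a,b),\dim^{a,b}(x)\}$ (this expression is strictly increasing in $\dim^{a,b}(x)$). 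Since for each target value $s\in[0,1]$ there exists a real $x$ with $\dim^{a,b}(x)=s$ — this is the relativized version of the fact, cited in the excerpt from \cite{DISS,ESDAICC}, that dimension level sets are nonempty (indeed uncountable and dense) — we get infinitely many distinct points with infinitely many distinct lower bounds, hence infinitely many distinct dimensions.

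The delicate point is that Theorem \ref{the401} only gives a \emph{lower} bound on $\dim(x,ax+b)$, so choosing $x$ with a prescribed $\dim^{a,b}(x)$ controls $\dim(x,ax+b)$ from below but not from above. To turn the lower bounds into genuinely distinct dimension values I would argue as follows: pick an increasing sequence $s_0<s_1<s_2<\cdots$ in a small subinterval near $0$, say $s_k\to 0$, and for each $k$ choose $x_k$ with $\dim^{a,b}(x_k)=s_k$ chosen additionally so that $x_k$ contains no information about $(a,b)$ beyond this — concretely, take $x_k$ to be, relative to the oracle $(a,b)$, a sequence of ``density $s_k$'' with Kolmogorov complexity matching the lower bound, so that in fact $\dim(x_k,ax_k+b)=\dim^{a,b}(x_k)+\min\{\dim(a,b),\dim^{a,b}(x_k)\}=s_k+\min\{\dim(a,b),s_k\}$. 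For $s_k$ small enough (below $\dim(a,b)$, if $\dim(a,b)>0$, or handling $\dim(a,b)=0$ separately where the value is just $s_k$ and the $x_k$ with distinct $s_k$ already do the job) this equals $2s_k$, and these are all distinct. The equality $\dim(x_k,ax_k+b)=$ lower bound should follow from the upper bound direction implicit in N. Lutz and Stull's analysis: their argument, as sketched after Theorem \ref{the401}, actually pins down $\dim(x,ax+b)$ in terms of $\dim(a,b,x)$ under conditions (i)–(ii), and for a judiciously chosen $x_k$ one arranges $\dim(a,b,x_k)=\dim(a,b)+s_k$, giving a matching upper bound.

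The main obstacle, then, is not the combinatorics of ``infinitely many distinct values'' but rather securing an \emph{upper} bound on $\dim(x_k,ax_k+b)$ for the witnesses $x_k$, since Theorem \ref{the401} as stated is one-sided. I expect the proof in \cite{LS17} either (a) invokes the trivial upper bound $\dim(x,ax+b)\le \dim(a,b,x)\le \dim(a,b)+\dim^{a,b}(x)$ — which holds because $(x,ax+b)$ is computable from $(a,b,x)$ via a Lipschitz map, combined with the chain rule for Kolmogorov complexity — and then notes that for small $\dim^{a,b}(x)$ this upper bound $\dim(a,b)+\dim^{a,b}(x)$ coincides with the lower bound $\dim^{a,b}(x)+\min\{\dim(a,b),\dim^{a,b}(x)\}$; or (b) more carefully selects the $x_k$ to be relativized-random-like so that all the relevant quantities are exactly additive. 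Either way, once the matching bounds are in hand for a countable family $\{x_k\}$ with $\dim^{a,b}(x_k)$ all distinct, the set $\{\dim(x_k,ax_k+b):k\in\N\}$ is an infinite subset of $\spec(L_{a,b})$, completing the proof. I would write up route (a) as the clean argument and remark that it handles the cases $\dim(a,b)=0$ and $\dim(a,b)>0$ uniformly once one works in the regime $\dim^{a,b}(x)\le \dim(a,b)$ when $\dim(a,b)>0$, and uses the generic value $1+\min\{\dim(a,b),1\}$ together with small-dimension points otherwise.
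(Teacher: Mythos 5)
Your overall framework is right: Theorem~\ref{the401} is only a lower bound, so you correctly identify that the work consists in exhibiting matching upper bounds for a countable family of well-chosen points. However, the specific mechanism you propose has two errors that together leave a genuine gap.

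First, the claimed upper bound $\dim(a,b,x)\le\dim(a,b)+\dim^{a,b}(x)$ does not follow from the chain rule. Theorem~\ref{theo46} gives $\dim(a,b,x)\le\dim(a,b)+\Dim(x|a,b)$, and Lemma~\ref{lem44} goes in the \emph{opposite} direction from what you need: $\Dim(x|a,b)\ge\Dim^{a,b}(x)\ge\dim^{a,b}(x)$. So your intermediate quantity $\dim^{a,b}(x)$ sits \emph{below} what the chain rule actually produces. To make the upper bound usable you must additionally arrange $\Dim(x|a,b)=\dim^{a,b}(x)$, i.e.\ equality in Lemma~\ref{lem44} together with $\dim^{a,b}(x)=\Dim^{a,b}(x)$; this is plausible but not free, and your write-up asserts it without justification.

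Second, and more importantly, you place your witnesses $x_k$ in exactly the wrong regime. With $s=\dim^{a,b}(x)<\dim(a,b)$ the lower bound from Theorem~\ref{the401} is $s+\min\{\dim(a,b),s\}=2s$, while the (corrected) upper bound is $\dim(a,b)+s$; these do \emph{not} coincide unless $s=\dim(a,b)$. Your assertion that ``for small $\dim^{a,b}(x)$ this upper bound \ldots\ coincides with the lower bound'' is simply false when $\dim(a,b)>0$. The regime where the two bounds meet is $s\ge\dim(a,b)$, giving the common value $s+\dim(a,b)$. Taking infinitely many $s_k\in[\dim(a,b),1]$ would repair the argument \emph{provided} $\dim(a,b)<1$ — but when $\dim(a,b)\ge 1$ the interval $[\dim(a,b),1]$ collapses to at most one point, so no infinite family is produced, and the proposal has nothing to say about this case. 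Theorem~\ref{the402} covers it only under the extra hypothesis $\dim(a,b)=\Dim(a,b)$, which is not assumed in Theorem~\ref{the43}. Handling lines with $\dim(a,b)\ge 1$ (and $\dim(a,b)\ne\Dim(a,b)$) requires a genuinely new idea beyond the matching-bounds scheme you outline, and that is where the substance of the actual proof lies.
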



\section{Mutual and Conditional Dimensions}\label{sec:4}


Just as the dimension of a point $x$ in Euclidean space is the
asymptotic density of the algorithmic information in $x$, the mutual
dimension between two points $x$ and $y$ in Euclidean spaces is the
asymptotic density of the algorithmic information shared by $x$ and
$y$. In this section, we survey this notion and the data processing
inequalities, which estimate the effect of computable functions on
mutual dimension. We also survey the related notion of conditional
dimension.

\subsection{Mutual Dimensions}\label{subsec:41} The {\sl mutual (algorithmic) information\/} between two rational points $p\in\Q^m$ and
$q\in\Q^n$ is
\begin{equation*}\miI(p:q)=\K(p)-\K(p|q).\end{equation*} This notion, essentially due to Kolmogorov \cite{Kolm65}, is an analog of mutual entropy in
Shannon information theory \cite{Shan48,CovTho06, LiVit09}.
Intuitively, $\K(p|q)$ is the amount of information in $p$ {\sl
not\/} contained in $q$, so $\miI(p:q)$ is the amount of information
in $p$ that {\sl is\/} contained in $q$. It is well known
\cite{LiVit09}\ that, for all $p\in\Q^m$ and $q\in\Q^n$,
\begin{equation}\label{equ51n}\miI(p:q)\approx
\K(p)+\K(q)-\K(p,q)\end{equation} in the sense that the magnitude of
the difference between the two sides of (\ref{equ41})\ is
$o(\min\{\K(p), \K(q)\})$. This fact is called {\sl symmetry of
information}, because it immediately implies that $\miI(p:q)\approx
\miI(q:p)$.

The ideas in the rest of this section were introduced by Case and
the first author \cite{CasLut15}. In the spirit of (\ref{equ21})
they defined the {\sl mutual information\/} between sets
$E\subseteq\R^m$ and $F\subseteq\Rn$ to be
\begin{equation*}\miI(E:F)=\min\myset{\miI(p:q)}{p\in\Q^m\cap E\ \mathrm{and}\
q\in\Q^n\cap F}.\end{equation*} This is the amount of information
that rational points $p$ and $q$ {\sl must\/} share in order to be
in $E$ and $F$, respectively. Note that, for all $E_1,
E_2\subseteq\R^m$ and $F_1, F_2\subseteq\Rn$,
\begin{equation*}\bigl[(E_1\subseteq E_2)\ \mathrm{and}\
(F_1\subseteq F_2)\bigr]\implies \miI(E_1:F_1)\ge
\miI(E_2:F_2).\end{equation*}

The {\sl mutual information\/} between two points $x\in\R^m$ and
$y\in\Rn$ at {\sl precision\/} $r\in\N$ is
\begin{equation*}\miI_r(x:y)=\miI(\B_{2^{-r}}(x): \B_{2^{-r}}(y)).\end{equation*} This is the amount
of information that rational approximations of $x$ and $y$ {\sl
must\/} share, merely due to their proximities (distance less than
$2^{-r}$) to $x$ and $y$.

In analogy with (\ref{equ31}) and (\ref{equ32}), the {\sl lower\/}
and {\sl upper  mutual dimensions\/} between points $x\in\R^m$ and
$y\in\R^m$ are
\begin{equation}\label{equ41}\mdim(x:y)=\liminf_{r\to\infty}\frac{\miI_r(x:y)}{r}\end{equation}
and
\begin{equation}\label{equ42}\Mdim(x:y)=\limsup_{r\to\infty}\frac{\miI_r(x:y)}{r},\end{equation}
respectively.

The following theorem shows that the  mutual dimensions mdim and
Mdim have many of the properties that one should expect them to
have. The proof is involved and includes a modest generalization of
Levin's coding theorem \cite{Levi73a, Levi74}.

\begin{theorem}\label{the41}{(Case and J. Lutz \cite{CasLut15})} For
all $x\in\R^m$ and $y\in\Rn$, the following hold.
\begin{enumerate}
\item $\mdim(x:y)\le \min\{\dim(x), \dim(y)\}$.
\item $\Mdim(x:y)\le \min\{\Dim(x), \Dim(y)\}$.
\item $\mdim(x:x)=\dim(x)$.
\item $\Mdim(x:x)=\Dim(x)$.
\item $\mdim(x:y)=\mdim(y:x)$.
\item $\Mdim(x:y)=\Mdim(y:x)$.
\item\label{the417} $\dim(x)+\dim(y)-\Dim(x,y)\le \mdim(x:y)\le
\Dim(x)+\Dim(y)-\Dim(x:y)$.
\item\label{the418} $\dim(x)+\dim(y)-\dim(x,y)\le \Mdim(x:y)\le
\Dim(x)+\Dim(y)-\dim(x:y)$.
\item\label{the419} If $x$ and $y$ are independently random, then $\Mdim(x:y)=0$.

\end{enumerate}
\end{theorem}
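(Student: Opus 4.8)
The plan is to route the entire theorem through one technical lemma, a \emph{precision-$r$ symmetry of information}: for all $x\in\R^m$, $y\in\Rn$, and $r\in\N$,
\[
\miI_r(x:y)=\K_r(x)+\K_r(y)-\K_r(x,y)+O(\log r),
\]
where $\K_r(x,y)=\K(\B_{2^{-r}}(x,y))$ is the precision-$r$ complexity of $(x,y)$ regarded as a point of $\R^{m+n}$ (this agrees up to $O(\log r)$ with the complexity of the box $\B_{2^{-r}}(x)\times\B_{2^{-r}}(y)$, since that box lies between $\B_{2^{-r}}(x,y)$ and $\B_{2^{-r+1}}(x,y)$ and a unit shift of precision changes $\K$ by only $O(\log r)$). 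The classical symmetry of information (\ref{equ51n}) --- equivalently, the chain rule $\K(p,q)=\K(p)+\K(q|p)+O(\log(\K(p)+\K(q)))$ --- holds for \emph{individual} rational points. The difficulty in lifting it to balls is that $\miI_r(x:y)$, $\K_r(x)$, $\K_r(y)$, $\K_r(x,y)$ are each defined as an \emph{infimum} of Kolmogorov complexities over the rational points of the relevant region, and these infima need not be witnessed by compatible points --- in particular the pair minimizing $\miI$ need not minimize any of $\K(p)$, $\K(q)$, $\K(p,q)$. Overcoming this is exactly where the promised modest generalization of Levin's coding theorem enters: one replaces the infima defining $\K_r(x)$, $\K_r(y)$, $\K_r(x,y)$ by negative logarithms of a priori probabilities of the corresponding balls, which are \emph{sums} over rationals and so obey a chain rule making the classical computation go through. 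Establishing this lemma --- together with the routine bookkeeping that $\K$ on a ball is stable under $O(1)$-changes of center and radius and under unit shifts of precision --- is the heart of the argument and the step I expect to be the main obstacle.

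Granting the lemma, parts (1)--(8) follow by elementary $\liminf$/$\limsup$ algebra applied to the displayed identity after division by $r$. For (1) and (2): letting $p^*,q^*$ minimize $\K$ over $\B_{2^{-r}}(x)$ and $\B_{2^{-r}}(y)$, I get $\miI_r(x:y)\le\miI(p^*:q^*)=\K(p^*)-\K(p^*|q^*)\le\K_r(x)+O(1)$, and by the rational-point symmetry of information also $\miI(p^*:q^*)\le\K_r(y)+o(r)$; taking $\liminf$, resp.\ $\limsup$, of the quotient by $r$ gives $\mdim(x:y)\le\min\{\dim(x),\dim(y)\}$ and $\Mdim(x:y)\le\min\{\Dim(x),\Dim(y)\}$. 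For (5) and (6): the right-hand side of the lemma is symmetric in $x$ and $y$ up to $O(\log r)$ (as $\K_r(x,y)=\K_r(y,x)+O(1)$), so $\miI_r(x:y)=\miI_r(y:x)+O(\log r)$ and the two mutual dimensions coincide. For (7) and (8): with $a_r=\K_r(x)/r$, $b_r=\K_r(y)/r$, $c_r=\K_r(x,y)/r$ (so $\liminf a_r=\dim(x)$, $\limsup a_r=\Dim(x)$, similarly for $b,c$) and $\miI_r(x:y)/r=a_r+b_r-c_r+o(1)$, the standard bounds $\liminf(a+b-c)\ge\liminf a+\liminf b-\limsup c$, $\liminf(a+b-c)\le\limsup a+\limsup b-\limsup c$, $\limsup(a+b-c)\ge\liminf a+\liminf b-\liminf c$, $\limsup(a+b-c)\le\limsup a+\limsup b-\liminf c$ give
\[
\dim(x)+\dim(y)-\Dim(x,y)\le\mdim(x:y)\le\Dim(x)+\Dim(y)-\Dim(x,y)
\]
and the companion chain $\dim(x)+\dim(y)-\dim(x,y)\le\Mdim(x:y)\le\Dim(x)+\Dim(y)-\dim(x,y)$, i.e.\ (7) and (8) (with $\Dim(x,y)$, $\dim(x,y)$ the joint strong and joint dimensions of $(x,y)\in\R^{m+n}$). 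For (3) and (4): specializing $y=x$ and using $\K_r(x,x)=\K_r(x)+O(\log r)$ (a rational approximation of $(x,x)$ may be taken on the diagonal, and conversely $\K_r(x,x)\ge\K_r(x)-O(1)$), the lemma gives $\miI_r(x:x)=\K_r(x)+O(\log r)$, so $\mdim(x:x)=\dim(x)$ and $\Mdim(x:x)=\Dim(x)$, consistent with the bounds already obtained.

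Part (9) then drops out of (8): ``independently random'' means $(x,y)$ is \MartinLof\ random in $\R^{m+n}$, so $\Dim(x)=m$, $\Dim(y)=n$, $\dim(x,y)=m+n$, and the upper bound in (8) forces $\Mdim(x:y)\le\Dim(x)+\Dim(y)-\dim(x,y)=0$; since $\K(p|q)\le\K(p)+O(1)$ makes $\miI_r(x:y)\ge -O(1)$ for every $r$, we also have $\Mdim(x:y)\ge 0$, and therefore $\Mdim(x:y)=0$.
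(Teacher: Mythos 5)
Your plan --- establish a precision-$r$ symmetry of information
\begin{equation*}
\miI_r(x:y)=\K_r(x)+\K_r(y)-\K_r(x,y)+O(\log r)
\end{equation*}
and then derive items (1)--(9) by $\liminf/\limsup$ algebra --- is consistent with the one hint the survey gives about the actual proof in \cite{CasLut15}, namely that it ``is involved and includes a modest generalization of Levin's coding theorem.'' You have also correctly located the crux: each of the four quantities in the proposed identity is an infimum of string complexities over a set of rational witnesses, and these infima need not be attained compatibly, so the string-level symmetry of information does not transfer directly. Replacing each ``minimum of $\K$'' by ``$-\log$ of the total a priori probability of the ball'' turns infima into sums, to which the chain rule for semimeasures applies, and a Levin-type coding theorem is precisely what converts those $-\log$-semimeasure quantities back to $\K_r(\cdot)$ within $O(\log r)$. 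However, that lemma carries essentially the full technical weight of the theorem, and in your writeup it is motivated rather than proved; this is the genuine gap. Everything downstream of the lemma is routine and you carry it out correctly.

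Two smaller remarks. First, your treatment of (5)--(6) really must go through the lemma: a direct argument would not work, because the pair $(p,q)$ realizing the minimum in the definition of $\miI_r(x:y)$ can have $\K(p),\K(q)$ far larger than $r$, so the $O(\log(\K(p)+\K(q)))$ error in the string-level symmetry $I(p:q)\approx I(q:p)$ need not be $o(r)$; the lemma's right-hand side, being an expression in $\K_r(\cdot)$ alone and hence bounded by $O(r)$, sidesteps this. (So it is one more reason the lemma is indispensable, and worth saying explicitly.) Second, your reading of the right-hand sides of (7) and (8) as $\Dim(x,y)$ and $\dim(x,y)$ --- the paper's $\Dim(x:y)$ and $\dim(x:y)$ being evident typos, since those are not defined symbols --- is the one that matches the four standard $\liminf/\limsup$ inequalities you invoke, and your computations there, as well as for (1)--(4) and (9), check out.
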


The expressions $\dim(x,y)$ and $\Dim(x,y)$ in \ref{the417}\ and
\ref{the418}\ above refer to the  dimensions of the point
$(x,y)\in\R^{m+n}$. In \ref{the419}\ above, $x$ and $y$ are {\sl
independently random\/} if $(x,y)$ is a Martin-L\"of random point in
$\R^{m+n}$.

More properties of mutual dimensions may be found in
\cite{CasLut15,CasLut15b}.

   \subsection{Data Processing Inequalities} The data processing
   inequality of Shannon information theory \cite{CovTho06}\ says
   that, for any two probability spaces $X$ and $Y$, any set
   $Z$, and any function $f: X\to Z$, \begin{equation}\label{equ43}
   \miI(f(X);Y)\le \miI(X;Y),\end{equation} i.e., the induced
   probability space $f(X)$ obtained by ``processing the information
   in $X$ through $f$'' has no greater mutual entropy with $Y$ than  $X$
   has with $Y$. More succintly, $f(X)$ tells us no more about $Y$
   than $X$ tells us about $Y$. The data processing inequality of
   algorithmic information theory \cite{LiVit09}\ says that, for any
   computable partial function $f: \strings\to \strings$, there is a
   constant $c_f\in\N$ (essentially the number of bits in a program
   that computes $f$) such that, for all strings $x\in\dom f$ and
   $y\in\strings$,
   \begin{equation}\label{equ44}\miI(f(x):y)\le\miI(x:y)+c_f.\end{equation}
   That is, modulo the constant $c_f$, $f(x)$ contains no more
   information about $y$ than $x$ contains about $y$.

   The data processing inequality for the  mutual
   dimension $\mdim$ should say that every nice function $f:\R^m\to
   \Rn$ has the property that, for all $x\in\R^m$ and
   $y\in\R^k$,\begin{equation}\label{equ45}\mdim(f(x):y)\le
   \mdim(x:y).\end{equation}
But what should ``nice'' mean? A nice function certainly should be
computable in the sense of computable analysis \cite{BC06, Ko91,
Wei00}. But this is not enough. For example, there is a function
$f:\R\to\R^2$ that is computable and {\sl space-filling\/} in the
sense that $[0,1]^2\subseteq \range f$ \cite{Saga94,CDM12}. For such
a function, choose $x\in\R$ such that $\dim(f(x))=2$, and let
$y=f(x)$. Then
\begin{eqnarray*}\mdim(f(x):y)&=&\mdim(y:y)\\&=&\dim(y)\\&=&2\\&>&1\\&\ge&\dim(x)\\&\ge&\mdim(x:y),\end{eqnarray*}
so (\ref{equ45}) fails.

Intuitively, the above failure of (\ref{equ45}) occurs because the
function $f$ is extremely sensitive to its input, a property that ``
nice'' functions do not have. A function $f:\R^m\to \Rn$ is {\sl
Lipschitz\/} if there is a real number $c>0$ such that, for all
$x_1, x_2\in\R^m$, \begin{equation*}|f(x_1)-f(x_2)|\le c
|x_1-x_2|.\end{equation*} The following data processing inequalities
show that computable Lipschitz functions are ``nice''.

\begin{theorem}\label{the42}{(Case and J. Lutz \cite{CasLut15})} If
$f:\R^m\to \Rn$ is computable and Lipschitz, then, for all
$x\in\R^m$ and $y\in\R^k$,\begin{equation*}\mdim(f(x):y)\le
\mdim(x:y)\end{equation*} and
\begin{equation*}\Mdim(f(x):y)\le \Mdim(x:y)\end{equation*}\end{theorem}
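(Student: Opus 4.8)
The plan is to prove the two inequalities $\mdim(f(x):y) \le \mdim(x:y)$ and $\Mdim(f(x):y) \le \Mdim(x:y)$ simultaneously by establishing, for each precision $r$, a pointwise estimate of the form
\begin{equation*}
\miI_r(f(x):y) \le \miI_{r - c'}(x:y) + o(r),
\end{equation*}
where $c'$ depends only on the Lipschitz constant $c$ of $f$ and its dimension; dividing by $r$ and taking $\liminf$ (resp.\ $\limsup$) then yields the result, since shifting the precision argument by an additive constant does not affect these limits. So the real work is the precision-$r$ estimate, and I would carry it out in three steps.

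\emph{Step 1 (geometry of the Lipschitz condition).} First I would observe that if $q \in \Q^m$ is a rational point with $|q - x| < 2^{-r}$, then $|f(q) - f(x)| \le c|q - x| < c\,2^{-r} \le 2^{-(r - \lceil \log c\rceil)}$. Thus $f$ maps a good rational approximation of $x$ to a nearly-good approximation of $f(x)$. The catch is that $f(q)$ need not be rational, so I would instead use the computability of $f$ in the sense of computable analysis: there is an oracle Turing machine that, on input $(q, s)$, outputs a rational point within $2^{-s}$ of $f(q)$. Feeding it $(q, r+1)$ produces a genuine rational point $\tilde q \in \Q^n$ with $|\tilde q - f(x)| < 2^{-(r - c'')}$ for a suitable constant $c''$, and the program length needed to describe $\tilde q$ from $q$ (with the machine for $f$ and the precision $r$ as additional inputs) is $O(\log r)$.

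\emph{Step 2 (from approximations to mutual-information estimates).} Recall $\miI_r(x:y) = \min\{\K(p) - \K(p \mid p') : p \in \Q^m, |p - x| < 2^{-r},\ p' \in \Q^n, |p' - y| < 2^{-r}\}$, and similarly for $\miI_r(f(x):y)$ with a rational approximation of $f(x)$ in place of $p$. I would take a rational pair $(q, p')$ witnessing (nearly) the minimum defining $\miI_{r+1}(x:y)$, apply Step 1 to get $\tilde q$ approximating $f(x)$, and then bound $\K(\tilde q) - \K(\tilde q \mid p')$ above. The key subadditivity facts are $\K(\tilde q) \le \K(q) + \K(\tilde q \mid q) + O(\log \K(q)) \le \K(q) + O(\log r)$ and, in the other direction, $\K(\tilde q \mid p') \ge \K(q \mid p') - \K(q \mid \tilde q) - O(\log r)$; but the troublesome term here is $\K(q \mid \tilde q)$, since $\tilde q$ need not determine $q$. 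This is precisely the place where one must be careful, and it is where I expect the main obstacle to lie.

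\emph{Step 3 (handling the irreversibility of $f$).} The resolution, as in \cite{CasLut15}, is to not insist on recovering the same $q$. Instead of comparing $\miI_r(f(x):y)$ with $\miI_r(x:y)$ through a single fixed witness, I would argue that for \emph{every} rational approximation $q$ of $x$ at precision $r+1$, the induced $\tilde q$ satisfies $\K(\tilde q) - \K(\tilde q \mid p') \le \K(q) - \K(q\mid p') + O(\log r)$, using symmetry of information (\ref{equ51n}) to rewrite both mutual informations in the symmetric form $\K(\cdot) + \K(p') - \K(\cdot, p')$ and noting that $\K(\tilde q, p') \ge \K(q, p') - O(\log r)$ fails in general --- so one instead bounds $\K(\tilde q) - \K(\tilde q, p') \le \K(q) - \K(q, p') + O(\log r)$ by processing the \emph{pair} $(q, p')$ through the map $(q,p') \mapsto (\tilde q, p')$, which is the genuine data-processing step and does not require invertibility. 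Taking the minimum over all witnessing $q$ on the left then gives $\miI_{r - c'}(f(x):y) \le \miI_{r+1}(x:y) + O(\log r)$, which is the precision-$r$ estimate with $o(r)$ error. Dividing by $r$ and passing to $\liminf$ and $\limsup$ completes the proof. The one subtlety I would flag explicitly is that the constant $c'$ absorbs both $\lceil \log c \rceil$ from the Lipschitz bound and the precision loss in the computable-analysis evaluation of $f$, and that the $O(\log r)$ terms, being $o(r)$, are harmless in the limit --- this is exactly why Lipschitz (rather than merely computable) is the right hypothesis: it controls how much precision is lost, uniformly in $r$.
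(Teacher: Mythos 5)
Your overall strategy --- establishing a per-precision estimate $\miI_r(f(x):y) \le \miI_{r-c'}(x:y)+O(\log r)$ and then passing to $\liminf$ and $\limsup$ --- is the right one, and Steps 1 and 2 are sound: the Lipschitz constant controls the precision loss, and computability of $f$ produces a genuine rational $\tilde q$ near $f(x)$ at a cost of only $O(\log r)$ additional description bits (coming from having to specify the precision parameter).

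Step 3 contains a genuine gap. You correctly identify the troublesome term $\K(q\mid\tilde q)$ and correctly state the bound you need, namely $\K(\tilde q) - \K(\tilde q, p') \le \K(q) - \K(q,p') + O(\log r)$. But the justification you offer --- \emph{processing the pair} $(q,p')$ \emph{through} $(q,p') \mapsto (\tilde q, p')$ --- does not produce it. Data processing applied to that map gives $\miI\bigl((\tilde q,p'):v\bigr) \le \miI\bigl((q,p'):v\bigr)+O(\log r)$ for a third argument $v$, which is not the claimed inequality and does not specialize to it for any evident choice of $v$; nor does the cruder fact $\K(\tilde q,p')\le\K(q,p')+O(\log r)$ suffice, since the bound you need is a two-sided comparison. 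The correct argument is shorter: after symmetry of information, the desired inequality is equivalent to $\K(p'\mid \tilde q) \ge \K(p'\mid q) - O(\log r)$, and this holds because $\tilde q$ is computable from $(q,r)$ --- given $q$ and $r$ one first computes $\tilde q$ and then runs any program producing $p'$ from $\tilde q$, so $\K(p'\mid q, r) \le \K(p'\mid \tilde q)+O(1)$, and removing $r$ from the condition costs at most $O(\log r)$. Equivalently, you can skip the symmetry-of-information detour entirely and apply the string-level data processing inequality (\ref{equ44}) directly to the computable map $q \mapsto \tilde q$, with $r$ as an auxiliary input contributing the $O(\log r)$, to get $\miI(\tilde q:p')\le\miI(q:p')+O(\log r)$ in one step. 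With either fix, the rest of your argument goes through.
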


Several more theorems of this type and applications of these appear
in \cite{CasLut15}.

   \subsection{Conditional Dimensions}
A comprehensive theory of the  fractal dimensions of points in
Euclidean spaces requires not only the  dimensions $\dim(x)$ and
$\Dim(x)$ and the  mutual dimensions $\mdim(x:y)$ and $\Mdim(x:y)$,
but also the  conditional dimensions $\dim(x|y)$ and $\Dim(x|y)$
formulated by the first author and N. Lutz \cite{LutLut17}. We
briefly describe these formulations here.

The conditional Kolmogorov complexity $\K(p|q)$, defined for
rational points $p\in\Q^m$ and $q\in\Q^n$, is lifted to the
conditional  dimensions in the following four steps.

\begin{enumerate}
\item For $x\in\R^m$, $q\in\Q^n$, and $r\in\N$, the {\sl conditional
Kolmogorov complexity of $x$ at precision $r$ given $q$\/} is
\begin{equation*}\hat{\K}_r(x|q)=\min\myset{\K(p|q)}{p\in\Q^m\cap \B_{2^{-r}}(x)}.\end{equation*}
\item For $x\in\R^m$, $y\in\R^n$, and $r,s\in\N$, the {\sl
conditional Kolmogorov complexity\/} of $x$ {\sl at precision $r$
given $y$ at precision $s$ \/} is
\begin{equation*}\K_{r,s}(x|y)=\max\myset{\hat{\K}_r(x|q)}{q\in\Q^n\cap
\B_{2^{-s}}(y)}.\end{equation*}
\item For $x\in\R^m$, $y\in\R^n$, and $r\in\N$, the {\sl
conditional Kolmogorov complexity\/} of $x$ {\sl given $y$\/} at
precision $r$  is
\begin{equation*}\K_{r}(x|y)=\K_{r,r}(x|y).\end{equation*}
\item For $x\in\R^m$ and $y\in\R^n$, the {\sl lower\/} and
{\sl upper  conditional dimensions\/} of {\sl $x$ given $y$ \/} are
\begin{equation*}\dim(x|y)=\liminf_{r\to\infty}\frac{\K_r(x|y)}{r}\end{equation*}
and
\begin{equation*}\Dim(x|y)=\limsup_{r\to\infty}\frac{\K_r(x|y)}{r},\end{equation*}
respectively.

\end{enumerate}

Steps 1, 2, and 4 of the above lifting are very much in the spirit
of what has been done in section \ref{sec:2}, \ref{subsec:31}, and
\ref{subsec:41} above. Step 3 appears to be problematic, because
using the same precision bound $r$ for both $x$ and $y$ makes the
definition seem arbitrary and ``brittle''. However, the following
result shows that this is not the case.

\begin{theorem}{(\cite{LutLut17})} Let $s:\N\to\N$. If
$|s(r)-r|=o(r)$, then, for all $x\in\R^m$ and $y\in\R^n$,
\begin{equation*}\dim(x|y)=\liminf_{r\to \infty}\frac{\K_{r,s(r)}(x|y)}{r}\end{equation*}
and
\begin{equation*}\Dim(x|y)=\limsup_{r\to \infty}\frac{\K_{r,s(r)}(x|y)}{r}.\end{equation*}
\end{theorem}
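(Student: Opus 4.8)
The plan is to reduce the whole statement to a single estimate that compares $\K_{r,s}(x|y)$ with $\K_r(x|y)=\K_{r,r}(x|y)$ uniformly in $r$ and $s$, namely
\begin{equation*}
\bigl|\K_{r,s}(x|y)-\K_r(x|y)\bigr|\le n\,|s-r|+O\!\bigl(\log(|s-r|+2)\bigr)\qquad\text{for all }r,s\in\N,
\end{equation*}
where the implied constant does not depend on $r,s,x,y$. Granting this, I would substitute $s=s(r)$ and divide by $r$: since $|s(r)-r|=o(r)$, the term $n|s(r)-r|/r\to 0$, and because $|s(r)-r|+2=O(r)$ the term $O(\log(|s(r)-r|+2))/r=O(\log r)/r\to 0$ as well. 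Hence the sequences $\K_{r,s(r)}(x|y)/r$ and $\K_r(x|y)/r$ differ by a null sequence, so they have the same $\liminf$ and the same $\limsup$, which by definition are $\dim(x|y)$ and $\Dim(x|y)$. This gives both displayed identities at once.

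\textbf{Proving the estimate.} I would get it from two observations. First, \emph{monotonicity}: $\K_{r,s}(x|y)=\max\{\hat{\K}_r(x|q):q\in\Q^n\cap\B_{2^{-s}}(y)\}$ is a maximum over a ball that shrinks as $s$ grows, so $s\mapsto\K_{r,s}(x|y)$ is non-increasing; this already yields $\K_{r,s}(x|y)\le\K_r(x|y)$ when $s\ge r$ and $\K_{r,s}(x|y)\ge\K_r(x|y)$ when $s\le r$. Second, a \emph{rounding/reconstruction} bound for the opposite direction; I treat $s\ge r$, the case $s\le r$ being symmetric. Fix any $q'\in\Q^n\cap\B_{2^{-r}}(y)$. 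Every rational $q\in\B_{2^{-s}}(y)$ lies in $\B_{2^{-r+1}}(q')$, and a sufficiently fine dyadic grid (spacing comparable to $2^{-s}$) has only $2^{n|s-r|+O(1)}$ points inside that ball, at least one of which lies in $\B_{2^{-s}}(y)$; thus there is a string $w$ of length $n|s-r|+O(1)$ that, given $q'$, names such a $q$ (enumerate grid points of $\B_{2^{-r+1}}(q')$, take the $w$-th). Since $\hat{\K}_r(x|q)\le\K_{r,s}(x|y)$, a fixed machine that reads a self-delimiting copy of $w$, computes $q$ from $(q',w)$, and then runs the optimal program witnessing $\hat{\K}_r(x|q)$ produces a rational in $\Q^m\cap\B_{2^{-r}}(x)$ from $q'$; hence $\hat{\K}_r(x|q')\le\K_{r,s}(x|y)+n|s-r|+O(\log(|s-r|+2))$. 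Taking the supremum over $q'\in\Q^n\cap\B_{2^{-r}}(y)$ gives $\K_r(x|y)\le\K_{r,s}(x|y)+n|s-r|+O(\log(|s-r|+2))$. Combined with monotonicity this is exactly the estimate. (Equivalently one can route this through the standard chain inequality $\K(p|q')\le\K(p|q)+\K(q|q')+O(1)$ together with $\K(q|q')\le n|s-r|+O(\log(|s-r|+2))$ for such rationals.)

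\textbf{Main obstacle.} The only delicate point is the reconstruction step, and specifically that the description overhead must be $O(|s-r|)$ rather than $O(s)$: this is a purely geometric fact — the fine-grid points one needs to disambiguate all sit inside a ball whose radius is dictated by the \emph{coarser} precision $\min(r,s)$, while the grid spacing is dictated by the \emph{finer} precision $\max(r,s)$, so their number is $2^{n|s-r|+O(1)}$ with the $O(1)$ depending only on the ambient dimension. Once that count is right, the rest — monotonicity, the self-delimiting-encoding bookkeeping, and the final passage to $\liminf$ and $\limsup$ — is routine.
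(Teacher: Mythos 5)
Your overall strategy—reduce both identities to a single uniform estimate comparing $\K_{r,s}(x|y)$ with $\K_{r,r}(x|y)$, get one inequality from the monotonicity of $s\mapsto\K_{r,s}(x|y)$, get the other from a grid/reconstruction argument, and then pass to $\liminf$ and $\limsup$—is sound and is surely the intended argument (the paper states the theorem with a citation but no in-text proof, so there is nothing to compare against directly). The one place I would push back is the precise form of the error term. Your reconstruction step has a machine that, given $q'$ and a short self-delimiting string $w$, must output some $q\in\Q^n\cap\B_{2^{-s}}(y)$; to do so it has to enumerate dyadic grid points of spacing comparable to $2^{-\max(r,s)}$ near $q'$, and that spacing is not recoverable from $q'$ and $|w|$ alone (the rational $q'$ could be arbitrarily over-precise, so it does not reveal $r$, and $|w|$ only reveals $|s-r|$). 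Encoding the missing precision parameter costs an additional $O(\log(\min(r,s)+1))$ bits, so the honest uniform bound is more like $\bigl|\K_{r,s}(x|y)-\K_r(x|y)\bigr|\le n|s-r|+O(\log(|s-r|+2))+O(\log(\min(r,s)+1))$; in particular your parenthetical claim $\K(q|q')\le n|s-r|+O(\log(|s-r|+2))$ is slightly too strong as a uniform statement (take $r$ huge and $s=r+1$). None of this affects the theorem: with $s=s(r)$ and $|s(r)-r|=o(r)$ you have $\min(r,s(r))\le r$, so the extra term is $O(\log r)=o(r)$ and the final limit argument goes through verbatim. You should just state the estimate with the extra logarithmic term, or say explicitly that the machine is given $\min(r,s)$ as part of its (self-delimitingly encoded) input.
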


The following result is useful for many purposes.

\begin{theorem}{(chain rule for $\K_r$)} For all $x\in \R^m$ and
$y\in\Rn$,
\begin{equation}\label{equ3square}
\K_r(x,y) = \K_r(x|y)+\K_r(y)+o(r).
\end{equation}
 \end{theorem}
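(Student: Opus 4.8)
The plan is to lift the classical chain rule for Kolmogorov complexity of strings, $\K(x,y) = \K(x|y) + \K(y) + O(\log)$, to the Euclidean setting, where precision-$r$ information about a point $x\in\R^m$ plays the role of a string of length roughly $mr$. The natural strategy is to prove the two inequalities separately, each up to an $o(r)$ error term, using the symmetry-of-information machinery already invoked in the excerpt (equation~(\ref{equ51n})) together with the definitions of $\K_r(x)$, $\K_r(y)$, $\K_r(x,y)$, and $\K_r(x|y)=\K_{r,r}(x|y)$. The $\le$ direction is the routine one: given an optimal rational $q\in\Q^n$ approximating $y$ to precision $r$ with $\K(q)=\K_r(y)$, and then, relative to such a $q$, an optimal rational $p\in\Q^m$ approximating $x$ with $\K(p|q)$ near $\K_r(x|y)$, one concatenates the programs (paying an $O(\log r)$ overhead for self-delimiting encoding of lengths) to describe the rational point $(p,q)\in\Q^{m+n}$, which lies in $\B_{2^{-r}}(x,y)$ up to a harmless constant factor in the radius; this yields $\K_r(x,y)\le\K_r(x|y)+\K_r(y)+O(\log r)$. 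A small subtlety here is that the ball $\B_{2^{-r}}(x,y)$ in $\R^{m+n}$ is not literally the product $\B_{2^{-r}}(x)\times\B_{2^{-r}}(y)$, but since these two sets are comparable up to a factor of $\sqrt{2}$ in radius, and since shifting precision by a constant (or by $o(r)$) does not change the relevant quantities—this is exactly the robustness guaranteed by the preceding theorem on $\K_{r,s(r)}$—the argument goes through.

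The harder direction is $\K_r(x,y)\ge \K_r(x|y)+\K_r(y)+o(r)$, i.e.\ the lower bound. Here I would start from an optimal rational point $(p,q)\in\Q^{m+n}$ with $\K(p,q)=\K_r(x,y)$ (again up to the $\B_{2^{-r}}$ vs.\ product-ball discrepancy). Classical symmetry of information gives $\K(p,q) = \K(q) + \K(p|q) + O(\log(\K(p,q)))$, hence $\K(p,q)\ge \K(q)+\K(p|q) - O(\log r)$, using that $\K(p,q)=O(r)$ by~(\ref{equ33}). Now $\K(q)\ge \K_r(y)$ since $q$ is a rational approximation of $y$ to precision $r$ (by definition of $\K_r(y)$ as a minimum). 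The real work is to show $\K(p|q)$ is, up to $o(r)$, at least $\K_r(x|y)=\K_{r,r}(x|y)=\max_{q'\in\Q^n\cap\B_{2^{-r}}(y)}\hat{\K}_r(x|q')$. Since $q$ itself is such a $q'$, we immediately get $\K(p|q)\ge \hat\K_r(x|q)$, but that is the value at the particular $q$, whereas $\K_r(x|y)$ is the \emph{maximum} over all $q'$ in the ball. This is the main obstacle: I must argue that the maximum is not much larger than the value attained at the specific $q$ coming from the optimal description of $(x,y)$. The resolution is the stability lemma implicit in the theorem quoted just above the statement (the $|s(r)-r|=o(r)$ invariance): any two rationals $q,q'$ within $2^{-r}$ of $y$ satisfy $|\K(p|q)-\K(p|q')| = o(r)$, because each can be computed from the other using $o(r)$ extra bits (they agree on roughly the first $r$ bits, and the discrepancy costs only lower-order information to patch)—more carefully, one routes through a common coarser approximation at precision $r - O(1)$ and uses that conditioning on nearby rationals changes complexity by $O(\log r)$, which is absorbed into the $o(r)$ slack. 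Assembling these pieces yields the lower bound with an $o(r)$ error, and combining with the upper bound gives~(\ref{equ3square}).

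One should be slightly careful that the additive errors are genuinely $o(r)$ and not merely $O(\log r)$ in every step—in fact every overhead above is $O(\log r)$, which is certainly $o(r)$, so the stated $o(r)$ bound is comfortable. I expect the proof in \cite{LutLut17} organizes this exactly as a two-sided inequality, invoking the preceding precision-robustness theorem as a black box to handle the mismatch between "the value at the optimal $q$" and "the max over the ball," and invoking classical symmetry of information for the string-level chain rule.
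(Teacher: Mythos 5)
Your upper-bound argument is sound, but the lower bound rests on a step that is false. You assert a stability claim to the effect that any two rationals $q,q'\in\B_{2^{-r}}(y)$ satisfy $|\hat\K_r(x|q)-\hat\K_r(x|q')|=o(r)$, justified by saying nearby rationals can be ``patched'' from one another with lower-order information. But geometric proximity does not control information content: the low-order digits of a rational within $2^{-r}$ of $y$ are completely unconstrained. Concretely, take $m=n=1$, $y=0$, and $x\in(0,1)$ Martin-L\"of random. Then $q=0$ gives $\hat\K_r(x|q)=\K_r(x)+O(1)\approx r$, whereas $q'=2^{-2r}\lfloor 2^r x\rfloor\in\B_{2^{-r}}(0)$ encodes the first $r$ bits of $x$ in its low-order digits, so $\hat\K_r(x|q')=O(\log r)$. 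The spread is $\Theta(r)$, not $o(r)$. (The theorem you cite about $|s(r)-r|=o(r)$ concerns robustness in the \emph{precision parameter} $s$, not robustness of $\hat\K_r(x|\cdot)$ as the rational argument varies inside a fixed ball; these are different statements.) More fundamentally, the split ``$\K(q)\ge\K_r(y)$ and $\K(p|q)\ge\K_r(x|y)-o(r)$'' is the wrong decomposition: in an optimal description $(p,q)$ of $(x,y)$, the rational $q$ may smuggle information about $x$ into its low-order digits, making $\K(q)$ much larger than $\K_r(y)$ and $\K(p|q)$ correspondingly much smaller than $\K_r(x|y)$; only the sum is stable.

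The missing idea is to route through a \emph{canonical} intermediate rational rather than comparing two arbitrary nearby ones. Fix a dyadic rational $q_0$ at precision $r$ lying in $\B_{2^{-r}}(y)$; such a $q_0$ is computable with $O(\log r)$ bits from \emph{any} rational in $\B_{2^{-r}}(y)$ (only $O(1)$ candidates on the dyadic grid are that close, once $r$ is specified), and in particular $\K(q_0|q)=O(\log r)$ and $\K(q_0|q')=O(\log r)$ for every $q'$ appearing in the max defining $\K_r(x|y)$. Since $q_0\in\B_{2^{-r}}(y)$, also $\K(q_0)\ge\K_r(y)$. Let $(p,q)$ be optimal for $\K_r(x,y)$. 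Symmetry of information gives $\K(p|q_0)\le\K(p,q|q_0)+O(1)\le\K(p,q,q_0)-\K(q_0)+O(\log r)\le\K(p,q)-\K(q_0)+O(\log r)\le\K_r(x,y)-\K_r(y)+O(\log r)$. Hence for every $q'\in\B_{2^{-r}}(y)$, $\hat\K_r(x|q')\le\K(p|q')\le\K(p|q_0)+\K(q_0|q')+O(\log r)\le\K_r(x,y)-\K_r(y)+O(\log r)$, and taking the maximum over $q'$ yields $\K_r(x|y)\le\K_r(x,y)-\K_r(y)+O(\log r)$. This is the lower bound, with the $o(r)$ error actually $O(\log r)$. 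The essential point you were missing is that the relation ``computable from with $O(\log r)$ bits'' is asymmetric and must run \emph{toward} the canonical $q_0$, not between two arbitrary rationals in the ball.
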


An {\sl oracle\/} for a point $y\in\R^n$ is a function $g: \N\to
\Q^n$ such that, for all $s\in\N$, $|g(s)-y|\le 2^{-s}$. The {\sl
Kolmogorov complexity\/} of a rational point $p\in\Q^m$ {\sl
relative to \/} a point $y\in\R^n$ is
\begin{equation*}\K^y(p)=\max\myset{\K^g(p)}{g \mbox{ is an oracle for } y},\end{equation*}
where $\K^g(p)$ is the Kolmogorov complexity of $p$ when the
universal machine has access to the oracle $g$. The purpose of the
maximum here is to prevent $\K^y(p)$ from using oracles $g$ that
code more than $y$ into their behaviors. For $x\in\R^m$ and
$y\in\R^n$, the {\sl  dimension $\dim^y(x)$ relative to $y$\/} is
defined from $\K^y(p)$ exactly as $\dim(x)$ was defined from $\K(p)$
in Sections \ref{sec:2}\ and \ref{subsec:31}\ above. The {\sl
relativized strong
 dimension $\Dim^y(x)$\/} is defined analogously.

The following result captures the intuition that conditioning on a
point $y$ is a restricted form of oracle access to $y$.

\begin{lemma}{(\cite{LutLut17})}\label{lem44} For all $x\in\R^m$ and $y\in\R^n$,
$\dim^y(x)\le\dim(x|y)$ and $\Dim^y(x)\le\Dim(x|y)$.
\end{lemma}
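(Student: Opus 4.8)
The plan is to prove the finitary bound
\begin{equation*}
\K^y_r(x)\le \K_r(x|y)+O(\log r)\qquad\text{for all }r\in\N ,
\end{equation*}
and then divide by $r$ and pass to $\liminf$ (respectively $\limsup$) as $r\to\infty$; since $O(\log r)/r\to 0$, and since $\dim^y(x),\Dim^y(x)$ are the lower and upper limits of $\K^y_r(x)/r$ while $\dim(x|y),\Dim(x|y)$ are the lower and upper limits of $\K_r(x|y)/r$, both inequalities of the lemma follow at once.

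The first step is to extract a \emph{canonical} rational approximation of $y$ from any oracle for $y$. I would fix one oracle Turing machine $M$, independent of $x,y,r$, that on input $r$ and with oracle access to an arbitrary oracle $g$ for $y$ outputs the first rational $q^\ast=q^\ast(y,r)$ in a fixed enumeration of $\Q^n$ with $|q^\ast-y|<2^{-r}$: for each candidate $q$ in turn, $M$ uses $g$ to semidecide, in parallel, the two conditions $|q-y|<2^{-r}$ and $|q-y|>2^{-r}$ (at least one of which holds, and each of which is semidecidable relative to $g$ by approximating $y$ to sufficient precision), halting with output $q$ once the first is confirmed and discarding $q$ once the second is. The crucial point is that the sets $\{q:|q-y|<2^{-r}\}$, $\{q:|q-y|>2^{-r}\}$ and the semidecision procedures depend only on $y$ and $r$, not on the particular oracle $g$; hence $M$ outputs the \emph{same} $q^\ast$ for every oracle for $y$. (The only subtlety is a rational lying at Euclidean distance exactly $2^{-r}$ from $y$, on which neither semidecision halts; this is a standard technicality—when $n=1$, for instance, the exceptional $y$ are rational, hence computable, and the lemma is immediate—so I suppress it, optionally absorbing the loss via the robustness theorem for $\K_{r,s}$ stated above.) Since $M$ is fixed, $q^\ast$ can be recovered by any oracle machine from the $O(\log r)$ bits encoding $r$.

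Next, since $q^\ast\in\Q^n\cap\B_{2^{-r}}(y)$, the very definition $\K_r(x|y)=\K_{r,r}(x|y)=\max\{\hat\K_r(x|q):q\in\Q^n\cap\B_{2^{-r}}(y)\}$ gives $\hat\K_r(x|q^\ast)\le\K_r(x|y)$. So I would pick $p^\ast\in\Q^m\cap\B_{2^{-r}}(x)$ attaining the minimum defining $\hat\K_r(x|q^\ast)$, together with a shortest $\pi$ with $U(\pi,q^\ast)=p^\ast$, so that $|\pi|=\K(p^\ast|q^\ast)=\hat\K_r(x|q^\ast)\le\K_r(x|y)$. Now for \emph{every} oracle $g$ for $y$, consider the program consisting of a self-delimiting encoding of $r$ followed by $\pi$: running with oracle $g$, it first recomputes $q^\ast$ by simulating $M$ and then simulates $U(\pi,q^\ast)$, thereby outputting $p^\ast$. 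This program has length $|\pi|+O(\log r)$, so $\K^g(p^\ast)\le\K_r(x|y)+O(\log r)$. Taking the maximum over all oracles $g$ for $y$ yields $\K^y(p^\ast)\le\K_r(x|y)+O(\log r)$, and since $p^\ast\in\Q^m\cap\B_{2^{-r}}(x)$ we obtain $\K^y_r(x)=\K^y(\B_{2^{-r}}(x))\le\K^y(p^\ast)\le\K_r(x|y)+O(\log r)$, as needed.

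The step I expect to be the crux is the oracle-independence of $q^\ast$. Because $\K^y(p)$ is the \emph{maximum} of $\K^g(p)$ over all oracles $g$ for $y$, it does not suffice to extract from a given oracle some approximation of $y$ (say $g(r+1)$) together with a short program for a corresponding rational near $x$: a different oracle would in general yield a different approximation and force a different rational near $x$, so no single witness $p^\ast\in\B_{2^{-r}}(x)$ would survive the maximum. The construction must therefore deliver the same $q^\ast$—and hence the same $p^\ast$—from every oracle for $y$, which is exactly why the canonical-approximation step (and the small boundary technicality attached to it) carries the weight of the argument.
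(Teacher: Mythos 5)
Your overall architecture is right, and your closing paragraph correctly identifies the crux: since $\K^y(p)$ is a \emph{maximum} over all oracles $g$ for $y$, the witness $p^\ast$ (and hence the intermediate rational $q^\ast$) must be the \emph{same} for every oracle, so $g(r+1)$-style extraction will not do. The gap is in the construction of the canonical $q^\ast$, and neither of the two escape routes you offer actually closes it. Your machine $M$ walks a fixed enumeration of $\Q^n$ and, for each candidate $q$, waits for one of the semidecisions $|q-y|<2^{-r}$ or $|q-y|>2^{-r}$ to fire; these are not exhaustive, so if any $q$ with $|q-y|=2^{-r}$ appears before the first interior rational, $M$ diverges. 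The remark that for $n=1$ such $y$ are rational (hence computable, hence the lemma is easy) does not generalize: for $n\ge 2$, $|q-y|=2^{-r}$ only places $y$ on a sphere of rational center and dyadic radius, which contains uncountably many non-computable points. And the robustness theorem for $\K_{r,s}$ does not help either — moving to a radius $2^{-s(r)}$ just moves the troublesome sphere; and if instead you try to make the two semidecisions exhaustive by using strict inequalities at \emph{different} radii, then for $q$ in the annulus both fire, and which fires first depends on $g$, destroying canonicality.

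The fix is to discard semidecision entirely and use a dyadic pin-down. Choose $c=c(n)$ with $\sqrt{n}\,2^{-(r+c)}<2^{-r}$ and set $q^\ast=k^\ast/2^{r+c}$, where $k^\ast\in\Z^n$ has $k^\ast_i$ the integer nearest $2^{r+c}y_i$ (ties broken by a fixed rule); then $q^\ast\in\Q^n\cap \B_{2^{-r}}(y)$ and $q^\ast$ is determined by $y$ and $r$ alone. For any oracle $g$, the machine computes $g(r+c+1)$; since $|k^\ast_i-2^{r+c}y_i|\le 1/2$ and $|2^{r+c}g(r+c+1)_i-2^{r+c}y_i|\le 1/2$, each $k^\ast_i$ lies among at most three consecutive integers that the machine can list, so hard-coding $k^\ast_i\bmod 3$ for each $i$ — a total of $O(n)=O(1)$ extra bits in the program — recovers $k^\ast$ (hence $q^\ast$) from \emph{every} oracle for $y$. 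With this $M$ in place of yours, there is no boundary case, and the rest of your argument — choosing $\pi$ and $p^\ast$ from $\hat\K_r(x|q^\ast)\le\K_r(x|y)$, concluding $\K^g(p^\ast)\le\K_r(x|y)+O(\log r)$ for all $g$, taking the max over $g$ and the min over $\B_{2^{-r}}(x)$, and dividing by $r$ — goes through verbatim.
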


The remaining results in this section confirm that conditional
 dimensions have the correct information-theoretic
relationships to  dimensions and  mutual dimensions.

\begin{theorem}{(\cite{LutLut17})}\label{theo45} For all $x\in\R^m$ and $y\in\R^n$,
\begin{equation*}\mdim(x:y)\ge\dim(x)-\Dim(x|y)\end{equation*} and \begin{equation*}\Mdim(x:y)\le\Dim(x)-\dim(x|y).\end{equation*}
\end{theorem}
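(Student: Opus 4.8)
The plan is to derive both bounds from a single precision-$r$ identity,
\[
\miI_r(x:y) = \K_r(x) - \K_r(x|y) + o(r),
\]
and then to read off the dimension inequalities by elementary $\liminf$/$\limsup$ manipulations. The identity is assembled from two ingredients that are both available. The first is the \emph{precision-$r$ symmetry of information} $\miI_r(x:y) = \K_r(x) + \K_r(y) - \K_r(x,y) + o(r)$, the lifting to Euclidean space of the rational-point fact (\ref{equ51n}); this is established in \cite{CasLut15}, and is precisely where the modest generalization of Levin's coding theorem is used. The second is the chain rule (\ref{equ3square}), $\K_r(x,y) = \K_r(x|y) + \K_r(y) + o(r)$. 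Substituting the chain rule into the symmetry of information cancels the two copies of $\K_r(y)$ and produces the displayed identity. Note that the two inequalities of Theorem~\ref{theo45} only need the two one-sided forms $\miI_r(x:y) \ge \K_r(x) - \K_r(x|y) - o(r)$ and $\miI_r(x:y) \le \K_r(x) - \K_r(x|y) + o(r)$, respectively.

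With the identity in hand, the first inequality is immediate. Fix $\varepsilon>0$. Since $\limsup_{r\to\infty}\K_r(x|y)/r=\Dim(x|y)$, we have $\K_r(x|y)/r\le\Dim(x|y)+\varepsilon$ for all sufficiently large $r$, and once $r$ is also large enough that the $o(1)$ term is below $\varepsilon$ we get $\miI_r(x:y)/r \ge \K_r(x)/r - \Dim(x|y) - 2\varepsilon$. Taking $\liminf_r$ of both sides yields $\mdim(x:y)\ge\dim(x)-\Dim(x|y)-2\varepsilon$, and letting $\varepsilon\to 0$ gives the claim. The second inequality is the mirror image: since $\liminf_{r\to\infty}\K_r(x|y)/r=\dim(x|y)$, for all large $r$ we have $\K_r(x|y)/r\ge\dim(x|y)-\varepsilon$, hence $\miI_r(x:y)/r\le \K_r(x)/r-\dim(x|y)+2\varepsilon$; taking $\limsup_r$ gives $\Mdim(x:y)\le\Dim(x)-\dim(x|y)+2\varepsilon$, and $\varepsilon\to0$ finishes the proof.

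The only genuine work is the precision-$r$ identity, and I expect its hard half to be $\miI_r(x:y)\ge \K_r(x)-\K_r(x|y)-o(r)$: by definition $\miI_r(x:y)$ minimizes $\K(p)-\K(p|q)$ \emph{jointly} over rationals $p\in\Q^m\cap\B_{2^{-r}}(x)$ and $q\in\Q^n\cap\B_{2^{-r}}(y)$, whereas $\K_r(x)$ and $\K_r(x|y)$ optimize $\K(p)$ and $\max_q\min_p\K(p|q)$ separately, so the two optimizations need not be realized by the same rational and one must pass between nearby rationals while accounting for the $O(\log r)$ transfer cost, summing to an $o(r)$ error. A minor point is that the $\K_r(x,y)$ appearing in the symmetry of information refers to the Euclidean ball $\B_{2^{-r}}(x,y)\subseteq\R^{m+n}$, whereas the joint balls used to define $\miI_r$ form a product; these differ by an $O(1)$ change of precision, and since $|\K_{r+1}(z)-\K_r(z)| = O(\log r)=o(r)$ this discrepancy too is absorbed into the $o(r)$ terms. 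All of this bookkeeping is carried out in \cite{CasLut15}, so in a survey it is legitimate to invoke the precision-$r$ symmetry of information as a black box and reduce Theorem~\ref{theo45} to the two short limit calculations above.
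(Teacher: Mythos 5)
The survey states Theorem~\ref{theo45} without proof, simply citing \cite{LutLut17}, so there is no in-paper argument to compare against; I can only assess your proposal on its own terms, and it is sound. Your plan---combine the precision-$r$ symmetry of information $\miI_r(x:y)=\K_r(x)+\K_r(y)-\K_r(x,y)+o(r)$ from \cite{CasLut15} with the chain rule (\ref{equ3square}) to get $\miI_r(x:y)=\K_r(x)-\K_r(x|y)+o(r)$, then read off the two inequalities via $\liminf(a_r-b_r)\ge\liminf a_r-\limsup b_r$ and $\limsup(a_r-b_r)\le\limsup a_r-\liminf b_r$---is exactly the natural route given the machinery the survey lays out immediately before the theorem (the chain rule for $\K_r$ is stated as a lemma precisely so it can be invoked here and in Theorem~\ref{theo46}). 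The limit manipulations are correct, and your $\varepsilon$-bookkeeping is a valid, if slightly more verbose, version of the one-line inequalities.

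Two refinements worth recording. First, you correctly single out $\miI_r(x:y)\ge\K_r(x)-\K_r(x|y)-o(r)$ as the hard half, but the easy half is even easier than an $o(r)$ estimate: taking $q^*$ achieving the max in $\K_r(x|y)=\max_q\hat\K_r(x|q)$ and $p^*$ achieving the min in $\K_r(x)$, one has $\miI_r(x:y)\le\miI(p^*:q^*)=\K(p^*)-\K(p^*|q^*)\le\K_r(x)-\hat\K_r(x|q^*)=\K_r(x)-\K_r(x|y)$ exactly, with no symmetry-of-information machinery and no error term; the second displayed inequality of the theorem therefore only uses the chain-of-limits step, not the SOI. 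Second, the attribution is slightly blurred: the unconditional precision-$r$ SOI you invoke (relating $\miI_r$, $\K_r(x)$, $\K_r(y)$, $\K_r(x,y)$) does live in the \cite{CasLut15} framework, but the quantity $\K_r(x|y)$ and the chain rule for it are from \cite{LutLut17}, and it is the composition of the two that yields your identity---so the theorem genuinely belongs to \cite{LutLut17} even though the SOI ingredient is older. Your observations about passing between nearby rationals with $O(\log r)$ transfer cost, and about the $O(1)$ precision shift between product balls and the Euclidean ball in $\R^{m+n}$, are exactly the bookkeeping points that must be handled, and absorbing them into $o(r)$ is legitimate.
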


\begin{theorem}{(chain rule for dimension \cite{LutLut17})}\label{theo46}
For all $x\in\R^m$ and $y\in\R^n$, \begin{eqnarray*} \dim(x)+\dim(y|x)&\le&\dim(x,y)\\
&\le&\dim(x)+\Dim(y|x)\\
&\le&\Dim(x,y)\\
&\le&\Dim(x)+\Dim(y|x).\end{eqnarray*}

\end{theorem}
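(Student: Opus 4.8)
The plan is to derive all four inequalities from the chain rule for $\K_r$ (equation (\ref{equ3square})), namely $\K_r(x,y)=\K_r(y|x)+\K_r(x)+o(r)$, by dividing through by $r$ and taking liminf's and limsup's, using only the elementary real-analysis facts $\liminf(a_r+b_r)\ge \liminf a_r + \liminf b_r$ and $\liminf(a_r+b_r)\le \liminf a_r + \limsup b_r$ (and the dual statements with limsup). Write $a_r=\K_r(x)/r$, $b_r=\K_r(y|x)/r$, and note $\K_r(x,y)/r = a_r+b_r+o(1)$. First I would establish the bottom inequality $\Dim(x,y)\le\Dim(x)+\Dim(y|x)$: apply $\limsup(a_r+b_r)\le\limsup a_r+\limsup b_r$ to get $\Dim(x,y)=\limsup(a_r+b_r)\le\Dim(x)+\Dim(y|x)$. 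For the top inequality $\dim(x)+\dim(y|x)\le\dim(x,y)$, apply $\liminf(a_r+b_r)\ge\liminf a_r+\liminf b_r$ to obtain $\dim(x)+\dim(y|x)\le\liminf(a_r+b_r)=\dim(x,y)$.

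The two middle inequalities are the ones that require the mixed liminf/limsup facts. For $\dim(x,y)\le\dim(x)+\Dim(y|x)$, use $\liminf(a_r+b_r)\le\liminf a_r+\limsup b_r$, which gives $\dim(x,y)\le\dim(x)+\Dim(y|x)$ directly. For $\dim(x)+\Dim(y|x)\le\Dim(x,y)$, I would instead group the chain rule the other way, writing $b_r=\K_r(x,y)/r - a_r + o(1)$, so that $\Dim(y|x)=\limsup b_r=\limsup(\,\K_r(x,y)/r - \K_r(x)/r\,)\le\limsup(\K_r(x,y)/r)+\limsup(-\K_r(x)/r)=\Dim(x,y)-\dim(x)$, which rearranges to the desired bound. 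Concatenating the four inequalities in the stated order yields the theorem.

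The only genuine subtlety — and the step I would flag as the main thing to get right rather than a true obstacle — is the bookkeeping of which limit superior/inferior combination is valid in each line: the superadditivity of liminf and subadditivity of limsup are in "opposite directions," and the mixed inequalities $\liminf(a_r+b_r)\le\liminf a_r+\limsup b_r$ must be invoked with $a$ and $b$ in the correct roles. There is also the minor point that all these manipulations are legitimate because $a_r,b_r$ are bounded sequences (by (\ref{equ33}) and its conditional analogue, $0\le\K_r(x)\le nr+o(r)$ etc.), so no $\infty-\infty$ indeterminacy arises and the $o(1)$ error term is harmlessly absorbed into every liminf/limsup. Everything else is a routine division by $r$; no new geometric or algorithmic-information content is needed beyond the chain rule for $\K_r$ already stated.
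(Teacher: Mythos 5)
Your proof is correct, and it is the natural argument suggested by the paper's own presentation: the paper states Theorem~\ref{theo46} without proof (citing \cite{LutLut17}), but it states the chain rule (\ref{equ3square}) for $\K_r$ immediately beforehand, and your derivation — divide by $r$ and apply the superadditivity of $\liminf$, the subadditivity of $\limsup$, and the two mixed inequalities in the right places — is exactly what that placement invites. You also handle the two small points that need care: you implicitly invoke the symmetric form $\K_r(x,y)=\K_r(y|x)+\K_r(x)+o(r)$ of (\ref{equ3square}) (legitimate since $\K_r(x,y)=\K_r(y,x)+O(1)$), and for the third inequality you correctly rearrange to $\Dim(y|x)=\limsup(c_r-a_r)\le\limsup c_r-\liminf a_r=\Dim(x,y)-\dim(x)$, which is equivalent to applying the mixed bound $\limsup(a_r+b_r)\ge\liminf a_r+\limsup b_r$ directly. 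Your closing remark that boundedness of $\K_r(x)/r$ and $\K_r(y|x)/r$ — following from (\ref{equ33}) and its conditional analogue — rules out any $\infty-\infty$ indeterminacy and absorbs the $o(1)$ term is also correct and worth saying.
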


\section{Algorithmic Discovery of New Classical Theorems}\label{sec:5}


   \subsection{The Point-to-Set Principle}\label{subsec:51} Many of the most challenging problems in geometric measure theory
   are problems of establishing lower bounds on the  classical fractal dimensions $\dimh(E)$ and $\dimpack(E)$ for sets $E\subseteq
   \Rn$. Although such problems seem to involve global properties of the sets $E$ and make no mention of algorithms, the
    dimensions of points have recently been used to prove new lower bound results for classical fractal
   dimensions. The key to these developments is the following pair of  theorems of the first author and N. Lutz.
   \begin{theorem}\label{the51}{(point-to-set-principle for Hausdorff dimension \cite{LutLut17})} For every $E\subseteq \Rn$,
   \begin{equation}\label{equ51}\dimh(E)=\min_{A\subseteq\N}\sup_{x\in E}\dim^A(x).\end{equation}\end{theorem}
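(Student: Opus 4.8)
The plan is to prove the two inequalities separately, and to realize the right-hand side as a genuine minimum rather than an infimum by exhibiting an optimal oracle. For the easy direction, $\dimh(E)\le\sup_{x\in E}\dim^A(x)$ for every oracle $A$, I would relativize the classical correspondence-principle chain of inequalities already assembled in the excerpt: for any $A$, the set $E$ sits inside a suitable relativized cover structure, and the relativization of the absolute stability identity $\adim^A(E)=\sup_{x\in E}\dim^A(x)$ together with the relativization of $\dimh(E)\le\adim^A(E)$ (which itself comes from the gale characterization, since a lower-$A$-semicomputable gale is in particular a gale witnessing a classical Hausdorff-dimension bound) gives $\dimh(E)\le\sup_{x\in E}\dim^A(x)$. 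Taking the infimum over $A$ then yields $\dimh(E)\le\inf_A\sup_{x\in E}\dim^A(x)$. This half is essentially bookkeeping with results already cited.

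The substantive direction is $\dimh(E)\ge\inf_A\sup_{x\in E}\dim^A(x)$, and I would prove it by constructing, from any $s>\dimh(E)$, a single oracle $A=A_s$ such that $\dim^{A}(x)\le s$ for \emph{all} $x\in E$ simultaneously. The idea: since $\dimh(E)<s$, for every $k\in\N$ there is a countable cover $\{C_{k,j}\}_j$ of $E$ by sets of diameter at most $2^{-k}$ with $\sum_j(\diam C_{k,j})^{s}\le 1$ (or $\le 2^{-k}$, after a standard tail argument). Encode the entire doubly-indexed family of covers — say, a list of rational balls $B_{k,j}$ slightly inflating the $C_{k,j}$, with their radii — into one oracle $A$. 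Now fix $x\in E$. For each precision $r$, pick $k\approx r$ and choose $j=j(r)$ with $x\in C_{k,j}$; the ball $B_{k,j}$ has radius $\le 2^{-r+O(1)}$ and contains a rational point approximating $x$ to precision $r-O(1)$. The index $j$ of a set in a summable family satisfies the Kraft-type bound allowing it to be described, relative to $A$, in roughly $s\cdot r+o(r)$ bits: because $\sum_j(\diam C_{k,j})^s\le 1$, a Shannon–Fano code over $A$ assigns $j$ a codeword of length $\le -s\log\diam C_{k,j}+o(r)\le sr+o(r)$. Feeding $(k,j)$ to the universal oracle machine lets it output a rational point in $B_{k,j}$, so $\K^{A}_{r}(x)\le sr+o(r)$, whence $\dim^{A}(x)\le s$. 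Since this holds for all $x\in E$ with the \emph{same} $A$, we get $\inf_A\sup_{x\in E}\dim^A(x)\le s$, and letting $s\downarrow\dimh(E)$ finishes the inequality; a further diagonalization over $s=\dimh(E)+1/n$ (interleaving the countably many oracles $A_{1/n}$ into one) upgrades $\inf$ to $\min$.

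The main obstacle is the uniform encoding step: one must be careful that a \emph{single} oracle $A$ works for \emph{every} point of $E$ and \emph{every} precision, so the covers for all scales $k$ must be packed into $A$ in a way the machine can navigate given only $r$ (not given $x$), and the Shannon–Fano code for the index $j$ must be reconstructible from $A$ alone. The $o(r)$ slack absorbs the cost of specifying $k$, the self-delimiting overhead, and the gap between $\diam C_{k,j}$ and the precision actually achieved, but verifying that these are genuinely $o(r)$ — rather than $O(\log r)$ swamping a would-be $o(1)$ claim, which is fine here since we only divide by $r$ — requires the standard but slightly delicate argument that finitely summable families admit prefix codes of the stated length. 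I would also flag that making the right-hand side a true minimum (the $\min$ in (\ref{equ51})) rather than an infimum needs the explicit construction above to produce an oracle attaining the bound in the limit, which is why I would build one master oracle diagonalizing over $s\downarrow\dimh(E)$ rather than merely quoting compactness.
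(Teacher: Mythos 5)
Your two-inequality decomposition is the right skeleton, and it matches the strategy in \cite{LutLut17}: the easy inequality $\dimh(E)\le\sup_{x\in E}\dim^A(x)$ for arbitrary $A$ follows from relativized absolute stability together with the fact that $A$-semicomputable gales are a special case of arbitrary gales, and the substantive inequality encodes near-optimal covers (equivalently, a near-optimal gale) into a single oracle and invokes a Kraft-inequality code, with a join over $s_n\downarrow\dimh(E)$ upgrading $\inf$ to $\min$.

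There is, though, a genuine error in the Shannon--Fano step. With $k\approx r$ and $\diam C_{k,j}\le 2^{-k}$, one has $-\log\diam C_{k,j}\ge k\approx r$, so the Kraft codeword for $j$ has length roughly $-s\log\diam C_{k,j}\ge sr$; your displayed chain $-s\log\diam C_{k,j}+o(r)\le sr+o(r)$ therefore points the wrong way, and you cannot fix it by inflating each $C_{k,j}$ to diameter $2^{-k}$ before forming the $s$-sum, since that can destroy finiteness of $\sum_j(\diam C_{k,j})^s$. The repair is to let the cover element, not the scale $k$, determine the precision: for each $k$ pick $j_k$ with $x\in C_{k,j_k}$, set $r_k=\lfloor -\log\diam C_{k,j_k}\rfloor\ge k$, and note that the rational center of $B_{k,j_k}$ approximates $x$ to within $2^{-r_k+O(1)}$, giving $\K^A_{r_k}(x)\le s\,r_k+O(\log k)=s\,r_k+o(r_k)$. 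Because $r_k\to\infty$ and $\dim^A(x)$ is a $\liminf$, this subsequence of precisions already forces $\dim^A(x)\le s$. (Choosing the covers with $\sum_j(\diam C_{k,j})^s\le 2^{-k}$ lets you run one prefix code over all pairs $(k,j)$ at once, so that the separate $O(\log k)$ cost of specifying $k$ disappears.)
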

    \begin{theorem}\label{the52}{(point-to-set-principle for packing dimension \cite{LutLut17})} For every $E\subseteq \Rn$,
   \begin{equation}\label{equ52}\dimpack(E)=\min_{A\subseteq\N}\sup_{x\in E}\Dim^A(x).\end{equation}\end{theorem}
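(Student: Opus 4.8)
\emph{Proof proposal.} I would prove the two point-to-set principles (\ref{equ51}) and (\ref{equ52}) in parallel, since the arguments mirror one another; throughout, $\dim^A$, $\Dim^A$, $\K^A_r$ denote the relativizations to an oracle $A\subseteq\N$ of the notions of Sections \ref{sec:2} and \ref{subsec:31}. For each identity there are two inequalities. The first, (a), is: for \emph{every} oracle $A$, $\dimh(E)\le\sup_{x\in E}\dim^A(x)$, and likewise $\dimpack(E)\le\sup_{x\in E}\Dim^A(x)$; this gives $\dimh(E)\le\inf_A\sup_{x\in E}\dim^A(x)$ (resp. for $\dimpack$). The second, (b), exhibits a single oracle $A_0$ with $\sup_{x\in E}\dim^{A_0}(x)\le\dimh(E)$ (resp. $\sup_{x\in E}\Dim^{A_0}(x)\le\dimpack(E)$); together with (a) applied to $A=A_0$ this supplies the reverse inequality and shows the infimum is a minimum attained at $A_0$.

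For (a) I would use the elementary fact that small Kolmogorov complexity at precision $r$ yields economical covers. Fix $A$ and a rational $s>\sup_{x\in E}\dim^A(x)$. Then for each $x\in E$ and infinitely many $r$ there is a program $\pi$ with $|\pi|\le sr$ and $U^A(\pi)\in\Q^n\cap\B_{2^{-r}}(x)$, so $x\in\B_{2^{-r+1}}(U^A(\pi))$. For each scale $m$ the family $\G_m=\{\B_{2^{-m+1}}(U^A(\pi)):|\pi|\le sm,\ U^A(\pi){\downarrow}\in\Q^n\}$ has $O(2^{sm})$ members, each of diameter $2^{-m+2}$, so for any rational $s'>s$ the quantity $\sum_{m\ge M}\sum_{B\in\G_m}(\diam B)^{s'}$ is dominated by a convergent geometric series tending to $0$ as $M\to\infty$, while the diameters involved are at most $2^{-M+2}$. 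Since $\bigcup_{m\ge M}\G_m\supseteq E$ for every $M$, the $s'$-dimensional Hausdorff measure $\mathcal{H}^{s'}(E)$ is $0$, whence $\dimh(E)\le s'$; letting $s'\downarrow s$ and then $s\downarrow\sup_{x\in E}\dim^A(x)$ gives (a) for Hausdorff dimension. The packing case is the same with one twist: set $E_j=\{x:\K^A_r(x)\le sr\text{ for all }r\ge j\}$, note $E=\bigcup_j E_j$ because $\Dim^A(x)<s$ on $E$, observe that $\G_m$ covers $E_j$ for all $m\ge j$, hence $\boxdim(E_j)\le s$, and conclude $\dimpack(E)\le s$ by countable stability of packing dimension. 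This first appearance of the $E_j$ decomposition ($\limsup$ forcing ``all large $m$'', unlike $\liminf$) is the source of the asymmetry that resurfaces below.

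For (b)--Hausdorff I would run the correspondence in reverse. For each rational $s>\dimh(E)$ we have $\mathcal{H}^s(E)=0$, so for each $k$ there is a cover of $E$ by sets of diameter $\le 2^{-k}$ of total $s$-weight $\le 1$; at the cost of a bounded multiplicative constant, replace each cover set by an open ball with rational center and with diameter a power of two. Let $A_0$ encode, for every rational $s>\dimh(E)$ and every $k$, this sequence of covers (a countable amount of data, so $A_0\subseteq\N$). Fix $x\in E$: for each $k$, $x$ lies in some ball of the $k$th cover of diameter $2^{-m}$ with $m\ge k$, and since that cover has $s$-weight $\le 1$ there are only $O(2^{sm})$ balls of exactly that diameter; so a self-delimiting code for $s$ and $k$ followed by the index of this ball names a rational point within $2^{-m}$ of $x$ using $sm+O(\log k)+O(1)$ bits relative to $A_0$. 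Hence $\K^{A_0}_m(x)\le sm+O(\log m)$, and as $m\ge k\to\infty$ through these values we get $\liminf_r\K^{A_0}_r(x)/r\le s$, i.e. $\dim^{A_0}(x)\le s$. Since this holds for every rational $s>\dimh(E)$ (one oracle serving all of them), $\dim^{A_0}(x)\le\dimh(E)$ for all $x\in E$, proving (b) for Hausdorff dimension. The point that makes this work is precisely that $\dim^{A_0}$ is a $\liminf$: the covers only control $\K^{A_0}_r(x)$ at the infinitely many precisions $r=m(k)$, and that is enough.

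The packing case of (b) is where I expect the real obstacle, because $\Dim^{A_0}$ is a $\limsup$ and demands a bound on $\K^{A_0}_r(x)$ for \emph{all} large $r$, which the scale-by-scale covers above do not give. Here I would invoke the classical characterization $\dimpack(E)=\inf\{\sup_j\boxdim(E_j):E\subseteq\bigcup_j E_j\}$: for rational $s>\dimpack(E)$ fix such a decomposition with $\boxdim(E_j)<s$ for all $j$, so for each $j$ there is $r_j$ with $E_j$ coverable by $O(2^{sr})$ rational balls of diameter $2^{-r}$ whenever $r\ge r_j$; encode all these covers into $A_0$. If $x\in E$ then $x\in E_j$ for a \emph{fixed} $j$, and for every $r\ge r_j$ a code for $j$ (now an additive constant) followed by the index of the covering ball of $E_j$ at scale $2^{-r}$ containing $x$ gives $\K^{A_0}_r(x)\le sr+O(1)$; hence $\Dim^{A_0}(x)\le s$, and letting $s\downarrow\dimpack(E)$ finishes (b). The remaining work --- enlarging arbitrary cover sets to rational dyadic balls, packing countably many cover families into a single subset of $\N$, and the bookkeeping of self-delimiting headers --- is routine; the genuinely nontrivial step is recognizing that the modified box-counting characterization of packing dimension is exactly the input needed to make the complexity bound uniform in $r$.
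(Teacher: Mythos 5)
Your proof is correct and follows essentially the same two-sided argument as the paper's: the easy inequality $\dimpack(E)\le\sup_{x\in E}\Dim^A(x)$ via the decomposition $E=\bigcup_j E_j$ with $\boxdim(E_j)\le s$, and the construction of $A_0$ by encoding near-optimal covers coming from the modified-upper-box-dimension characterization of $\dimpack$. You have also correctly identified the crux of the packing case --- that the $\limsup$ in $\Dim^{A_0}$ needs an $r$-uniform complexity bound, which is supplied by fixing the single index $j$ with $x\in E_j$ --- and this is exactly the asymmetry the paper's proof exploits.
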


   The relativized dimensions $\dim^A(x)$ and $\Dim^A(x)$ here are defined by substituting $\K_r^A(x)$ for $\K_r(x)$ in
   (\ref{equ31}) and (\ref{equ32}).

   It is to be emphasized that these two theorems completely characterize $\dimh(E)$ and $\dimpack(E)$ for {\sl all\/} sets $E\subseteq
   \Rn$. These characterizations are called {\sl point-to-set principles\/} because they enable one to use a lower bound on
   the relativized dimension of a single, judiciously chosen point $x\in E$ to establish a lower bound on the classical
   dimension of the set $E$ itself. More precisely, for example, Theorem \ref{the51}\ says that, in order to prove a lower
   bound $\dimh(E)\ge \alpha$, it suffices to show that, for every oracle $A\subseteq \N$ and every $\epsilon>0$, there is
   a point $x\in E$ such that $\dim^A(x)\ge \alpha-\epsilon$. In some cases, it can in fact be shown that, for every oracle
   $A\subseteq \N$, there is a point $x\in E$ such that $\dim^A(x)\ge \alpha$. While the arbitrary oracle $A$ is essential
   for the correctness of such proofs, the discussion below shows that its presence has not been burdensome in applications
   to date.

   \subsection{Plane Kakeya Sets} The first application of the point-to-set principle was not a new theorem, but rather a
   new, information-theoretic proof of an old theorem. We describe this proof here because it illustrates the intuitive
   power of the point-to-set principle.

   A {\sl Kakeya set\/} in $\Rn$ is a set $K\subseteq \Rn$ that contains a unit segment in every direction. Sometime before
   1920, Besicovitch \cite{Besi19, Besi28}\ proved the then-surprising existence of Kakeya sets of Lebesgue measure 0 in
   $\Rn$ for all $n\ge 2$ and asked whether Kakeya sets in $\R^2$ can have dimension less than 2 \cite{Davi71}. The famous
   {\sl Kakeya conjecture\/} (in its most commonly stated form) asserts a negative answer to this and the analogous questions
   in higher dimensions. That is, the Kakeya conjecture says that every Kakeya set in a Euclidean space $\Rn$ has Hausdorff
   dimension $n$. This conjecture holds trivially for $n=1$ and Davies \cite{Davi71}\ proved that it holds for $n=2$. The
   Kakeya conjecture remains an important open problem for $n\ge 3$ \cite{Wolf99, Tao00}.

   Our objective here is to sketch the new proof by the first author and N. Lutz \cite{LutLut17}\ of Davies's theorem, that the
   Kakeya conjecture holds in the Euclidean plane $\R^2$. This proof uses the following lower bound on the dimensions of
   points in a line $y=mx+b$.
\begin{lemma}\label{lem53}{(J. Lutz and N. Lutz \cite{LutLut17})} Let $m\in [0,1]$ and $b\in\R$. For almost every $x\in
[0,1]$,
\begin{equation}\label{equ53}\dim(x, mx+b)\ge\liminf_{r\to\infty}\frac{\K_r(m,b,x)-\K_r(b|m)}{r}.\end{equation}\end{lemma}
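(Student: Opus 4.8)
The plan is to convert the claimed bound, by way of the chain rule for $\K_r$, into a statement about reconstructing the line $L_{m,b}$ from a single approximate point on it, and then to prove that reconstruction statement by a counting argument.

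\smallskip
\noindent\emph{Step 1 (reduction).} Since $|m|,|x|\le 1$, a precision-$r$ approximation of $(m,b,x)$ yields a precision-$(r-O(1))$ approximation of $(x,mx+b,m,b)$ and conversely, so $\K_r(m,b,x)=\K_r(x,mx+b,m,b)+o(r)$; the chain rule for $\K_r$ then gives $\K_r(m,b,x)=\K_r(x,mx+b)+\K_r(m,b\mid x,mx+b)+o(r)$, whence $\dim(x,mx+b)=\liminf_r\tfrac1r\bigl(\K_r(m,b,x)-\K_r(m,b\mid x,mx+b)\bigr)$. It therefore suffices to show $\K_r(m,b\mid x,mx+b)\le\K_r(b\mid m)+o(r)$ for almost every $x$. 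Because $b=(mx+b)-mx$ is computed from approximations to the point and the slope, $\K_r(m,b\mid x,mx+b)=\K_r(m\mid x,mx+b)+o(r)$; by symmetry of information $\K_r(m\mid x,mx+b)=\K_r(m,b,x)-\K_r(x,mx+b)+o(r)$; and $\K_r(m,b,x)=\K_r(m)+\K_r(b\mid m)+\K_r(x\mid m,b)+o(r)$. Combining these, the target is equivalent to
$$\K_r(x,mx+b)\ge\K_r(m)+\K_r(x\mid m,b)-o(r).$$
For almost every $x$, $x$ is Martin-L\"of random relative to $(m,b)$, so $\K_r(x\mid m,b)=r-o(r)$, and the goal reduces to proving, for almost every $x$, that $\K_r(m,b,x)\le\K_r(x,mx+b)+\K_r(b\mid m)+o(r)$.

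\smallskip
\noindent\emph{Step 2 (reconstructing the line).} To prove this I would build a short program for a precision-$\approx r$ approximation of $(m,b,x)$. Its inputs are a shortest program for a rational approximation $q=(q_1,q_2)$ of $(x,mx+b)$, a shortest program $w$ witnessing $\K_r(b\mid m)$, and $O(\log r)$ bookkeeping bits. Dovetail over all rationals $u$ at precision $r$, run $w(u)$, and whenever it halts with output $v$ put $u$ on a list $\mathcal L$ iff $|uq_1+v-q_2|\le C2^{-r}$, i.e.\ iff the rational line $L_{u,v}$ passes within $O(2^{-r})$ of $q$. The precision-$r$ truncation of $m$ survives this test, so naming its position in $\mathcal L$ and emitting $(u,v,q_1)$ produces the desired approximation at total cost $\K_r(x,mx+b)+\K_r(b\mid m)+\log_2|\mathcal L|+O(\log r)$. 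Hence everything comes down to showing $\log_2|\mathcal L|=o(r)$ for almost every $x$.

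\smallskip
\noindent\emph{Step 3 (the main obstacle).} The partial map $\psi\colon u\mapsto v$ computed by $w$ depends only on $(m,b,r)$, not on $x$, so the surviving $u$'s are exactly those whose graph points $(u,\psi(u))$ lie within $O(2^{-r})$ of the single line $\{v=(m-u)x+b\}$, whose slope $-x\in[-1,0]$ is generic for almost every $x$. The crux is to show that such near-collinearity can occur at no more than $2^{o(r)}$ admissible $u$ --- morally, a Kolmogorov-simple graph cannot have many points clustered along a line whose slope encodes $\approx r$ bits about $x$, on pain of violating the randomness of $x$ relative to $(m,b)$. I would make this quantitative by a Borel--Cantelli argument over the at most $2^{\K_r(b\mid m)+1}$ candidate programs $w$ and the dyadic thresholds $2^{\delta r}$: for each choice, the set of $x$ with that many survivors is covered by few scale-$2^{-r}$ intervals, and the measures sum to something finite. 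This step is where the hypotheses $m\in[0,1]$ (so that $x\mapsto(x,mx+b)$ and $u\mapsto ux+v$ are bi-Lipschitz with uniform constants) and ``almost every $x$'' are genuinely used, and I expect it to be the technical heart of the argument --- in effect an effective incidence bound in disguise.
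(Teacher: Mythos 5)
The paper states this lemma but explicitly declines to prove it (``We do not prove this lemma here\dots''), so there is no in-paper argument to compare against; I can only assess your sketch on its own terms. Your overall plan is sound: the chain-rule reduction in Step 1 to the single inequality $\K_r(m,b,x)\le\K_r(x,mx+b)+\K_r(b\mid m)+o(r)$ is valid (though the detour through $\K_r(m)+\K_r(x\mid m,b)$ and Martin-L\"of randomness is unnecessary---you circle back to exactly this inequality, which follows already from the chain rule for $\K_r$ and the $O(1)$-bi-Lipschitz reversibility of $(m,b,x)\leftrightarrow(x,mx+b,m,b)$ when $|m|,|x|\le 1$), and the search-and-index reconstruction in Step 2 is the right device. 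Two points need attention before this is a proof.

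First, a definitional subtlety in Step 2. In this paper $\K_r(b\mid m)=\max_{q'\in\Q\cap\B_{2^{-r}}(m)}\min_{p\in\Q\cap\B_{2^{-r}}(b)}\K(p\mid q')$, so there is no one ``shortest program witnessing $\K_r(b\mid m)$'' that runs uniformly on arbitrary inputs $u$. You should first fix a particular rational $q'$ near $m$ (say a dyadic truncation at precision $r+2$) and take $w$ to be a shortest program with $U(w,q')$ within $2^{-r}$ of $b$; then $|w|\le\hat{\K}_r(b\mid q')\le\K_r(b\mid m)$, and $q'$ is guaranteed to land in $\mathcal{L}$ under your test. Your bit accounting is unchanged by this repair.

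Second, you have misdiagnosed the difficulty in Step 3. Once $w$ is fixed, the bound $\log_2|\mathcal{L}|=o(r)$ requires no effective incidence theorem, no appeal to the Kolmogorov-simplicity of the graph of $w$, and no use of the randomness of $x$; it is a first-moment estimate. Replacing the test by the slightly looser ideal test $|(u-m)x+(w(u)-b)|\le C''2^{-r}$ (valid with $C''=C+O(1)$ since $|u|=O(1)$ and $|q_1-x|,|q_2-(mx+b)|\le 2^{-r}$), the set of $x\in[0,1]$ admitting a given $u$ with $w(u)\!\downarrow$ has measure at most $\min\{1,\,2C''2^{-r}/|u-m|\}$. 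Summing over the $O(2^r)$ dyadic $u$ at scale $2^{-r}$, grouped into dyadic shells $|u-m|\asymp 2^{k-r}$, gives $\int_0^1|\mathcal{L}(x,r)|\,dx=O(r)$, \emph{uniformly in $w$}. Markov and Borel--Cantelli over $r$ alone (threshold $r^3$, say) then give $\log_2|\mathcal{L}|=O(\log r)$ for a.e.\ $x$ and all sufficiently large $r$. Do \emph{not} attempt the union bound over the $\sim 2^{\K_r(b\mid m)}$ candidate programs $w$ that you propose: when $\K_r(b\mid m)$ is of order $r$, that factor overwhelms the $O(r)2^{-\delta r}$ tail and the sum diverges. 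Fortunately the union bound is unnecessary, because the measure estimate above is uniform in $w$ and only the single $w$ you fix for each $r$ enters the argument.
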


We do not prove this lemma here, but note that the proof
relativizes, so the lemma holds relative to every oracle $A\subseteq
\N$.

To prove Davies's theorem, let $\K\subseteq\R^2$ be a Kakeya set. By
the point-to-set principle, fix $A\subseteq \N$ such that
\begin{equation}\label{equ54}\dimh(K)=\sup_{(x,y)\in K}\dim^A(x,y).\end{equation} Fix $m\in[0,1]$ such that
\begin{equation}\label{equ55}\dim^A(m)=1.\end{equation} (This holds for any $m$ that is random relative to $A$.) Since $K$
is Kakeya, there is a unit segment $L\subseteq K$ of slope $m$. Let
$(x_0, y_0)$ be the left endpoint of $L$, let $q\in \Q\cap[x_0,
x_0+1/2]$, and let $L'$ be the unit segment of slope $m$ whose
endpoint is $(x_0-q, y_0)$. Then $L'$ crosses the $y$-axis at the
point $b=mq+y_0$. By Lemma \ref{lem53}\ (relativized to $A$), fix
$x\in [0, 1/2]$ such that
\begin{equation}\label{equ56}\dim^{A,m,b}(x)=1\end{equation} and
\begin{equation}\label{equ57}\dim^A(x, mx+b)\ge\liminf_{r\to\infty}\frac{\K_r^A(m,b,x)-\K_r^A(b|m)}{r}.\end{equation}
(Such $x$ exists, because almost every $x\in[0,1/2]$ satisfies
(\ref{equ56}) and (\ref{equ57}).)

In the language of section \ref{subsec:51}, our ``judiciously chosen
point'' is $(x+q,mx+b)\in L \subseteq K$, and the point-to-set
principle tells us that it suffices to prove that
\begin{equation}\label{equ58}\dim^A(x+q,mx+b)=2.\end{equation} But this is now easy. Since $q$ is rational, (\ref{equ57})
and two applications of the chain rule (\ref{equ3square}) tell us
that
\begin{eqnarray*}
\dim^A(x+q,mx+b)&=&\dim^A(x,mx+b)\\
&\ge&\liminf_{r\to \infty} \frac{\K_r^A(m,b,x)-\K_r^A(b,m)+\K_r^A(m)}{r}\\
&=&\liminf_{r\to \infty} \frac{\K_r^A(x|b,m)+\K_r^A(m)}{r}\\
&\ge&\liminf_{r\to \infty} \frac{\K_r^{A,m,b}(x)}{r}+\liminf_{r\to \infty} \frac{\K_r^A(m)}{r}\\
&=&\dim^{A,m,b}(x)+\dim^A(m),\end{eqnarray*} whence (\ref{equ55})
and (\ref{equ56}) tell us that (\ref{equ58}) holds.

This information-theoretic proof of Davies can be summarized in very
intuitive terms: Because $K$ is Kakeya, it contains a unit segment
$L$ whose slope $m$ has dimension 1 relative to $A$. A rational
shift of $L$ to a unit segment $L'$ crosses the $y$-axis at some
point $b$. Lemma \ref{lem53}\ then gives us a point $(x, mx+b)$ on
$L'$ that has dimension 2 relative to $A$. The point on $L$ from
which $(x, mx+b)$ was shifted lies in $K$ and also has dimension 2
relative to $A$, so $K$ has Hausdorff dimension 2.

The following two sections discuss recent uses of this method to
prove {\sl new\/} theorems in classical fractal geometry.


\subsection{Intersections and Products of Fractals}
We now consider two fundamental, nontrivial, textbook theorems of
fractal geometry. The first, over thirty years old and called the
{\sl intersection formula}, concerns the intersection of one fractal
with a random translation of another fractal.
\begin{theorem}\label{theo54}{(Kahane \cite{Kaha86}, Mattila \cite{Matt84, Matt85})} For all Borel sets $E,F\subseteq \Rn$ and
almost every $z\in\Rn$, \begin{equation*}\dimh(E\cap(F+z))\le
\max\{0, \dimh(E\times F)-n\}.\end{equation*}\end{theorem}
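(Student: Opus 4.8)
The plan is to use the point-to-set principle for Hausdorff dimension (Theorem~\ref{the51}) together with the chain rule for $\K_r$ (equation~\eqref{equ3square}) and the Lemma~\ref{lem44}-style relationship between conditioning and oracle access. First I would fix Borel sets $E,F\subseteq\Rn$ and apply Theorem~\ref{the51} to $E\times F$: there is an oracle $A\subseteq\N$ with $\dimh(E\times F)=\sup_{(u,v)\in E\times F}\dim^A(u,v)$. Since $z$ ranges over $\Rn$, I would next choose $z$ to be Martin-L\"of random relative to $A$; this guarantees $\K_r^A(z)\ge nr-O(1)$, and more importantly that $z$ carries no information about $A$, $E$, or $F$, so that for any point $x$ we have $\K_r^{A,z}(x)$ close to $\K_r^A(x\mid z)$ up to $o(r)$ (here I would invoke the relativized version of Lemma~\ref{lem44}). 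The point-to-set principle then reduces the theorem to showing that for \emph{every} point $x\in E\cap(F+z)$, $\dim^{A,z}(x)\le\max\{0,\dimh(E\times F)-n\}$, and since dimension is nonnegative it suffices to bound $\dim^{A,z}(x)\le\dimh(E\times F)-n$ whenever the right side is positive.

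Then I would carry out the core estimate. Fix $x\in E\cap(F+z)$; then $x\in E$ and $x-z\in F$, so $(x,x-z)\in E\times F$ and hence $\dim^A(x,x-z)\le\dimh(E\times F)$ at every precision, i.e.\ $\K_r^A(x,x-z)\le r\cdot\dimh(E\times F)+o(r)$ along a suitable subsequence (for the $\liminf$ one works at the precisions witnessing the $\liminf$ for the target point, which requires a little care). Because $z$ is an oracle, $\K_r^{A,z}(x)=\K_r^{A,z}(x,x-z)+o(r)$ (knowing $x$ and $z$ determines $x-z$ and vice versa), and by the relativized chain rule $\K_r^{A,z}(x,x-z)=\K_r^A(x,x-z\mid z)+o(r)\le\K_r^A(x,x-z)+o(r)$ since conditioning can only decrease complexity. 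Putting these together, $\K_r^{A,z}(x)\le\K_r^A(x,x-z)+o(r)\le r\cdot\dimh(E\times F)+o(r)$. But this is not yet the bound we want --- we are missing the $-n$. The extra $n$ comes from the fact that we must subtract off the information content of $z$ itself that is \emph{redundantly} recoverable: since $z$ is random relative to $A$ and $x$ has precision-$r$ complexity at most $nr+O(1)$ relative to $A$, the pair $(x,z)$ has complexity roughly $\K_r^A(x,z)=\K_r^A(x\mid z)+\K_r^A(z)+o(r)$, and $\K_r^A(z)=nr-o(r)$; rearranging and combining with $\K_r^A(x,x-z)=\K_r^A(x,z)+o(r)$ yields $\K_r^{A,z}(x)\le\K_r^A(x,z)-nr+o(r)\le r\cdot\dimh(E\times F)-nr+o(r)$, giving $\dim^{A,z}(x)\le\dimh(E\times F)-n$ as desired.

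I expect the main obstacle to be the interplay between $\liminf$'s at different, possibly misaligned precision scales. The inequality $\dim^A(x,x-z)\le\dimh(E\times F)$ only controls $\K_r^A(x,x-z)$ along the subsequence of precisions realizing the $\limsup$/$\liminf$ for the \emph{product-set} optimum, whereas we need to bound $\dim^{A,z}(x)=\liminf_r\K_r^{A,z}(x)/r$, whose witnessing subsequence may differ. One must argue that the chain-rule identities and the randomness of $z$ hold at \emph{every} precision $r$ up to $o(r)$ error, so that the bound on the $\limsup$ of $\K_r^A(x,x-z)/r$ (which is genuinely at most $\dimh(E\times F)$ at every large $r$, since $(x,x-z)$ lies in the product set and $\dim^A$ is a $\limsup$ bounded by the supremum) transfers to the $\liminf$ for $x$. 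A second subtlety is making precise ``$z$ carries no information about $A$'': the correct statement is that for $z$ random relative to $A$, the join oracle $(A,z)$ is no more powerful than $A$ for computing complexities of points independent of $z$, formalized via the symmetry of information relativized to $A$ --- this is where one needs the relativized forms of the chain rule and of symmetry of information, both available from the machinery surveyed above. Everything else is bookkeeping with the $o(r)$ terms.
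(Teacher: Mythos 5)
Your high-level plan is the same as the paper's: the paper does not prove Theorem~\ref{theo54} directly but instead proves the stronger Theorem~\ref{the56} (all sets, not just Borel) by exactly the decomposition you propose -- the point-to-set principle applied to $E\times F$ yields an oracle $A$, one takes $z$ Martin-L\"of random relative to $A$, one uses $(x,x-z)\in E\times F$ to control a point $x\in E\cap(F+z)$, and the $-n$ comes from $\dim^A(z)=n$ via the chain rule and Lemma~\ref{lem44}. Your proof, like the paper's, never uses that $E,F$ are Borel, so it proves the same extension. Where the two diverge is in the level at which the chain is run, and that is where you introduce a genuine error.

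The paper stays entirely at the dimension level: it picks a point $x\in E\cap(F+z)$ with $\dim^{A,z}(x)>\dimh(E\cap(F+z))-\epsilon$, and then writes the inequality chain
\begin{equation*}
\dimh(E\times F)\ge\dim^A(x,x-z)=\dim^A(x,z)\ge\dim^A(z)+\dim^A(x|z)\ge\dim^A(z)+\dim^{A,z}(x),
\end{equation*}
using the dimension chain rule (Theorem~\ref{theo46}, $\dim(x)+\dim(y|x)\le\dim(x,y)$; the paper's citation of Theorem~\ref{the401} at this point is evidently a typo for \ref{theo46}) and Lemma~\ref{lem44}. Because $\dim$ is a $\liminf$ and the chain rule is phrased as a one-sided inequality of dimensions, there is no issue of misaligned precision scales: that bookkeeping is packed into Theorem~\ref{theo46} and Lemma~\ref{lem44}. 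You instead unroll everything to precision-$r$ complexities. The chain you want is $\K_r^{A,z}(x)\le\K_r^A(x,x-z)-nr+o(r)$, valid at every $r$, and then $\dim^{A,z}(x)=\liminf_r\K_r^{A,z}(x)/r\le\liminf_r\K_r^A(x,x-z)/r-n=\dim^A(x,x-z)-n\le\dimh(E\times F)-n$, since subtracting the constant $n$ commutes with $\liminf$. That is a correct way to discharge the worry you raise about misaligned subsequences. The step you actually wrote, however, is wrong: you assert that ``$\dim^A$ is a $\limsup$ bounded by the supremum'' so that $\K_r^A(x,x-z)\le r\cdot\dimh(E\times F)+o(r)$ for \emph{all} large $r$. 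By definition (equations~\eqref{equ31}--\eqref{equ32}), $\dim$ is the $\liminf$ and $\Dim$ is the $\limsup$; the point-to-set principle controls $\dim^A$, not $\Dim^A$, so you only get a bound along a subsequence of precisions, not at all large $r$. Fortunately the $\limsup$ bound is not needed at all -- the $\liminf$-level argument above suffices -- but as written your resolution of the subtlety rests on a false statement about the definitions.

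Two smaller points. First, you write $\K_r^{A,z}(x,x-z)=\K_r^A(x,x-z\mid z)+o(r)$ as an equality, but Lemma~\ref{lem44} (and what is actually true) only gives the direction $\le$; that is all the argument needs, but the equality claim is misleading. Second, Lemma~\ref{lem44} is stated in the paper as a dimension inequality, whereas your precision-level chain needs the corresponding $\K_r$-level inequality $\K_r^{A,z}(\cdot)\le\K_r^A(\cdot\mid z)+o(r)$; that does hold and underlies the lemma's proof, but to run your argument honestly you should either cite the underlying complexity bound or, more cleanly, follow the paper and work at the dimension level via Theorem~\ref{theo46} and Lemma~\ref{lem44} directly.
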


The second theorem, over sixty years old and called the {\sl product
formula}, concerns the product of two fractals.
\begin{theorem}\label{theo55}{(Marstrand \cite{Mars54})} For all $E\subseteq\Rn$ and $F\subseteq\Rn$,\begin{equation*}\dimh(E\times F)\ge
\dimh(E)+\dimh(F).\end{equation*}\end{theorem}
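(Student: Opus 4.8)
The plan is to prove the product formula entirely at the level of the dimensions of individual points, deriving it from the point-to-set principle for Hausdorff dimension (Theorem~\ref{the51}), the chain rule for dimension (Theorem~\ref{theo46}), and the conditioning-versus-oracle inequality (Lemma~\ref{lem44}), all in relativized form. Classically this theorem is established by potential-theoretic energy estimates; the algorithmic route instead reduces it to a short ``Fubini-type'' lower bound on $\dim^A$ of a single, judiciously chosen point of $E\times F$. We may assume $E$ and $F$ are nonempty, the other case being trivial, and it is convenient to argue with $E\subseteq\R^m$, $F\subseteq\R^n$ (the stated theorem is the case $m=n$).

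First I would apply Theorem~\ref{the51} to the set $E\times F\subseteq\R^{m+n}$ to fix an oracle $A\subseteq\N$ with
\begin{equation*}
\dimh(E\times F)=\sup_{(x,y)\in E\times F}\dim^A(x,y).
\end{equation*}
Fix $\epsilon>0$. Theorem~\ref{the51} also gives $\dimh(E)\le\sup_{x\in E}\dim^A(x)$ (since $\dimh(E)$ is the minimum over all oracles of this quantity), so I can choose $x_0\in E$ with $\dim^A(x_0)>\dimh(E)-\epsilon$. Now comes the key move: re-relativize. Applying Theorem~\ref{the51} relative to $(A,x_0)$ gives $\dimh(F)\le\sup_{y\in F}\dim^{A,x_0}(y)$, so I can choose $y_0\in F$ with $\dim^{A,x_0}(y_0)>\dimh(F)-\epsilon$. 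The point $(x_0,y_0)\in E\times F$ is the point whose relativized dimension we bound from below.

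It then remains to show that $\dim^A(x_0,y_0)$ is large. The chain rule (Theorem~\ref{theo46}) relativized to $A$ gives $\dim^A(x_0,y_0)\ge\dim^A(x_0)+\dim^A(y_0|x_0)$, and Lemma~\ref{lem44} relativized to $A$ gives $\dim^A(y_0|x_0)\ge\dim^{A,x_0}(y_0)$. Hence
\begin{equation*}
\dim^A(x_0,y_0)\ge\dim^A(x_0)+\dim^{A,x_0}(y_0)>\dimh(E)+\dimh(F)-2\epsilon.
\end{equation*}
Since $(x_0,y_0)\in E\times F$, the choice of $A$ yields $\dimh(E\times F)\ge\dim^A(x_0,y_0)>\dimh(E)+\dimh(F)-2\epsilon$, and letting $\epsilon\to0$ finishes the proof.

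The step I expect to be the main obstacle in a first attempt is the double relativization. The oracle $A$ is chosen optimal for the product, but there is no reason it should make the second coordinate hard \emph{given} the first; to capture that contribution one must pass to the strictly richer oracle $(A,x_0)$ and then invoke Lemma~\ref{lem44} to convert the relativized quantity $\dim^{A,x_0}(y_0)$ into the conditional quantity $\dim^A(y_0|x_0)$ that the chain rule actually produces. A naive argument that relativizes everything to $A$ alone leaves $\dim^A(y_0|x_0)$ uncontrolled and stalls. It is also worth noting that the inequality used from Theorem~\ref{theo46} is the superadditive direction $\dim(x,y)\ge\dim(x)+\dim(y|x)$, not any subadditive bound; this is exactly where the ``Fubini'' content of Marstrand's theorem is encoded.
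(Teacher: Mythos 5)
Your proof is correct and follows exactly the route the paper describes for this result (the point-to-set principle together with the chain rule, Theorem~\ref{theo46}), and its structure mirrors the paper's worked-out proof of the intersection formula, Theorem~\ref{the56}: fix an optimal oracle $A$ for the product, locate a judicious point $(x_0,y_0)$ via a second application of the point-to-set principle relative to $(A,x_0)$, and pass from $\dim^{A,x_0}(y_0)$ to the conditional quantity $\dim^A(y_0|x_0)$ via Lemma~\ref{lem44} so that the relativized chain rule applies.
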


In a recent breakthrough, algorithmic dimension was used to prove
the following extension of the intersection formula from Borel sets
to {\sl all\/} sets. We include the simple (given the machinery that
we have developed) and instructive proof here.
\begin{theorem}\label{the56}{(N. Lutz \cite{NLut17})} For {\sl
all\/} sets $E, F\subseteq\Rn$ and almost every $z\in\Rn$,
\begin{equation}\label{equ59}\dimh(E\cap(F+z))\le\max\{0,
\dimh(E\times F)-n\}.\end{equation}\end{theorem}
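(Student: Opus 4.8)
The plan is to use the point-to-set principle for Hausdorff dimension (Theorem~\ref{the51}) to reduce the statement about sets to a statement about the relativized dimension of a single, well-chosen point. First I would fix an oracle $A\subseteq\N$ witnessing the point-to-set principle for the set $E\cap(F+z)$ (for a $z$ to be chosen), so that $\dimh(E\cap(F+z))=\sup_{x\in E\cap(F+z)}\dim^A(x)$. It suffices to show that for almost every $z$, every point $x\in E\cap(F+z)$ satisfies $\dim^A(x)\le\max\{0,\dimh(E\times F)-n\}$. The right-hand side should be handled via the point-to-set principle again: pick an oracle $B$ with $\dimh(E\times F)=\sup_{(u,v)\in E\times F}\dim^B(u,v)$, and work relative to the join $A\oplus B$ (and, as in the Kakeya proof, relative to a suitably random $z$). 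So the real task is: given $x\in E$ with $x\in F+z$, i.e.\ $x-z\in F$, bound $\dim^{A\oplus B}(x)$ by $\dim^{A\oplus B}(x,x-z)-n$ whenever $\dim(x,x-z)\ge n$.

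The key technical step is an algorithmic estimate at each precision $r$. Writing $y=x-z$, so that $(x,y)\in E\times F$ and $z=x-y$, I would use the chain rule for $\K_r$ (equation~\eqref{equ3square}) to write $\K_r^{A\oplus B}(x,y)=\K_r^{A\oplus B}(x\mid y)+\K_r^{A\oplus B}(y)+o(r)$, and symmetrically with the roles reversed; the aim is to show $\K_r^{A\oplus B}(x)\le\K_r^{A\oplus B}(x,y)-nr+o(r)$ for infinitely many $r$. The intuition is that a random $z$ carries $\approx nr$ bits of information at precision $r$, and that $z$ is almost determined by the pair $(x,y)$ (indeed $z=x-y$ exactly), so $\K_r^{A\oplus B}(z\mid x,y)=o(r)$; meanwhile, because $z$ is random relative to $A\oplus B$, it is (approximately) independent of $x$, giving $\K_r^{A\oplus B, z}(x)\le\K_r^{A\oplus B}(x)-nr+o(r)$ in the regime where $\dim(x)$ is already as large as it can be given the constraint. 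Combining these with the chain rule and with $\dim^{A\oplus B,z}(x)\le\dim^{A\oplus B}(x,y)$ (since $y=x-z$ is computable from $x$ with oracle $z$) yields $\dim^{A\oplus B}(x)\le\dim^{A\oplus B}(x,y)-n$ when the latter exceeds $n$, and the trivial bound $\dim^{A\oplus B}(x)\le n\le\dim^{A\oplus B}(x,y)-n$ fails only when $\dim(x,y)<n$, in which case $\dim^{A\oplus B}(x)\le\dim(x,y)\le n$ forces the $\max\{0,\cdot\}$ branch... actually here one uses $\dim^{A\oplus B}(x)\le\dim^{A\oplus B}(x,y)$ directly and notes the conclusion is $0$-bounded only informally, so I would instead split on whether $\dimh(E\times F)\ge n$.

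The main obstacle I expect is making the ``$z$ random $\Rightarrow$ independence'' step precise and uniform over \emph{all} points $x$ in the set, rather than for a fixed generic $x$ — this is exactly where the earlier approaches needed Borelness, and where the relativization flexibility of the point-to-set principle does the heavy lifting. Concretely, the difficulty is that the oracle $A$ witnessing the principle for $E\cap(F+z)$ may depend on $z$, so the quantifier order (almost every $z$, then the supremum over $x$) must be arranged carefully; the standard device, as in the Kakeya proof above, is to first relativize everything to $B$ (and to the parameters defining $E,F$) and then choose $z$ random relative to that combined oracle, absorbing $A$ afterward by monotonicity of $\dim^{(\cdot)}$ under adding oracles. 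Once the quantifiers are ordered correctly, the per-precision Kolmogorov-complexity inequality is a routine application of the chain rule together with the fact that a Martin-L\"of random $z\in\Rn$ satisfies $\K_r^W(z)=nr-O(1)$ for the relevant oracle $W$, hence contributes a full $nr$ to any joint complexity it appears in while being nearly determined by $(x,y)$.
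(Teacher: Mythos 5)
Your core route is the paper's (point-to-set principle, chain rule for dimension, and a $z$ that is Martin-L\"of random relative to a suitable oracle), but there is a genuine gap in how you set up the oracles, and it is exactly the circularity you flag at the end without resolving. You propose to fix an oracle $A$ achieving the minimum in Theorem~\ref{the51} for $E\cap(F+z)$, and then later choose $z$ random relative to some combined oracle, ``absorbing $A$ by monotonicity.'' That absorption cannot work: adding information only decreases relativized dimension, so monotonicity gives $\sup_{x}\dim^{A\oplus B}(x)\le\sup_{x}\dim^{A}(x)=\dimh(E\cap(F+z))$, which is the \emph{reverse} of what you need. There is no way to choose $z$ random relative to an oracle that already depends on $z$.

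The fix, which is what the paper's proof does, is to never invoke the minimizing oracle for $E\cap(F+z)$ at all. The minimizing oracle is needed only on the side one wants to \emph{lower} bound, namely $E\times F$: fix a single oracle $A$ with $\dimh(E\times F)=\sup_{(u,v)\in E\times F}\dim^{A}(u,v)$; this $A$ does not depend on $z$, so one may afterwards take $z$ Martin-L\"of random relative to $A$, giving $\dim^{A}(z)=n$. For $E\cap(F+z)$ only the direction of Theorem~\ref{the51} that holds for \emph{every} oracle is used, $\dimh(S)\le\sup_{x\in S}\dim^{O}(x)$, applied with $O=(A,z)$: for each $\epsilon>0$ there is $x\in E\cap(F+z)$ with $\dim^{A,z}(x)>\dimh(E\cap(F+z))-\epsilon$. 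Since $(x,x-z)\in E\times F$, the chain rule (Theorem~\ref{theo46}) and Lemma~\ref{lem44} give $\dimh(E\times F)\ge\dim^{A}(x,x-z)=\dim^{A}(x,z)\ge\dim^{A}(z)+\dim^{A}(x\,|\,z)\ge\dim^{A}(z)+\dim^{A,z}(x)$, and letting $\epsilon\to 0$ and using $\dim^{A}(z)=n$ finishes. Note also that the estimate you aim for, a bound on $\dim^{A\oplus B}(x)$ by $\dim^{A\oplus B}(x,x-z)-n$, is not what the chain rule delivers; the decoupling of $x$ from $z$ requires $z$ to sit \emph{in the oracle}, i.e.\ one bounds $\dim^{A,z}(x)$, not $\dim^{A}(x)$ or $\dim^{A\oplus B}(x)$. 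Your sketch slides between these, and that slide would not survive being written out; the unconditional statement $\dim^{A\oplus B}(x)\le\dim^{A\oplus B}(x,x-z)-n$ is false in general even for $z$ random relative to $A\oplus B$, because $x$ need not be independent of $z$.
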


\begin{proof}
Let $E, F\subseteq\Rn$ and $z\in\Rn$. The theorem is trivially
affirmed if $F+z$ is disjoint from $E$, so assume not. By the
point-to-set principle, fix an oracle $A\subseteq \N$ such that
\begin{equation}\label{equ510}\dimh(E\times F)=\sup_{(x,y)\in E\times F}\dim^A(x,y).\end{equation}

Let $\epsilon>0$. Since $E\cap (F+z)\ne \emptyset$, the point-to-set
principle tells us that there is a point $x\in E\cap(F+z)$
satisfying
\begin{equation}\label{equ511}\dim^{A,z}(x)>\dimh(E\cap(F+z))-\epsilon.\end{equation}
Now $(x, x-z)\in E\times F$, so (\ref{equ510}), Theorem
\ref{the401}, Lemma \ref{lem44}, and (\ref{equ511}) tell us that
\begin{eqnarray*}\dimh(E\times F)&\ge& \dim^A(x,
x-z)\\&=&\dim^A(x,z)\\&\ge&\dim^A(z)+\dim^A(x|z)\\&\ge&\dim^A(z)+\dim^{A,z}(x)\\&>&\dim^A(z)+\dimh(E\cap(F+z))-\epsilon.
\end{eqnarray*}
Since $\epsilon$ is arbitrary here, it follows that
\begin{equation*}\dimh(E\cap(F+z))\le \dimh(E\times F)-\dim^A(z).\end{equation*} Since almost
every $z\in\Rn$ is Martin-L\"of random relative to $A$ and hence
satisfies $\dim^A(z)=n$, this affirms the theorem.
\end{proof}

The paper \cite{NLut17}\ shows that the same method gives a new
proof of the analog of Theorem \ref{the56}\ for packing dimension.
This result was already known to hold for all sets $E$ and $F$
\cite{Falc94}, but the new proof makes clear what a strong duality
between Hausdorff and packing dimensions is at play in the
intersection formulas.

The paper \cite{NLut17}\ also gives a new, algorithmic proof of the
following known extension of Theorem \ref{theo55}.

\begin{theorem}\label{the57}{(Marstrand \cite{Mars54}, Tricot
\cite{Tric82})} For all $E\subseteq\R^m$ and $F\subseteq\Rn$,
\begin{eqnarray*}\dimh(E)+\dimh(F)&\le&\dimh(E\times
F)\\&\le&\dimh(E)+\dimpack(F)\\&\le&\dimpack(E\times
F)\\&\le&\dimpack(E)+\dimpack(F).\end{eqnarray*}\end{theorem}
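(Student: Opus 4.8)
Theorem \ref{the57} asserts a four-term chain of inequalities relating Hausdorff and packing dimensions of $E$, $F$, and $E\times F$. The plan is to prove each of the three inequalities using the point-to-set principles (Theorems \ref{the51} and \ref{the52}) together with the chain rule for dimension (Theorem \ref{theo46}), exactly in the spirit of the proof of Theorem \ref{the56} above. The unifying idea is that every bound on a dimension of a product set reduces, via the appropriate point-to-set principle, to a pointwise statement about $\dim^A(x,y)$ or $\Dim^A(x,y)$, which in turn is governed by the chain rule relating the dimension of the pair $(x,y)$ to the (conditional) dimensions of $x$ and $y$.

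For the first inequality $\dimh(E)+\dimh(F)\le\dimh(E\times F)$, I would invoke Theorem \ref{the51} to fix an oracle $A$ with $\dimh(E\times F)=\sup_{(x,y)\in E\times F}\dim^A(x,y)$. Given $\epsilon>0$, I would choose $x\in E$ with $\dim^{A}(x)>\dimh(E)-\epsilon$ (applying the relativized point-to-set principle to $E$ with oracle $A$) and then, relative to the oracle $(A,x)$, choose $y\in F$ with $\dim^{A,x}(y)>\dimh(F)-\epsilon$. By the chain rule $\dim^A(x,y)\ge\dim^A(x)+\dim^{A}(y\mid x)$ together with Lemma \ref{lem44} giving $\dim^{A,x}(y)\le\dim^A(y\mid x)$, we get $\dim^A(x,y)>\dimh(E)+\dimh(F)-2\epsilon$, and since $(x,y)\in E\times F$ this point witnesses $\dimh(E\times F)>\dimh(E)+\dimh(F)-2\epsilon$. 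Letting $\epsilon\to 0$ finishes this step. The last inequality $\dimpack(E\times F)\le\dimpack(E)+\dimpack(F)$ is dual: use Theorem \ref{the52} for the product, fix the witnessing oracle $A$, and bound $\Dim^A(x,y)\le\Dim^A(x)+\Dim^A(y\mid x)$ via the chain rule, then dominate $\Dim^A(x)\le\sup_{x\in E}\Dim^A(x)=\dimpack^A(E)=\dimpack(E)$ and similarly $\Dim^A(y\mid x)\le\Dim^{(A,\text{appropriate oracle})}(y)\le\dimpack(F)$, using that conditioning is weaker than oracle access.

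The two middle inequalities are the ones that genuinely mix Hausdorff and packing dimensions, and I expect the mixed inequality $\dimh(E\times F)\le\dimh(E)+\dimpack(F)$ to be the main obstacle. The natural approach is to apply Theorem \ref{the51} to $E\times F$, obtaining an oracle $A$ with $\dimh(E\times F)=\sup_{(x,y)\in E\times F}\dim^A(x,y)$, and then for a near-optimal point $(x,y)$ use the chain rule in the form $\dim^A(x,y)\le\dim^A(x)+\Dim^A(y\mid x)$ (the second inequality in Theorem \ref{theo46}). The term $\dim^A(x)$ is bounded above by $\sup_{x\in E}\dim^A(x)\le\dimh^A(E)$, which is at most $\dimh(E)$ only if $A$ happens to be an optimal oracle for $E$ — but $A$ was chosen optimal for $E\times F$, not for $E$. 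Resolving this is the crux: one must argue that an optimal oracle for the product is (or can be taken, after adjoining more information) optimal for the factor, or alternatively bound $\Dim^A(y\mid x)$ by $\dimpack(F)$ and $\dim^A(x)$ by $\dimh(E)$ simultaneously by a more careful two-oracle argument. A clean route is: first apply the point-to-set principle for packing dimension to $F$ to fix $B$ with $\dimpack(F)=\sup_y\Dim^B(y)$, then apply the Hausdorff point-to-set principle for $E$ relative to $B$ to fix $A\supseteq B$ with $\dimh(E)=\dimh^A(E)$ (noting $\dimh^A(E)\le\dimh(E)$ always by Theorem \ref{the51}, so this is an equality), and observe $\dimh^A(E\times F)\le\sup_{(x,y)}[\dim^A(x)+\Dim^A(y\mid x)]\le\dimh(E)+\dimpack(F)$, finally using $\dimh(E\times F)\le\dimh^A(E\times F)$. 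The third inequality $\dimh(E)+\dimpack(F)\le\dimpack(E\times F)$ proceeds symmetrically using the lower-bound direction of the chain rule $\Dim^A(x,y)\ge\dim^A(x)+\Dim^A(y\mid x)$ and choosing points that nearly realize $\dimh(E)$ and $\dimpack(F)$ in the respective coordinates. The delicate point throughout is matching the oracle relativizations so that each factor's dimension is evaluated at an oracle that is optimal (or near-optimal) for that factor, which is exactly where the nesting $A\supseteq B$ and the monotonicity $\dim^A\le\dim^B$ for $A\supseteq B$ do the work.
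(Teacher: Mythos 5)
Your strategy---deriving the product formula from the point-to-set principles (Theorems \ref{the51} and \ref{the52}) and the chain rule for dimension (Theorem \ref{theo46})---is precisely the approach the paper attributes to N.~Lutz \cite{NLut17}, and the ``clean route'' in your final paragraph, nesting the oracle so that it is simultaneously near-optimal for each factor, is the right way to handle the oracle-matching difficulty you correctly identify as the crux. Your first and third inequalities (the two lower bounds) are fine as written, since there one only needs, for an \emph{arbitrary} oracle $A$, a point $(x,y)\in E\times F$ with $\dim^A(x,y)$ large, which Lemma \ref{lem44} and the relativized point-to-set principles for $E$ and $F$ supply.

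One caution on the fourth inequality as initially sketched: fixing the oracle $A$ that witnesses the point-to-set minimum for $E\times F$ is not the right opening move for the upper-bound directions, because such an $A$ need not be optimal (or even good) for $E$ or $F$ separately; there you should instead \emph{construct} $A$ as a join of oracles optimal for the factors, exactly as in your clean route for the second inequality. Also, the step $\Dim^A(y\mid x)\le\Dim^{(A,\cdot)}(y)$ ``using that conditioning is weaker than oracle access'' has Lemma \ref{lem44} backwards---the lemma gives $\Dim^{A,x}(y)\le\Dim^A(y\mid x)$, not the reverse. What you actually need there is the elementary monotonicity $\Dim^A(y\mid x)\le\Dim^A(y)$ (side information cannot increase complexity, so $\K_r(y\mid x)\le\K_r(y)+O(1)$), followed by $\Dim^A(y)\le\Dim^B(y)$ for $B\subseteq A$. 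With those two corrections folded in, all four inequalities go through as you lay out, and the proof matches the one the paper summarizes.
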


This new proof is much simpler than previously known proofs of
Theorem \ref{the57}, roughly as simple as previously known proofs of
the restriction of Theorem \ref{the57}\ to Borel sets. The new proof
is also quite natural, using the point-to-set principle to derive
Theorem \ref{the57}\ from the formally similar Theorem \ref{theo46}.

   \subsection{Generalized Furstenberg Sets} For $\alpha\in(0,1]$, a
   plane set $E\subseteq\R^2$ is said to be {\sl of Furstenberg type
   with parameter\/} $\alpha$ or, more simply, {\sl
   $\alpha$-Furstenberg}, if, for every direction $e\in S^1$ (where
   $S^1$ is the unit circle in $\R^2$), there is a line
   $\mathcal{L}_e$ in direction $e$ such that
   $\dimh(\mathcal{L}_e\cap E)\ge \alpha$.

According to Wolff \cite{Wolf99}, the following well-known bound is
probably due to Furstenberg and Katznelson.

\begin{theorem}\label{the58} For every $\alpha\in(0,1]$, every
$\alpha$-Furstenberg set $E\subseteq\R^2$
satisfies\begin{equation*}\dimh(E)\ge \alpha+\max\{1/2,
\alpha\}.\end{equation*}\end{theorem}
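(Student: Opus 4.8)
The plan is to deduce Theorem \ref{the58} from the point-to-set principle for Hausdorff dimension (Theorem \ref{the51}), in the style of the information-theoretic proof of Davies's theorem in Section \ref{subsec:51}. First I would use Theorem \ref{the51} to fix an oracle $A\subseteq\N$ with $\dimh(E)=\sup_{p\in E}\dim^A(p)$; since every oracle extending $A$ also realizes $\dimh(E)$, I am free to enlarge $A$ when needed. It then suffices to show that for each $\epsilon>0$ there is a point $p\in E$ with $\dim^A(p)>\alpha+\max\{1/2,\alpha\}-\epsilon$, and I would obtain this from the two lower bounds $\dimh(E)\ge 2\alpha$ and $\dimh(E)\ge\alpha+\tfrac12$: the first handles the case $\alpha\ge 1/2$ and the second the case $\alpha<1/2$. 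By covering $S^1$ with finitely many arcs and rotating coordinates (the $\alpha$-Furstenberg property is rotation-invariant), I may assume the line at hand has finite slope in a fixed bounded interval.

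For $\dimh(E)\ge 2\alpha$, choose a slope $m$ with $\dim^A(m)=1$ (for instance any $m$ that is Martin-L\"of random relative to $A$), and use the $\alpha$-Furstenberg hypothesis to fix a line $\ell\colon y=mx+b$ with $\dimh(\ell\cap E)\ge\alpha$. The bi-Lipschitz parametrization $x\mapsto(x,mx+b)$ shows that $X=\{x\in\R:(x,mx+b)\in E\}$ has $\dimh(X)\ge\alpha$, so applying Theorem \ref{the51} to $X$ relative to the oracle $(A,m,b)$ produces a point $x\in X$ with $\dim^{A,m,b}(x)>\alpha-\epsilon$; note that $(x,mx+b)\in E$. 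Since $\dim^A(m,b)\ge\dim^A(m)=1\ge\dim^{A,m,b}(x)$, the relativization of Theorem \ref{the401} to $A$ yields
\[
\dim^A(x,mx+b) \ge \dim^{A,m,b}(x)+\min\{\dim^A(m,b),\dim^{A,m,b}(x)\} = 2\dim^{A,m,b}(x) > 2(\alpha-\epsilon),
\]
and letting $\epsilon\to 0$ gives $\dimh(E)\ge 2\alpha$, which is $\alpha+\max\{1/2,\alpha\}$ when $\alpha\ge 1/2$.

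The bound $\dimh(E)\ge\alpha+\tfrac12$, required when $\alpha<1/2$, is the main obstacle, since the machinery surveyed above is not enough. Theorem \ref{the401} cannot do better than the previous paragraph: when $\dim^{A,m,b}(x)=\alpha$ is small, its $\min$ term contributes only $\alpha$. Lemma \ref{lem53} would contribute the full $\dim^A(m)=1$, but it holds only for \emph{almost every} $x$, whereas $\ell\cap E$ is $\mathcal{L}^1$-null when $\alpha<1$ and contains no such $x$. The natural replacement is an effective, pointwise form of Marstrand's projection/slicing theorem --- of the kind later developed by N. Lutz and Stull --- used in a proof by contradiction: assume $\dim^A(p)\le\gamma$ for all $p\in E$ with $\gamma<\alpha+\tfrac12$, and view each Furstenberg fiber $\ell_e\cap E$ as the slice $E\cap\pi_{e^\perp}^{-1}(t_e)$ of $E$ by the projection perpendicular to the direction $e$. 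Passing to a suitable oracle extending $A$ (so that the effective slicing estimate is available in that relativized world while, by the point-to-set principle, $\gamma$ is unchanged), one would show that for a full-measure set of directions $e$ the slice $E\cap\pi_{e^\perp}^{-1}(t_e)$ has Hausdorff dimension strictly below $\alpha$, contradicting the $\alpha$-Furstenberg hypothesis for at least one admissible $e$. The delicate points are importing the projection theorem, which lies outside the toolkit of this survey; choosing the relativization so that the slicing estimate holds while $\gamma$ is preserved --- the same phenomenon present in the Davies proof, where an enabling complexity hypothesis is true only in a relativized world yet the conclusion holds for every oracle --- and the book-keeping relating the dimension of a point of $\ell_e$ to those of $e$ and of the intercept $t_e$.

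Since each argument is carried out relative to the oracle $A$ furnished by Theorem \ref{the51}, the two bounds combine to give $\dimh(E)=\sup_{p\in E}\dim^A(p)\ge\max\{2\alpha,\,\alpha+\tfrac12\}=\alpha+\max\{1/2,\alpha\}$, which is the assertion of Theorem \ref{the58}.
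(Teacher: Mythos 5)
The survey does not prove Theorem~\ref{the58}. It is stated as a classical, well-known result, attributed to Furstenberg and Katznelson and cited to Wolff \cite{Wolf99}; it serves only as background for the later discussion of generalized Furstenberg sets. The algorithmic contribution of the survey in this direction is Theorem~\ref{the510}, which (setting $\beta=1$) gives $\dimh(E)\ge\alpha+\min\{1,\alpha\}=2\alpha$ --- this coincides with Theorem~\ref{the58} only when $\alpha\ge 1/2$ and is strictly weaker when $\alpha<1/2$. So there is no proof of the statement in the paper to compare against, and in particular the survey's own machinery was not claimed to reach the $\alpha+1/2$ bound.

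With that context, your proposal is half-correct. Your derivation of $\dimh(E)\ge 2\alpha$ via the point-to-set principle (Theorem~\ref{the51}), the bi-Lipschitz reparametrization of the fiber, and a relativization of Theorem~\ref{the401} with a slope $m$ random relative to $A$ is sound, and it is essentially the argument that underlies Theorem~\ref{the510} in the $\beta=1$ case. You are also right, and commendably candid, in identifying that Theorem~\ref{the401} caps out at $2\alpha$ (the $\min$ term can contribute at most $\dim^{A,m,b}(x)$), that Lemma~\ref{lem53} is inapplicable because its a.e.\ hypothesis need not be met on a measure-zero fiber, and that obtaining $\alpha+1/2$ requires a genuinely different mechanism. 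But that is exactly where the proposal stops being a proof: the ``effective, pointwise form of Marstrand's projection/slicing theorem'' you invoke is not established (or even precisely stated), the proof-by-contradiction over a full-measure set of directions is only gestured at, and the claim that a suitable relativization preserves $\gamma$ while enabling the slicing estimate is an assertion, not an argument. As written, the case $\alpha<1/2$ --- which is precisely the case where the $\alpha+1/2$ bound is the binding one --- is not proved. The result therefore remains unestablished by your proposal; what you have is a correct proof of the $2\alpha$ bound (hence of the theorem for $\alpha\ge 1/2$), plus a research plan for the remaining range.

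One smaller remark: you justify $\dim^A(m,b)\ge 1$ from $\dim^A(m)=1$. This does need the chain rule (Theorem~\ref{theo46} relativized), $\dim^A(m,b)\ge\dim^A(m)+\dim^A(b|m)\ge\dim^A(m)$; dimension of a tuple is not in general monotone in its coordinates without that. You implicitly use it correctly, but it should be cited.
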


Note that every Kakeya set in the plane is 1-Furstenberg (since it
contains a line segment, which has Hausdorff dimension 1, in every
direction $e\in S^1$), so Davies's theorem follows from the case
$\alpha=1$ of Theorem \ref{the58}. It is an open question -- one
with connections to Falconer's distance conjecture \cite{KatTao01}\
and Kakeya sets \cite{Wolf99} -- whether Theorem \ref{the58}\ can be
improved.

In 2012, Molter and Rela generalized $\alpha$-Furstenberg sets in a
natural way. For $\alpha, \beta\in(0,1]$, a
   set $E\subseteq\R^2$ is  {\sl
   $(\alpha, \beta)$-generalized Furstenberg\/} if there is a set
   $J\subseteq S^1$ such that $\dimh(J)\ge \beta$ and, for every
   $e\in J$, there is a line
   $\mathcal{L}_e$ in direction $e$ such that
   $\dimh(\mathcal{L}_e\cap E)\ge \alpha$. They then proved the
   following lower bound.
   \begin{theorem}\label{the59}{(Molter and Rela \cite{MolRel12})}
   For $\alpha, \beta\in(0,1]$, every
   $(\alpha, \beta)$-generalized Furstenberg set $E\subseteq\R^2$
   satisfies \begin{equation*}\dimh(E)\ge \max\{\beta/2,
   \alpha+\beta-1\}.\end{equation*}\end{theorem}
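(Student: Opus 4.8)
The plan is to use the point-to-set principle for Hausdorff dimension (Theorem~\ref{the51}) to reduce the bound on $\dimh(E)$ for an $(\alpha,\beta)$-generalized Furstenberg set $E$ to a pointwise statement about the effective dimension of a single, carefully chosen point in $E$, relative to a single oracle. Fix such a set $E$, let $J\subseteq S^1$ witness the defining property ($\dimh(J)\ge\beta$ and each $e\in J$ has a line $\mathcal{L}_e$ with $\dimh(\mathcal{L}_e\cap E)\ge\alpha$), and let $A\subseteq\N$ be an oracle optimal for $E$ in the sense of \eqref{equ51}, so that $\dimh(E)=\sup_{x\in E}\dim^A(x)$. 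I will also apply the point-to-set principle (relative to $A$) to $J$: this gives a direction $e\in J$ with $\dim^A(e)$ arbitrarily close to $\beta$; parametrizing $\mathcal{L}_e$ as the line $y=ax+b$ with slope $a$ determined by $e$, we have $\dim^A(a,b)$ essentially at least $\dim^A(e)\approx\beta$ (after a harmless change of parameters, and after further relativizing to whatever finite data pins down $b$).

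**Key steps.** First, I would extract from $\mathcal{L}_e\cap E$, using the point-to-set principle relative to $(A,a,b)$, a point of the form $(x,ax+b)\in E$ with $\dim^{A,a,b}(x)$ as close to $\alpha$ as desired. Second, I would invoke the Lutz--Stull bound of Theorem~\ref{the401} (in its relativized form, relative to $A$), which gives
\[
\dim^A(x,ax+b)\ge \dim^{A,a,b}(x)+\min\{\dim^A(a,b),\,\dim^{A,a,b}(x)\}.
\]
Plugging in $\dim^{A,a,b}(x)\gtrsim\alpha$ and $\dim^A(a,b)\gtrsim\beta$ yields $\dim^A(x,ax+b)\gtrsim\alpha+\min\{\beta,\alpha\}$. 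Since $(x,ax+b)\in E$, the point-to-set principle gives $\dimh(E)\ge\alpha+\min\{\alpha,\beta\}$. Separately, to obtain the $\beta/2$ term I would argue that for almost every choice of direction parameter the slope $a$ is random enough that the second term in Theorem~\ref{the401} is governed by $\dim^{A,a,b}(x)$ rather than capped, combined with the observation (as in the plane Kakeya argument of Section~5.2 and Theorem~\ref{the402}) that a line whose slope has dimension $1$ relative to the oracle contains points of dimension $2$; more precisely, one plays off the trade-off between how much dimension the direction $e$ spends on being ``complex'' versus how much is available to the point $x$ on the line. The bound $\dimh(E)\ge\beta/2$ then follows from selecting $e\in J$ with $\dim^A(e)\approx\beta$ and using that roughly half of this complexity transfers to a point on $\mathcal{L}_e\cap E$ via the geometry of Theorem~\ref{the401}; taking the maximum over the two regimes gives $\dimh(E)\ge\max\{\beta/2,\alpha+\beta-1\}$. (The $\alpha+\beta-1$ form emerges when $\beta$ is large, where $\min\{\alpha,\beta\}=\alpha$ degrades to $\alpha+\beta-1$ after accounting for the fact that conditioning on a $\beta$-dimensional direction costs up to $1-\beta$ in the transferred complexity — this is where the relativization of the oracle and a symmetry-of-information step enter.)

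**Main obstacle.** The hard part will be the bookkeeping that turns the relativized dimension inequalities into the two different extremal expressions $\beta/2$ and $\alpha+\beta-1$, and in particular handling the ``$\max$'': Theorem~\ref{the401} is stated only for $a\in\R$ with no constraint, but getting the $\beta/2$ branch requires knowing that when the direction set $J$ is ``spread out'' one can find $e\in J$ whose slope $a$ has high dimension relative to $A$ \emph{while} $b$ remains cheap, so that the chain rule (Theorem~\ref{theo46}) and Lemma~\ref{lem44} let the complexity of $(a,b,x)$ distribute favorably between the two coordinates of the point $(x,ax+b)$. Controlling the conditional term $\dim^{A,a,b}(x)$ versus $\dim^A(x,ax+b)$ simultaneously with $\dim^A(a,b)$ — i.e., ruling out that \emph{all} of the direction's complexity is ``wasted'' and none survives on the line — is the genuinely delicate point, and is exactly the place where one must choose the optimal oracle $A$ for $E$ first and only afterward extract the points, so that no further oracle penalty is incurred. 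Once that ordering of quantifiers is set up correctly (oracle for $E$, then point-to-set applied to $J$ relative to $A$, then to $\mathcal{L}_e\cap E$ relative to $(A,a,b)$, then Theorem~\ref{the401} relative to $A$), the two numerical bounds should fall out by elementary manipulation of liminfs and the chain rule.
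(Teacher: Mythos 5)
First, a scope note: the paper does not prove Theorem~\ref{the59}. It cites it as a pre-existing classical result of Molter and Rela, and the algorithmic argument in the paper is instead used to prove Theorem~\ref{the510}, which gives $\dimh(E)\ge\alpha+\min\{\alpha,\beta\}$. Your sketch is essentially a correct reconstruction of \emph{that} proof: fix the optimal oracle $A$ for $E$ via the point-to-set principle, apply the point-to-set principle relative to $A$ to $J$ to extract a direction $e$ (hence $(a,b)$) with $\dim^A(a,b)$ close to $\beta$, apply it relative to $(A,a,b)$ to $\mathcal{L}_e\cap E$ to get $x$ with $\dim^{A,a,b}(x)$ close to $\alpha$, and then invoke the relativized Theorem~\ref{the401}. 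Since $\min\{\alpha,\beta\}\ge\beta-1$ for all $\alpha,\beta\in(0,1]$, this does establish the $\alpha+\beta-1$ branch of the stated bound; your later remark about that branch ``emerging by degradation'' of $\min\{\alpha,\beta\}=\alpha$ is misdirected --- the stronger inequality $\alpha+\min\{\alpha,\beta\}$ simply implies the weaker one, and no further argument is needed there.

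The genuine gap is the $\beta/2$ branch, and your sketch for it does not go through. The bound $\alpha+\min\{\alpha,\beta\}$ does not dominate $\beta/2$ in general: for $\alpha=0.1$, $\beta=1$ your method yields $0.2<0.5=\beta/2$, so a separate argument is truly required. The mechanism you propose (``roughly half of this complexity transfers to a point on $\mathcal{L}_e\cap E$ via the geometry of Theorem~\ref{the401}'') is not available from Theorem~\ref{the401}: its conclusion is $\dim^{a,b}(x)+\min\{\dim(a,b),\dim^{a,b}(x)\}$, so when the fiber dimension $\dim^{A,a,b}(x)\approx\alpha$ is small, the minimum caps the second summand at $\alpha$ regardless of how large $\dim^A(a,b)\approx\beta$ is --- the direction's surplus complexity simply cannot contribute. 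Your appeal to the plane Kakeya argument and to ``a line whose slope has dimension $1$ contains a point of dimension $2$'' also fails here, because a typical $e\in J$ only has $\dim^A(e)\approx\beta<1$, which is exactly the hypothesis that argument requires and that you do not have. In short, your method reproves the paper's Theorem~\ref{the510}, which yields the $\alpha+\beta-1$ branch of Theorem~\ref{the59} but not the $\beta/2$ branch; that branch rests on a classical counting argument in Molter and Rela's paper which the algorithmic machinery surveyed here does not reproduce, and this is precisely why the paper presents Theorem~\ref{the510} as an improvement on Theorem~\ref{the59} only in a restricted parameter range rather than in full.
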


   Note that every $\alpha$-Furstenberg set is
   $(\alpha,1)$-generalized Furstenberg, so Theorem \ref{the58}\
   follows from the case $\beta=1$ of Theorem \ref{the59}.

   Algorithmic dimensions were recently used to prove the following
   result, which improves Theorem \ref{the59}\ when $\alpha, \beta
   \in (0,1)$ and $\beta<2\alpha$.

   \begin{theorem}\label{the510}{(N. Lutz and Stull
   \cite{LutStu17})} For all $\alpha, \beta\in(0,1]$, every
   $(\alpha, \beta)$-generalized Furstenberg set $E\subseteq\R^2$
   satisfies\begin{equation*}\dimh(E)\ge\alpha+\min\{\beta,\alpha\}.\end{equation*}\end{theorem}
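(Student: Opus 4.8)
The plan is to use the point-to-set principle (Theorem~\ref{the51}) together with the lower bound of N.~Lutz and Stull on the dimensions of points on lines (Theorem~\ref{the401}). Let $E\subseteq\R^2$ be $(\alpha,\beta)$-generalized Furstenberg, witnessed by a set $J\subseteq S^1$ with $\dimh(J)\ge\beta$ and lines $\mathcal{L}_e$, $e\in J$, with $\dimh(\mathcal{L}_e\cap E)\ge\alpha$. By Theorem~\ref{the51}, fix an oracle $A\subseteq\N$ such that $\dimh(E)=\sup_{p\in E}\dim^A(p)$. The goal is then to exhibit, for every $\varepsilon>0$, a point $p\in E$ with $\dim^A(p)\ge\alpha+\min\{\beta,\alpha\}-\varepsilon$.

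First I would apply the point-to-set principle to the set $J$ of directions: since $\dimh(J)\ge\beta$, there is a direction $e\in J$ with $\dim^A(e)\ge\beta-\varepsilon$ (more precisely, one should take the oracle $A$ to already be chosen so it is optimal for $E$, and then note $\dimh(J)\le\sup_{e\in J}\dim^A(e)$ since relativizing to more information only decreases pointwise dimension; so some $e\in J$ has $\dim^A(e)$ close to $\beta$). Parametrize the line $\mathcal{L}_e$ as $y=ax+b$, where the slope $a$ is essentially the data of $e$, so $\dim^A(a)\gtrsim\beta$. Next, apply the point-to-set principle \emph{relative to $(A,a,b)$} to the set $\mathcal{L}_e\cap E$, which has Hausdorff dimension $\ge\alpha$: this produces a point $(x,ax+b)\in E$ with $\dim^{A,a,b}(x)\ge\alpha-\varepsilon$ — that is, a point on the line whose horizontal coordinate carries dimension $\ge\alpha$ even given the line. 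Now invoke Theorem~\ref{the401}, relativized to $A$: $\dim^A(x,ax+b)\ge\dim^{A,a,b}(x)+\min\{\dim^A(a,b),\dim^{A,a,b}(x)\}\ge\alpha+\min\{\dim^A(a,b),\alpha\}-O(\varepsilon)$, and since $\dim^A(a,b)\ge\dim^A(a)\gtrsim\beta$ this gives $\dim^A(x,ax+b)\gtrsim\alpha+\min\{\beta,\alpha\}$. Because $(x,ax+b)\in E$, the point-to-set principle yields $\dimh(E)\ge\alpha+\min\{\beta,\alpha\}-O(\varepsilon)$, and letting $\varepsilon\to0$ finishes the proof.

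The main obstacle is the nesting of oracles and relativizations, and making sure the quantities line up. In particular, after choosing $A$ optimal for $E$, I must extract a good direction $e$ using only $A$ (not a fresh oracle), which requires the observation that $\dim^A$ on $J$ already controls $\dimh(J)$ from above — this is the direction of the point-to-set principle that holds for \emph{every} oracle. Then, to get a point on $\mathcal{L}_e\cap E$ of large dimension \emph{conditioned on the line}, I need the point-to-set principle relative to $(A,a,b)$, and I must check that the resulting point still lies in $E$ (it does, since $\mathcal{L}_e\cap E\subseteq E$) and that $\dim^A(x,ax+b)$, not merely $\dim^{A,a,b}(x,ax+b)$, is what Theorem~\ref{the401} delivers — indeed Theorem~\ref{the401} relativized to $A$ is exactly the statement $\dim^A(x,ax+b)\ge\dim^{A,a,b}(x)+\min\{\dim^A(a,b),\dim^{A,a,b}(x)\}$, so this is automatic. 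A minor point to handle carefully: the slope $a$ of $\mathcal{L}_e$ need not lie in $[0,1]$, but Theorem~\ref{the401} as stated is for all $a,b,x\in\R$, so there is no range restriction to worry about; one only needs $e$ to determine $a$ computably, which is fine away from the vertical direction (and the vertical direction, being a single point of $S^1$, can be excluded from $J$ without lowering $\dimh(J)$).

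Finally, I would remark that the improvement over Theorem~\ref{the59} comes precisely from the $\min\{\beta,\alpha\}$ term: when $\beta<2\alpha$ and $\alpha,\beta<1$, the bound $\alpha+\min\{\beta,\alpha\}$ exceeds both $\beta/2$ and $\alpha+\beta-1$, so the point-to-set argument strictly beats the classical covering bound. The conceptual gain is the same as in the Kakeya proof sketched above: rather than reasoning about how the fibers $\mathcal{L}_e\cap E$ fit together globally, one picks a single direction of near-maximal information content, a single point on that fiber of near-maximal conditional information content, and lets the line-dimension inequality combine the two.
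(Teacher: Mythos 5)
Your proof is correct and follows exactly the approach the paper attributes to \cite{LutStu17}: the point-to-set principle (Theorem~\ref{the51}) applied first to $E$ to fix an optimal oracle $A$, then to $J$ and to $\mathcal{L}_e\cap E$ in the ``for every oracle'' direction, combined with the relativization of Theorem~\ref{the401}, and the bookkeeping (the bi-Lipschitz identification of $e$ with the slope $a$ away from the vertical direction, the monotonicity $\dim^A(a,b)\ge\dim^A(a)$, and the monotonicity of $v+\min\{u,v\}$ in both arguments) all check out. The structure also mirrors the paper's Kakeya argument, as you observe, so there is nothing to add.
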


   The proof of Theorem \ref{the510} uses the point-to-set principle
   and Theorem \ref{the401}.

\section{Research Directions}\label{sec:6}

   \subsection{Beyond Self-Similarity  }
In previous sections we have analyzed the dimension of points in
self-similar fractals, but interesting natural examples need more
elaborated concepts that combine self-similarity with random
selection. In \cite{GLMM14}\ Gu, Moser, and the authors started the
more challenging task of analyzing the dimensions of points in
random fractals. They focused on fractals that are randomly selected
subfractals of a given self-similar fractal.

Let $F\subseteq\Rn$ be a computably self-similar fractal as defined
in section \ref{subsec:33}, with $S=(S_0,\dots, S_{k-1})$ the
corresponding IFS, and $\Sigma=\{0, \ldots, k-1\}$. Recall that each
point $x\in F$ has a coding sequence $T\in\Sigma^{\infty}$ meaning
that the point $x$ is obtained by applying the similarities coded by
the successive symbols in $T$.
 We are interested in
certain randomly selected subfractals of the fractal $F$.

 The
specification of a point in such a subfractal can be formulated as
the outcome of an infinite two-player game between a {\sl
selector\/} that selects the subfractal and a {\sl coder\/} that
selects a point within the subfractal. Specifically, the selector
selects $r$ out of the $k$ similarities and this choice depends on
the coder's earlier choices, that is, a {\sl selector\/} is a
function $\sigma:\Gamma^*\rightarrow[\Sigma]^r$ where $[\Sigma]^r$
is the set of all $r$-element subsets of $\Sigma$, alphabet
$\Gamma=\{0, \ldots, r-1\}$ and each element in $\Gamma^*$
represents a coder's earlier history. A {\sl coder\/} is a sequence
$U\in\Gamma^\infty$, that is, the coder is selecting a point in the
subfractal by repeatedly choosing a similarity out of the $r$
previously picked by the selector. Once a selector $\sigma$ and a
coder $U$ have been chosen, the outcome of the selector-coder game
is a point determined by the sequence $\sigma *U\in\Sigma^\infty$,
that can be precisely defined as
\begin{equation*}(\sigma
*U)[t]= \mbox{``the $U[t]$th element of $\sigma(U[0 .. t-1])$''}\end{equation*}
for all $t\in\N$.

Each selector $\sigma$  specifies (selects) the subfractal
$F_\sigma$ of $F$ consisting of all points with coding sequence $T$
for which $T$ is an outcome of playing $\sigma$ against some coder,
$F_\sigma =\myset{S(\sigma*U)}{U\in\Gamma^\infty}$.

The focus of \cite{GLMM14}\ is in randomly selected subfractals of
$F$, by which we mean subfractals $F_\sigma$ of $F$ for which the
selector $\sigma$ is random with respect to some probability
measure. That is, we are interested in the case where the coder is
playing a ``game against nature'' (in order to make precise the idea
of algorithmically random selector
 each selector $\sigma:\Gamma^*\rightarrow [\Sigma]^r$ is identified with its {\sl characteristic\/} sequence $\chi_\sigma\in
([\Sigma]^r)^\infty$).

Gu et al. determine the dimension spectra of a wide class of such
randomly selected subfractals, showing that each such fractal has a
dimension spectrum that is a closed interval whose endpoints can be
computed or approximated from the parameters of the fractal. In
general, the maximum of the spectrum is determined by the degree to
which the coder can {\sl reinforce} the randomness in the selector,
while the minimum is determined by the degree to which the coder can
{\sl cancel } randomness in the selector. This randomness
cancellation phenomena has also arisen in other contexts, notably
dimension spectra of random closed sets \cite{BBCDW07,DiaKjo09} and
of random translations of the Cantor set \cite{DLMT12}. The main
result in \cite{GLMM14}\ concerns subfractals that are similarity
random,  that is, $F_\sigma$ defined by a selector $\sigma$ that is
$\hat{\pi_S}$-random. Here $\hat{\pi_S}$ is the natural extension of
$\pi_s$, the similarity probability measure on $\Sigma$ defined in
Section \ref{subsec:33}.

\begin{theorem}{\cite{GLMM14}}\label{th4.1n} For every similarity random subfractal $F_\sigma$ of $F$,  {the dimension spectrum}  $\mathrm{sp}(F_\sigma)$
 is an interval satisfying
$[s^*\frac{\log(k-1)-\log(r-1+ A (k-r))}{\log \frac{1}{a}},
s^*]\subseteq \mathrm{sp}(F_\sigma) \subseteq [s^*\frac{\log k -
\log r}{\log \frac{1}{A}}, s^*]$,
where $s^*=\sdim(S)$,
 $a=\min\{\pi_S(i)\,|\, i\in\Sigma\}$, and
  $A=\max\{\pi_S(i)\,|\, i\in\Sigma\}$.

 {In particular}, if all the
  contraction ratios of $F$ have the same value $c$, then every
  similarity-random (i.e., uniformly random) subfractal $F_\sigma$ of $F$
  has dimension spectrum
\begin{equation*}\mathrm{sp}(F_\sigma)= [s^*(1-\frac{
\log r}{\log k}), s^*], \end{equation*} where $s^*=\sdim(S)=(\log
k)/(\log \frac{1}{c})$. \end{theorem}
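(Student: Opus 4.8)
\emph{Proof plan.} The plan is to transport the whole question from $F$ into the coding space $\Sigma^{\infty}$ and there to analyze a Kolmogorov complexity optimization in which the $\hat{\pi_S}$-randomness of the selector acts as an incompressible resource that the coder can partly, but not wholly, neutralize. Since $F$ is computably self-similar with witnessing IFS $S$, Theorem \ref{theo32} applies to every point of $F$, hence to every point of $F_\sigma$; as each $x\in F_\sigma$ equals $S(\sigma*U)$ for some coder $U\in\Gamma^{\infty}$, with $\sigma*U$ a coding sequence for $x$, we get $\dim(x)=s^{*}\dim^{S}(\sigma*U)$, and therefore $\spec(F_\sigma)=s^{*}\cdot D_\sigma$, where $D_\sigma=\{\dim^{S}(\sigma*U):U\in\Gamma^{\infty}\}\subseteq[0,1]$ and $\dim^{S}(T)=\liminf_{j\to\infty}\K(T[0..j])/l_S(T[0..j])$. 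It thus suffices to pin down $D_\sigma$; note that $\pi_S$, and hence $\hat{\pi_S}$, is computable because $s^{*}=\sdim(S)$ is the computable root of $\sum_i c_i^{s}=1$, so Martin-L\"of tests with respect to these measures are available.

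Next I would establish that $D_\sigma$ is a closed subinterval of $[0,1]$ and locate its top. The interval property follows from a standard dilution argument: from coders witnessing values $d_1<d_2$ in $D_\sigma$ one builds, for any $d\in[d_1,d_2]$, a coder alternating long blocks imitating the two witnesses (each block restarted from the node the previous one reached, which is legitimate because $\dim^{S}$ is an asymptotic $\liminf$ and $w\mapsto\sigma(h\cdot w)$ is again $\hat{\pi_S}$-random), with block lengths tuned so that $\liminf_{j\to\infty}\K(T[0..j])/l_S(T[0..j])=d$. For the top: the Kraft identity $\sum_{w\in\Sigma^{n}}\pi_S(w)=1$ yields a prefix code, hence $\K(w)\le l_S(w)+O(\log|w|)$ and $\dim^{S}(T)\le 1$ for every $T$; conversely, for the fixed $\hat{\pi_S}$-random $\sigma$ one chooses a coder $U$ that at each node draws its index with the conditional probabilities inherited from $\pi_S$ inside the offered $r$-set and is random relative to $\sigma$, which makes $\sigma*U$ Martin-L\"of random for the computable measure $\pi_S$; then $\K((\sigma*U)[0..j])\ge l_S((\sigma*U)[0..j])-O(1)$ by the Levin--Schnorr criterion, so $\dim^{S}(\sigma*U)=1$ and $\max\spec(F_\sigma)=s^{*}$.

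The bottom of the spectrum needs two matching estimates, and here lies the main obstacle. The key combinatorial fact is that, for the fixed $\sigma$ and a target $w\in\Sigma^{j}$, there is a coder with output $w$ if and only if $w$ survives the greedy walk down $\sigma$, an event that factors over $j$ distinct nodes and hence has probability $\prod_{t<j}\Prob_{R\sim\hat{\pi_S}}[\,w[t]\in R\,]$. Summing this over all $w$ whose Kolmogorov complexity is small relative to $l_S$ gives an exponentially small total whose exponent is exactly the critical rate; packaged via Borel--Cantelli as a Solovay test that the $\hat{\pi_S}$-random $\sigma$ must pass, this shows that for all large $j$ \emph{no} coder can produce a string whose complexity falls below that rate, which gives the outer inclusion $\spec(F_\sigma)\subseteq[\,s^{*}\tfrac{\log k-\log r}{\log(1/A)},\,s^{*}\,]$. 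The obstacle is precisely that the coder may depend on all of $\sigma$ and could otherwise steer toward a compressible sub-tree; the union bound over the at most $r^{j}$ possible coder outputs is what defeats this adaptivity. For the reverse inclusion one checks the threshold is attained: just above the critical rate the expected number of producible low-complexity strings is at least one, so the coder can greedily thread such a string, and emitting a symbol of minimal $\pi_S$-weight $a$ whenever one is on offer inflates $l_S$ optimally, delivering the inner inclusion $[\,s^{*}\tfrac{\log(k-1)-\log(r-1+A(k-r))}{\log(1/a)},\,s^{*}\,]\subseteq\spec(F_\sigma)$; with the interval property this is the asserted two-sided estimate.

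Finally, for uniform contraction ratios $c_i=c$ we have $s^{*}=\log k/\log(1/c)$, $\pi_S$ uniform, $a=A=1/k$, and $l_S(w)=|w|\log k$, so $r-1+A(k-r)=r(k-1)/k$ and the left endpoint of the inner interval collapses to $s^{*}\tfrac{\log(k-1)-\log(r(k-1)/k)}{\log k}=s^{*}\tfrac{\log k-\log r}{\log k}$, which is exactly the left endpoint of the outer interval. The sandwich therefore pinches and yields $\spec(F_\sigma)=[\,s^{*}(1-\tfrac{\log r}{\log k}),\,s^{*}\,]$, as claimed.
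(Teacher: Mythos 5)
The survey does not prove Theorem~\ref{th4.1n}; it cites \cite{GLMM14} and gives only a one‑sentence intuition (maximum of the spectrum comes from the coder \emph{reinforcing} the selector's randomness, minimum from the coder \emph{cancelling} it). Your sketch tracks this intuition, and the overall architecture — transport to $\Sigma^\infty$ via Theorem~\ref{theo32}, obtain $\max D_\sigma=1$ by a reinforcing coder, bound $\min D_\sigma$ from below by a Solovay test over producible low-complexity strings and from above by a cancelling coder, and fill the interval by dilution — is the right shape, and your closing collapse of $r-1+A(k-r)$ to $r(k-1)/k$ in the uniform case is a correct and reassuring sanity check.

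The genuine gap is the upper endpoint. You assert that choosing $U$ ``random relative to $\sigma$, drawing indices with the conditional $\pi_S$-probabilities inside the offered $r$-set'' yields $\sigma*U$ Martin-L\"of random for $\pi_S$ and then apply Levin--Schnorr. But a coder in this framework is a fixed sequence over $\Gamma$, not an adaptive strategy: a sequence $U$ cannot ``draw inside the offered set'' on its own, so you would first have to precompose with a $\sigma$-computable reindexing and then argue that the composed output is $\pi_S$-random even though it lies in the $\sigma$-computable, $\pi_S$-null subtree that $\sigma$ determines. Showing that the selector's contribution along the realized path plus the coder's contribution actually add up to $l_S$ — with no ``leakage'' between the two sources of randomness — is a nontrivial joint-randomness argument and plausibly the heaviest part of \cite{GLMM14}; one sentence of your plan is carrying all of it. Secondarily, the Solovay test for the outer inclusion needs the union bound to be done level-by-level in $l_S$-value rather than uniformly in string length, since the complexity threshold $\K(w)<\alpha\,l_S(w)$ moves with $w$; you gesture at this but do not carry out the accounting, which is where the constants $a$ and $A$ enter and why the inner and outer endpoints differ in the non-uniform case.
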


Many challenging open questions remain concerning the analysis of
the dimension of points in more general versions of random fractals,
both by completing the results in \cite{GLMM14}\ to random selectors
for different probability measures and by considering
generalizations such as self-affine fractals and fractals with
randomly chosen contraction ratios.

   \subsection{Beyond Euclidean Spaces}

While Euclidean space has a very well-behaved metric based on a
Borel measure $\mu$, where for instance $s$-Hausdorff measure
coincides with $\mu$ for $s=1$, this is not the case for other
metric spaces. Since both Hausdorff and packing dimension can be
defined in any metric space, the second author has considered in
\cite{edgms}\ the extension of algorithmic dimension to a large
class of separable metric spaces, the class of spaces with a
computable nice cover. This extension includes an algorithmic
information characterization of constructive dimension, based on the
concept of Kolmogorov complexity of a point at a certain precision,
which is an extension of the concept presented in section
\ref{sec:2}\ for Euclidean space.

   \subsection{Beyond Computability}

Resource-bounded dimension, introduced  in \cite{DCC}\ by the first
author, has been a very fruitful tool in the quantitative study of
complexity classes, see \cite{FGCC,Lut05}\ for the main results.
Many of the main complexity classes have a suitable resource bound
for which the corresponding dimension is adequate for the class,
since it has maximal value for the whole class.

The development of resource-bounded dimension was based on a
characterization of Hausdorff dimension in terms of betting
strategies, imposing different complexity constraints on those
strategies to obtain the different resource-bounded dimensions.
Contrary to the case of computability constraints introduced in
section \ref{sec:3}, many important resource-bounds such as
polynomial time dimension do not have corresponding algorithmic
information characterizations (although more elaborated compression
algorithms characterizations have been obtained in
\cite{DIC,Hitchcock:DERC}).

In fact the study of gambling under very low complexity constraints,
finite-state computability, has been studied at least since the
seventies \cite{SchSti72,Fede91}\ and the corresponding effective
dimension, finite-state dimension, was studied by Dai, Lathrop, and
the two authors \cite{FSD}\ where finite-state dimension is
characterized in terms of finite-state compression.

 For the definition of resource-bounded dimension, a class of languages
$\mathcal{C}$ is represented via characteristic sequences as a set
of infinite binary sequences $\mathcal{C}\subseteq\{0,1\}^{\infty}$.
Using binary representation each language can be seen as a real
number in $[0,1]$ and resource-bounded dimension as a tool in
Euclidean space. Resource-bounded dimension has a natural extension
$\sinf$ for other finite alphabets $\Sigma$ and the first question
is therefore whether the choice of alphabet is relevant for the
study of Euclidean space. A satisfactory answer is given in
\cite{HM13}\ where it is proven that polynomial-time dimension is
invariant under base change, that is, for every base $b$  and set
$X\subseteq\R$ the set of base-$b$-representations of all elements
in $X$ has a polynomial-time dimension independent of $b$.

Finite-state dimension is not closed under base change, but its
connections with number theory are deep. Borel introduced normal
numbers in \cite{Bore09}, defining a real number $\alpha$ to be
Borel normal in base $b$ if  for every finite sequence $w$ of
base-$b$ digits, the asymptotic, empirical frequency of $w$ in the
base-$b$ expansion of $\alpha$ is $b^{-|w|}$. There is a tight
relationship of Borel-normality and finite-state dimension, since a
real number is normal in base $b$ iff its base $b$ representation is
a finite-state dimension 1 sequence \cite{SchSti72,BoHiVi05}. It is
known \cite{Cass59, Schm60}\ that there are numbers that are normal
in one base but not in another, so the nonclosure under base change
property of finite-dimension is a corollary of these results.
Absolutely normal numbers are real numbers that are normal in every
base, so they correspond to real numbers whose base-$b$
representation has finite-dimension 1 for every base $b$, this
characterization has been used in very effective constructions of
absolutely normal numbers \cite{BHS13,cannnlt}. It is natural to ask
whether there are real numbers for which the finite-state dimension
of its base-$b$ representations is strictly between 0 and 1 and does
not depend on the base $b$.

\subsection{Beyond Fractals}

This chapter's primary focus is the role of algorithmic fractal
dimensions in fractal geometry. However, it should be noted that
fractal geometry is only a part of geometric measure theory, and
that algorithmic methods may shed light on many other aspects of
geometric measure theory.

Many questions in geometric measure theory involve rectifiability
\cite{Fede69}. The simplest case of this classical notion is the
rectifiability of curves. A {\sl curve\/} in $\R^n$ is a continuous
function $f:  [0,1]\to \R^n$. The {\sl length\/} of a curve $f$ is
\begin{equation*}\length(f)=\sup_{\vec{a}}\sum_{i=0}^{k-1}|f(a_{i+1})-f(a_i)|,\end{equation*}
where the supremum is taken over all {\sl dissections\/} $\vec{a}$
of $[0,1]$, i.e., all $\vec{a}=(a_0, \ldots, a_k)$ with $0=a_0<a_1<
\ldots < a_k=1$. Note that $\length(f)$ is the length of the actual
path traced by $f$, which may ``retrace'' parts of its range. (In
fact, there are computable curves $f$ for which every computable
curve $g$ with the same range {\sl must\/} do unboundedly many such
retracings \cite{GLM11}.) A curve $f$ is {\sl rectifiable\/} if
$\length(f)<\infty$.

Gu and the authors \cite{GLM06}\ posed the fanciful question, ``
Where can an infinitely small nanobot go?'' Intuitively, the nanobot
is the size of a Euclidean point, and its motion is algorithmic, so
its trajectory must be a curve $f: [0,1]\to \R^n$ that is computable
in the sense of computable analysis \cite{Wei00}. Moreover, the
nanobot's trajectory $f$ should be rectifiable. This last
assumption, aside from being intuitively reasonable, prevents the
question from being trivialized by space-filling curves
\cite{Saga94, CDM12}.

The above considerations translate our fanciful question about a
nanobot to the following mathematical question. {\sl Which points in
$\R^n$ ($n\ge 2$) lie on  rectifiable computable curves?} In honor
of an anonymous, poetic reviewer who called the set of all such
points ``the beaten path'', we write $\mathrm{BP}^{(n)}$ for the set
of all points in $\R^n$ that lie on rectifiable computable curves.
The objective of \cite{GLM06}\ was to characterize the elements of
$\mathrm{BP}^{(n)}$.

A few preliminary observations on the set $\mathrm{BP}^{(n)}$ are in
order here. Every computable point in $\R^n$ clearly lies in
$\mathrm{BP}^{(n)}$, so $\mathrm{BP}^{(n)}$ is a dense subset of
$\R^n$. It is also easy to see that $\mathrm{BP}^{(n)}$ is
path-connected. On the other hand, the ranges of rectifiable curves
have Hausdorff dimension 1 \cite{Falc14}\ and there are only
countably many computable curves, so $\mathrm{BP}^{(n)}$ is a
countable union of sets of Hausdorff dimension 1 and hence has
Hausdorff dimension 1. Since $n\ge 2$, this implies that most points
in $\R^n$ do not lie on the beaten path $\mathrm{BP}^{(n)}$.

For each rectifiable computable curve $f$, the set range $f$ is a
computably closed, i.e., $\Pi^0_1$, subset of $\Rn$. By the
preceding paragraph and Hitchcock's correspondence principle
(\ref{equ311}), it follows that $\cdim(\mathrm{BP}^{(n)})=1$, whence
every point $x\in \mathrm{BP}^{(n)}$ satisfies $\dim(x)\le 1$. This
is a necessary, but not sufficient condition for membership in
$\mathrm{BP}^{(n)}$, because the complement of $\mathrm{BP}^{(n)}$
contains points of arbitrarily low dimension \cite{GLM06}.
Characterizing membership in $\mathrm{BP}^{(n)}$ thus requires
algorithmic methods to be extended beyond fractal dimensions.

The ``analyst's traveling salesman theorem'' of geometric measure
theory characterizes those subsets of Euclidean space that are
contained in rectifiable curves. This celebrated theorem was proven
for the plane by Jones \cite{Jone90}\ and extended to
high-dimensional Euclidean spaces by Okikiolu \cite{Okik91}. The
main contribution of \cite{GLM06}\ is to formulate the notion of a
{\sl computable Jones constriction,} an algorithmic version of the
infinitary data structure implicit in the analyst's traveling
salesman theorem, and to prove the {\sl computable analyst's
traveling salesman theorem,} which says that a point in Euclidean
space lies on the beaten path $\mathrm{BP}^{(n)}$ if and only if it
is ``permitted'' by some computable Jones constriction.

The computable analysis of points in rectifiable curves has
continued in at least two different directions. In one direction,
Rettinger and Zheng have shown (answering a question in
\cite{GLM06}) that there are points in $\mathrm{BP}^{(n)}$ that do
not lie on any computable curve of computable length \cite{RZ09}\
and extended this to obtain a four-level hierarchy of simple
computable planar curves that are {\sl point-separable\/} in the
sense that the sets of points lying on curves of the four types are
distinct \cite{ZR12}. In another direction, McNicholl \cite{McN13a}\
proved that there is a point on a computable arc (a set computably
homeomorphic to $[0,1]$) that does not lie in $\mathrm{BP}^{(n)}$.
In the same paper, McNicholl used a beautiful geometric priority
argument to prove that there is a point on a computable curve of
computable length that does not lie on any computable arc.

It is apparent from the above results that algorithmic methods will
have  a great deal more to say about rectifiability and other
aspects of geometric measure theory.

\begin{acknowledgement}
The first author's work was supported in part by National Science
Foundation research grants 1247051 and 1545028 and is based in part
on lectures that he gave at the New Zealand Mathematical Research
Institute Summer School on Mathematical Logic and Computability,
January 9-14, 2017.  He thanks Neil Lutz for useful discussions and
Don Stull and Andrei Migunov for helpful comments on the exposition.
The second author's research was supported in part by Spanish
Government
   MEC Grants TIN2011-27479-C04-01 and TIN2016-80347-R and  was
  done and  based in part on lectures given during a research stay at the Institute for Mathematical
  Sciences at the National University of Singapore for the Program
  on Aspects of Computation, August 2017. We thank two anonymous
  reviewers for useful suggestions on the exposition.
\end{acknowledgement}




\bibliographystyle{spmpsci_cca}
\bibliography{Bib_LM,cca}
\end{document}